\newcommand{\del}{\Delta}
\newcommand{\delhat}{\widehat{\Delta}}
\newcommand{\vx}{\mathbf{x}}
\newcommand{\vy}{\mathbf{y}}
\newcommand{\vz}{\mathbf{z}}
\newcommand{\vA}{\mathbf{A}}
\newcommand{\vB}{\mathbf{B}}
\newcommand{\MSS}{\circlearrowright\left(\vz,\lambda\right)}
\newcommand{\E}{\mathbb{E}}
\newcommand{\eq}[1]{Equation~\ref{eq:#1}}
\newcommand{\eqs}[2]{Eqs.~\ref{eq:#1}--\ref{eq:#2}}
\newcommand{\fig}[1]{Figure~\ref{fig:#1}}
\theoremstyle{definition}
\newtheorem{corollary}{Corollary}
\newtheorem{lemma}{Lemma}
\newtheorem{proposition}{Proposition}
\newtheorem{remark}{Remark}
\newtheorem*{fixation}{Fixation Axiom}
\title{\begin{center}
		\bfseries\singlespacing Strategy evolution on dynamic networks
\end{center}}
\author{\parbox[c]{16cm}{\onehalfspacing \normalsize \centering ~\\[-0.4cm]
		Qi Su$^{1,2,3,8,}\footnote{Correspondence: qisu@sjtu.edu.cn}$\qquad Alex McAvoy$^{4,5,8,}\footnote{Correspondence: amcavoy@unc.edu}$\qquad Joshua B. Plotkin$^{6,7}$\\ \quad\\ \footnotesize
		$^{1}$Department of Automation, Shanghai Jiao Tong University, Shanghai 200240, China \\
		$^{2}$Key Laboratory of System Control and Information Processing, Ministry of Education of China, Shanghai 200240, China \\
		$^{3}$Shanghai Engineering Research Center of Intelligent Control and Management, Shanghai 200240, China \\
		$^{4}$School of Data Science and Society, University of North Carolina at Chapel Hill, Chapel Hill, NC 27599, USA \\
		$^{5}$Department of Mathematics, University of North Carolina at Chapel Hill, Chapel Hill, NC 27599, USA \\
		$^{6}$Department of Biology, University of Pennsylvania, Philadelphia, PA 19104, USA \\
		$^{7}$Center for Mathematical Biology, University of Pennsylvania, Philadelphia, PA 19104, USA \\[0.35cm]
		$^{8}$These authors contributed equally to this work\\[0.2cm]}
	\date{}
}
\begin{document}
	
\maketitle

\begin{abstract}
\noindent 
Models of strategy evolution on static networks help us understand how population structure can promote the spread of traits like cooperation. One key mechanism is the formation of altruistic spatial clusters, where neighbors of a cooperative individual are likely to reciprocate, which protects prosocial traits from exploitation. But most real-world interactions are ephemeral and subject to exogenous restructuring, so that social networks change over time. Strategic behavior on dynamic networks is difficult to study, and much less is known about the resulting evolutionary dynamics. Here, we provide an analytical treatment of cooperation on dynamic networks, allowing for arbitrary spatial and temporal heterogeneity. We show that transitions among a large class of network structures can favor the spread of cooperation, even if each individual social network would inhibit cooperation when static. Furthermore, we show that spatial heterogeneity tends to inhibit cooperation, whereas temporal heterogeneity tends to promote it. Dynamic networks can have profound effects on the evolution of prosocial traits, even when individuals have no agency over network structures.
\end{abstract}

\section{Introduction}
The geographic locations of individuals, together with their social or physical connections, constrain interactions and shape behavioral evolution in a population. A network is a useful model of a population's structure, where nodes represent individuals and edges capture interactions. How network structure affects evolutionary dynamics has been extensively investigated over the last several decades, using techniques including computer simulations, mathematical analysis, and experimental studies with human subjects. A well-known and illustrative finding \cite{2006-Ohtsuki-p502-505} is that population structure can favor cooperation provided the ratio of the benefit from cooperative behavior, $b$, to its cost, $c$, exceeds the average number of neighbors, $d$. The mechanism underlying this cooperation-promoting effect is that spatial structure enables the formation of cooperative clusters of individuals, who have high payoffs and are capable of resisting invasion by defectors.

Most existing studies are based on a static network, where the duration and intensity of interactions remain unchanged throughout the evolutionary process. In contrast, empirical networks frequently vary over time \cite{2012-Holme-p97-125}. Representative examples include communication networks involving telephone calls or emails \cite{Vazquez-2007-prl,2007-Onnela-p7332-7336}; 
networks of physical proximity, where individuals encounter different people as they move through space \cite{Isella-2011-jtb,Mastrandrea2015}; and ecological networks that change with the seasons as organisms go through different phases of their life cycles \cite{Ulanowicz-2004-cbc,Bascompte-2007-AREES,Miele-2017-psos}. Temporal features can even reverse the evolutionary outcomes. For example, whether an idea or information diffuses throughout a society depends not only on the structure of the network guiding interactions but also on the timing of those interactions, as the coexistence of individuals with different active timing maximizes diffusion \cite{Akbarpour-2018-pnas}. In the context of epidemics, high concurrency (the number of neighbors of a node) leads to a lower epidemic threshold under susceptible-infected-susceptible dynamics, while low concurrency can suppress epidemics \cite{Onaga-2017-prl}.

Despite the attention that other dynamical processes have received on time-varying networks, the evolution of cooperation in this setting remains under-studied. One reason is that it seems unlikely, a priori, that dynamic changes in population structure will ever benefit cooperation. Since cooperators spread via clusters on a static network, we might expect that exogenous network transitions will tend to break up these clusters, leading to diminished reciprocity and exploitation by defectors \cite{2009-Kun-Biosystems,Fulker2021}. For example, switching between two random networks tends to impede cooperation relative to each separate network \cite{2009-Kun-Biosystems}. Another impediment to undertaking research in this area is the lack of mathematical tools for analyzing strategic interactions on dynamic networks. In static networks, mathematical approaches provide general conditions for how structure affects evolutionary dynamics \cite{taylor:Nature:2007,Allen2019}. They also allow for extensive, efficient numerical explorations into example networks, both artificial and empirical \cite{2017-Allen-p227-230}. Whether these approaches can be extended to dynamic networks remains unknown.

Endogenous network transitions often produce predictable results for the evolution of cooperation \cite{2006-Pacheco-p258103,2006-Santos-p1284-1291,2008-pacheco-jtb,2009-VanSegbroeck-p-,2010-Wu-p11187-11187,2011-Fehl-p546-551,2011-Rand-p19193-19198,Bravo2012,2012-Wang-p14363-14368,Bednarik2014,2014-Cardillo-p52825-52825,Harrell2018,2018-Akcay-p2692-2692}. For example, if cooperators can selectively seek out new connections with other cooperators (``cooperation begets friends'') and sever ties with defectors, then it is not surprising to find that these endogenous network changes favor the spread cooperation. But it is much less clear how exogenous transitions in network structure will affect the evolution of cooperation, and so this is the main focus of our study. There is also substantial evidence for the prevalence of exogenous network transitions in nature, ranging from weather fluctuations to human-induced changes to ecosystems \cite{wong:BE:2014}. The scope of models with dynamic networks is broad and can include environmental feedback and ecosystem engineering \cite{tilman:NC:2020}. And even when an organism has some agency over the structure of their environment, the behavioral trait of interest might be unrelated to these changes (e.g. movement between cities need not be tied to altruistic tendencies). Finally, exogenous network transitions that are not dependent on individual behavior provide the most natural point of comparison to static structures.
 
In this paper, we study the evolution of strategic behavior in a population whose structure of social interactions changes over time. At any point in time, the population structure is described by a network whose nodes represent individuals and edges represent interactions. Individuals may change their strategies over time, imitating neighbors who have higher payoffs; and the network of interactions itself may also change over time. The interaction network changes at random times, unrelated to the current composition of strategies in the population. We derive general mathematical results for when cooperative behavior is favored, which apply to any stochastic transition pattern among any number of networks, each with an arbitrary structure. Surprisingly, we find that in a large class of networks with community structure, stochastic transitions among networks can strongly promote cooperation, even though they tend to disrupt cooperative clusters in each network. In fact, even if each individual static network would disfavor cooperation, transitions among them can rescue cooperation. We conclude by analyzing spatial and temporal burstiness, which we show have opposite effects on the evolution of cooperation.

\section{Results}
\subsection{Model overview}
Our model consists of a finite population of size $N$, with individuals engaged in pairwise social interactions. The structure of the population varies over time, and at each discrete time it is represented by one of $L$ weighted networks, each with $N$ nodes. For network $\beta\in\left\{1,\dots ,L\right\}$, we let $w_{ij}^{\left[\beta\right]}$ denote the weight of the edge between nodes $i$ and $j$. We assume that all networks are undirected, meaning $w_{ij}^{\left[\beta\right]}=w_{ji}^{\left[\beta\right]}$ for all $i,j\in\left\{1,\dots ,N\right\}$ and $\beta\in\left\{1,\dots ,L\right\}$.
Supplementary Table 1 summarizes all quantities and associated notation used in our model formulation and analysis.

Each individual in the population can adopt one of two types, or strategies: ``cooperator'' ($C$) or ``defector'' ($D$). Individuals interact in pairwise donation games, with cooperators paying a cost $c$ to generate benefit $b$ for their co-player. Defectors pay no costs and generate no benefits. In each time step, everyone plays a donation game with each of their neighbors in the current network, $\beta$. We denote the state of the population by $\mathbf{x}$, where $x_{i}\in\left\{0,1\right\}$ indicates the type of individual $i$, with $0$ and $1$ representing types $D$ and $C$, respectively. The accumulated payoff to individual $i$ in network $\beta$ is then
\begin{equation}
u_{i}\left(\mathbf{x},\beta\right) = \sum_{j=1}^{N} w_{ij}^{\left[\beta\right]} \left(-cx_{i}+bx_{j}\right) . \label{eq:ui_db}
\end{equation}
In other words, individual $i$ receives a benefit $w_{ij}^{\left[\beta\right]}b$ from of each of its neighbors $j$ who are cooperators ($x_{j}=1$), and $i$ pays a cost $w_{ij}^{\left[\beta\right]}c$ to each $j$ if $i$ is itself a cooperator ($x_{i}=1$). An individual's accumulated payoff in network $\beta$ is transformed into fecundity, which represents $i$'s propensity to reproduce or, equivalently, to be imitated by another individual. The fecundity is given by $F_{i}\left(\mathbf{x},\beta\right) =1+\delta u_{i}\left(\mathbf{x},\beta\right)$, where $\delta$ is called the selection intensity, which we assume to be small ($\delta \ll 1$). This assumption, called ``weak selection," is common in the literature and it aims to capture scenarios in which the social trait ($C$ or $D$) has a small effect on reproductive success.

After all pairwise games are played in network $\beta$ and individuals accumulate payoffs, a random individual $i$ is selected uniformly from the population to update his or her strategy. This individual then imitates the type of a neighbor, $j$, with probability proportional to $j$'s fecundity. In other words, in network $\beta$, the probability that $i$ copies $j$'s type is
\begin{equation}
e_{ji}\left(\mathbf{x},\beta\right) = \frac{1}{N} \frac{F_{j}\left(\mathbf{x},\beta\right) w_{ji}^{\left[\beta\right]}}{\sum_{k=1}^{N}F_{k}\left(\mathbf{x},\beta\right) w_{ki}^{\left[\beta\right]}} . \label{eq:eji_db}
\end{equation}
Here, the factor of $1/N$ represents the probability that $i$ is chosen to update in the first place.

After each strategic update, the population structure itself then undergoes a transition step. The probability of moving from network $\beta$ to network $\gamma$ is independent of the strategic composition of the population, and it depends only on the current network state, $\beta$. The stochastic process governing these transitions is described by an $L\times L$ matrix $Q=\left(q_{\beta\gamma}\right)$, where $q_{\beta\gamma}$ is the probability of transitioning from network $\beta$ to network $\gamma$. Note that there may be (and we often assume) a positive chance that the network will remain unchanged at the transition stage, e.g. $q_{\beta \beta}>0$. The pairwise social interactions, strategic update, and network transition, which comprise a single time step, are depicted in Fig.~\ref{fig:model}.

\begin{figure}[ht]
	\centering
	\includegraphics[width=0.9\textwidth]{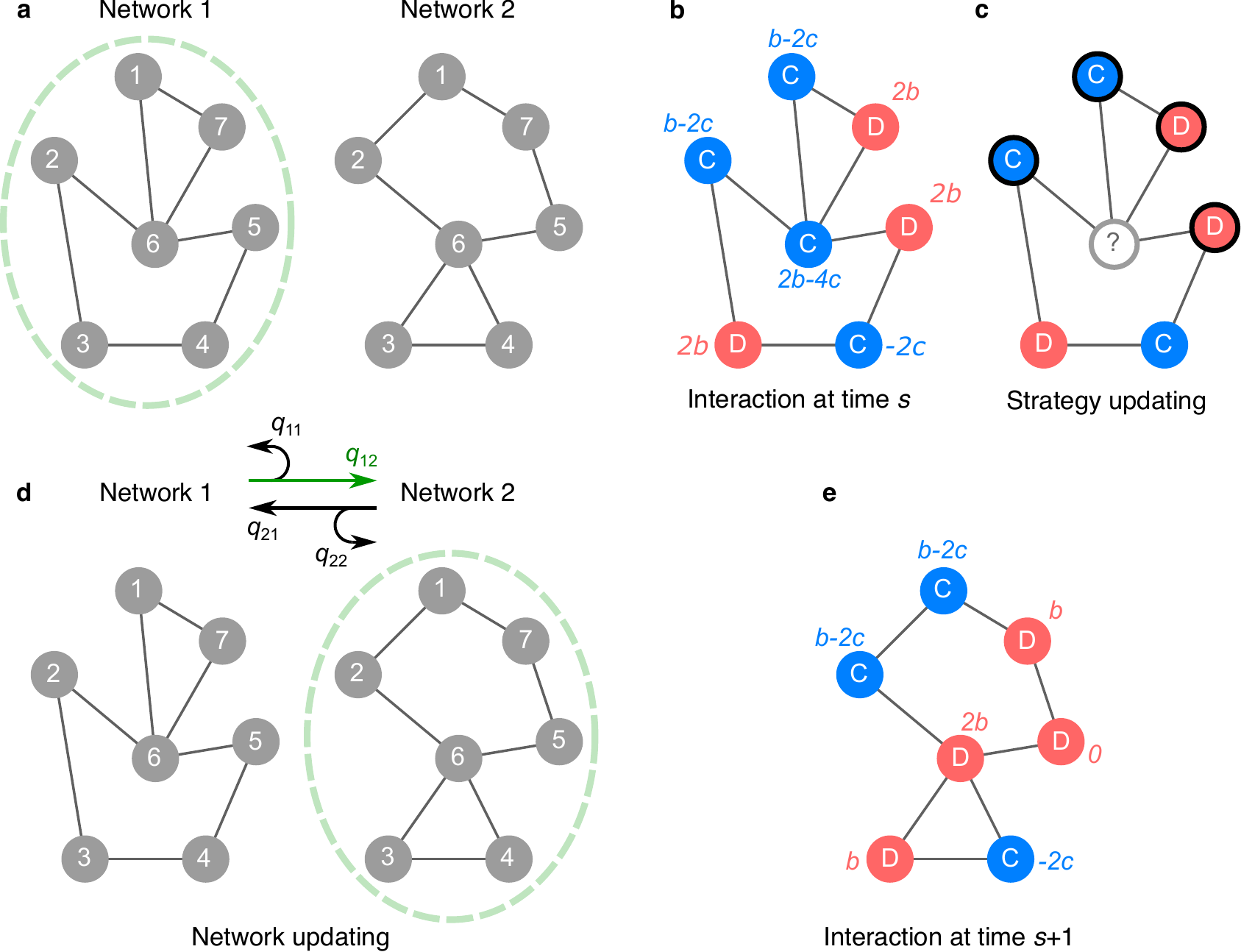}
	\caption {\textbf{Evolutionary games on dynamic networks.} \textbf{a}, The population structure at any time is described by a network, which may change from one time point to the next. (The figure illustrates an example with two possible networks.) \textbf{b}, Each individual (node) in the population adopts the strategy cooperate ($C$) or defect ($D$) in games played with neighbors. Each individual $i$ accumulates a total payoff $u_{i}$ across pairwise interactions with neighbors, which determines their fecundity $F_{i}=1+\delta u_{i}$. \textbf{c}, An individual (marked by ``?'') is selected uniformly at random to update its strategy, and all neighboring individuals, indicated by black circles, compete to be imitated by the focal node, with probability proportional to reproductive rates. \textbf{d}, After an individual updates its strategy, the population structure itself either changes (from network $1$ to network $2$ with probability $q_{12}$, or from network $2$ to network $1$ with probability $q_{21}$) or remains the same. \textbf{e}, Social interactions and strategy updates repeat on the population structure at the next time step, $s+1$.\label{fig:model}}
\end{figure}

\subsection{Selection condition for the evolution of cooperation}
Without mutation, the population must eventually reach a monomorphic strategic state in which all individuals have the same type, either cooperate or defect. The duration that the population spends in each network is proportional to the corresponding value in stationary distribution $\upsilon$, which is determined by the network transition matrix $Q$ (see Methods). We assume that a mutant appears in network $\beta$ with probability $\upsilon\left(\beta\right)$, and it is located at a node chosen uniformly at random. We let $\rho_{C}$ denote the probability that a single cooperator mutant eventually takes over a resident population of defectors. Likewise, we let $\rho_{D}$ be the probability that a single defector mutant takes over a resident population of cooperators. We use the condition $\rho_{C}>\rho_{D}$ to measure whether selection favors cooperation relative to defection \cite{2004-Nowak-p646-650}. 

We first derive a general result applicable to almost any transition pattern, $Q$, among any finite number of networks, each with an arbitrary spatial structure. This result combines several different quantities describing the dynamics under neutral drift ($\delta =0$), together with the payoffs for the game \cite{Allen2019,McAvoy2021}.

Let $p_{ij}^{\left[\beta\right]}\coloneqq w_{ij}^{\left[\beta\right]}/\sum_{k=1}^{N}w_{ik}^{\left[\beta\right]}$ be the one-step random-walk probability of moving from $i$ to $j$ on network $\beta$. This quantity can be interpreted as the probability that $i$ imitates the strategy of $j$ under neutral drift, conditioned on $i$ being chosen for an update. In other words, $p$ can be seen as defining an ancestral process, tracking replacement backwards in time under neutral drift.

The most fundamental neutral quantity is the reproductive value of individual $i$ in network $\beta$, which can be interpreted as the probability that a mutant introduced at node $i$ in network $\beta$ generates a lineage that eventually takes over the population. This quantity, denoted by $\pi_{i}^{\left[\beta\right]}$ is independent of the payoffs and thus independent of the particular mutant that arises in the population. The version of reproductive value that we use is a generalization of Fisher's classical notion \citep{fisher:OUP:1930,taylor:AN:1990} that also takes into account environmental changes. It can be calculated using \eq{reproductive_value_db} in Methods.

Another neutral quantity we use is related to coalescence times. Under neutral drift, we can look backward in time and ask how long it takes, on average, before two or more lineages meet at a common ancestor. Starting in network $\beta$, let $T^{\left[\beta\right]}$ be the expected number of steps to the most recent common ancestor of the entire population. If $\tau_{ij}^{\left[\beta\right]}$ is the expected time to the most recent common ancestor of $i$ and $j$, then the mean amount of time that $i$ and $j$ are identical by descent is $T^{\left[\beta\right]}-\tau_{ij}^{\left[\beta\right]}$. The pairwise times to a common ancestor, $\tau$, can be calculated using \eq{tau_db} in Methods.

In terms of the neutral quantities $\pi$, $\tau$, and $T$, the general condition for cooperation to be favored over defection under weak selection is given by
\begin{equation}
\begin{split}
\sum_{i,j=1}^{N} \sum_{\beta =1}^{L} &
\upsilon\left(\beta\right) \left(\sum_{\gamma =1}^{L}
q_{\beta\gamma}\pi_i^{\left[\gamma\right]}\right) p_{ij}^{\left[\beta\right]}\sum_{\ell =1}^{N}
\left(\substack{-\left(T^{\left[\beta\right]}-\tau_{jj}^{\left[\beta\right]}\right) w_{j\ell}^{\left[\beta\right]}c \\ +\left(T^{\left[\beta\right]}-\tau_{j\ell}^{\left[\beta\right]}\right) w_{\ell j}^{\left[\beta\right]}b}\right)  \\
&> \sum_{i,j,k=1}^{N} \sum_{\beta =1}^{L}
\upsilon\left(\beta\right) \left(\sum_{\gamma =1}^{L}
q_{\beta\gamma}\pi_i^{\left[\gamma\right]}\right) p_{ij}^{\left[\beta\right]}p_{ik}^{\left[\beta\right]}
\sum_{\ell =1}^{N}
\left(\substack{-\left(T^{\left[\beta\right]}-\tau_{jk}^{\left[\beta\right]}\right) w_{k\ell}^{\left[\beta\right]}c \\ +\left(T^{\left[\beta\right]}-\tau_{j\ell}^{\left[\beta\right]}\right) w_{\ell k}^{\left[\beta\right]} b}\right) . \label{eq:selection_condition}
\end{split}
\end{equation}
An outline of how to derive this condition is provided in Methods, while complete mathematical details are presented 
in Supplementary Sections 1-5.
Broadly speaking, what \eq{selection_condition} says is that an individual $i$ is chosen, a cooperator is placed at a neighbor $j$ of $i$, and another neighbor $k$ of $i$ is chosen to compare its (weighted) payoff with that of the cooperator. If $j$'s weighted payoff exceeds that of $k$, then selection favors the evolution of cooperation.

The condition above reflects a similar intuition behind the corresponding condition for static networks (see Allen et al.~\cite{2017-Allen-p227-230} or Fig.~1 of McAvoy \& Wakeley~\cite{mcavoy:PNAS:2022}), which corresponds to $L=1$ and is given by 
\begin{equation}
\sum_{i,j=1}^{N} 
\pi_i^{\left[1\right]}p_{ij}^{\left[1\right]}\sum_{\ell =1}^{N}
\left(\substack{-\left(T^{\left[1\right]}-\tau_{jj}^{\left[1\right]}\right) w_{j\ell}^{\left[1\right]}c \\ +\left(T^{\left[1\right]}-\tau_{j\ell}^{\left[1\right]}\right) w_{\ell j}^{\left[1\right]}b}\right)
>
\sum_{i,j,k=1}^{N} \pi_i^{\left[1\right]}
p_{ij}^{\left[1\right]}p_{ik}^{\left[1\right]}
\sum_{\ell =1}^{N}
\left(\substack{-\left(T^{\left[1\right]}-\tau_{jk}^{\left[1\right]}\right) w_{k\ell}^{\left[1\right]}c \\ +\left(T^{\left[1\right]}-\tau_{j\ell}^{\left[1\right]}\right) w_{\ell k}^{\left[1\right]} b}\right). 
\end{equation}
Compared to a static network, there are a few notable effects of network transitions in \eq{selection_condition}. The first effect is that the network $\beta$ is chosen with probability $\upsilon\left(\beta\right)$, where $\upsilon$ is the stationary distribution of the structure-transition chain defined by $Q$. Moreover, whereas individual $i$ is chosen with probability based on reproductive value $\pi_{i}$ on a static network, here $i$ is chosen based on reproductive value in the \emph{next} network following imitation, $\sum_{\gamma =1}^{L}q_{\beta\gamma}\pi_{i}^{\left[\gamma\right]}$. The reason for this is natural, because once an individual replaces $i$ in network $\beta$, the network immediately transitions to network $\gamma$, and so the resulting reproductive value of $i$ must be understood within the context of $\gamma$. Once $\beta$ and $i$ are chosen, the probabilities of choosing neighbors $j$ and $k$ are $p_{ij}^{\left[\beta\right]}$ and $p_{ik}^{\left[\beta\right]}$, respectively. Moreover, if $j$ is a cooperator, then individual $k$ is also a cooperator for $T^{\left[\beta\right]}-\tau_{jk}^{\left[\beta\right]}$ time steps, and during each such step $k$ pays $cw_{k\ell}^{\left[\beta\right]}$ to provide $\ell$ with a benefit of $bw_{k\ell}^{\left[\beta\right]}$. This property accounts for the weighting of benefits and costs in \eq{selection_condition}. Note that the term $T^{\left[\beta\right]}$ cancels out in \eq{selection_condition}, and so although this quantity is helpful for gathering intuition, it is not strictly needed to evaluate whether cooperators are favored by selection.

Given the vast number of networks with $N$ nodes, as well as the vast space of possible transitions among them, we focus most of our analysis on transitions between a pair of networks (i.e. $L=2$). For a given network transition matrix $Q$, the value $1/q_{12}$ (resp. $1/q_{21}$) gives the expected time during which the population remains in network $1$ (resp. network $2$) before transitioning to network $2$ (resp. network $1$). We denote $1/q_{12}$ and $1/q_{21}$ by $t_{1}N$ and $t_{2}N$, respectively, so that $t_{1}$ and $t_{2}$ correspond to the expected number of times each individual updates prior to a transition to a different network. $t_{1}$ and $t_{2}$ are in the units of generations. Small values of $t_{1}$ and $t_{2}$ correspond to frequent changes in the population structure. Sufficiently large values of $t_{1}$ and $t_{2}$ indicate that the population structure is nearly fixed, so that the population will reach an absorbing strategic state (all $C$ or all $D$) before the network transitions to a different state. The regime $t_{1}=1$ (resp. $t_{2}=1$) means that, on average, each individual updates their strategy once in network $1$ (resp. network $2$) before the network structure changes.
In the following, we focus on cases with $t_1=t_2=t$.

\subsection{Dynamic networks with dense and sparse communities}
We begin by studying dynamic transitions between a pair of networks where each network is comprised of two communities. One community is a star graph, which is sparse, and the other community is a complete graph, which is dense. In each network, the two communities are connected by a single edge. When the population transitions from one network to another, the star community becomes the complete community and \emph{vice versa} (see \fig{star-complete}\textbf{a}). This kind of dynamic network models a situation in which a portion of the population is densely connected while the remainder of the population is connected to only a single node; and which portion is dense versus sparse changes over time, as the state transitions between the two networks.

\begin{figure}
\centering
\includegraphics[width=1\textwidth]{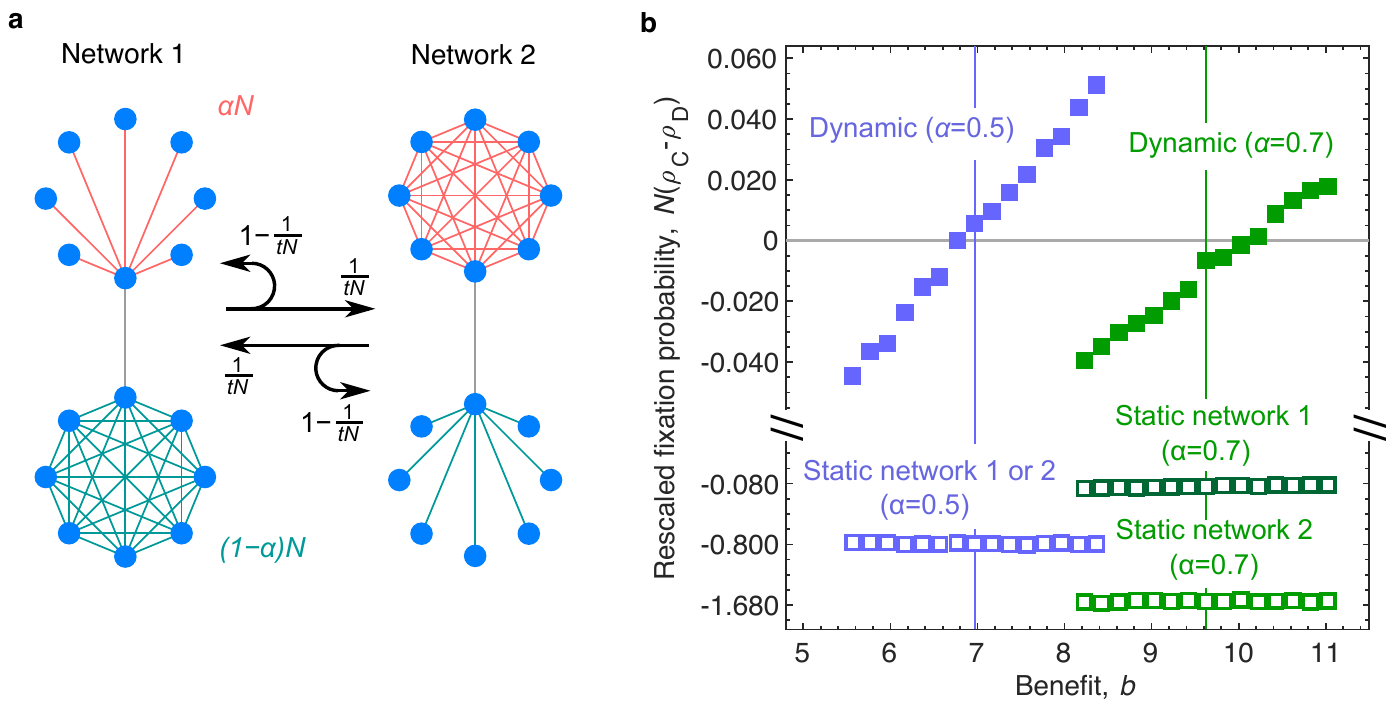}
\caption{\textbf{Transitions between networks that contain dense and sparse communities.} We consider dynamic transitions between two networks, each of which is comprised of two communities containing $aN$ and $\left(1-a\right) N$ nodes, respectively. \textbf{a}, Each network has a star graph comprising one community and a complete graph comprising the other community, with a single edge connecting the two communities. When network $1$ transitions to network $2$, the star community becomes the complete community and \emph{vice versa}. \textbf{b}, The fixation probability of cooperation versus defection, $\rho_{C}-\rho_{D}$, as a function of the benefit, $b$, in the donation game. Selection favors cooperation over defection if $\rho_{C}-\rho_{D}$ exceeds the horizontal line, i.e., $\rho_{C}>\rho_{D}$. Dots indicate Monte Carlo simulations on dynamic networks (solid dots) and on a static network (open dots). The vertical lines correspond to analytical calculations of the critical benefit-to-cost ratio $\left(b/c\right)^{\ast}$ on dynamic networks, above which cooperation is favored. Cooperation is always disfavored in both static network $1$ and static network $2$ (separately), but dynamic transitions between these networks can favor cooperation. Here, we show two examples with different community sizes, $a=0.5$ (blue) and $a=0.7$ (green). The beneficial effect of structure transitions is strongest when communities have equal size ($a=0.5$; see Supplementary Figure 1). Parameter values: $N=40$, $t=1$, and $c=1.0$. Fixation probabilities are computed across an ensemble of $10^{7}$ runs with selection intensity $\delta =0.002$.\label{fig:star-complete}
}
\end{figure}

When the population evolves on either network $1$ or network $2$ alone, the fixation probability of cooperators is always lower than that of defectors, i.e. $\rho_{C}<\rho_{D}$, meaning that cooperation is disfavored by selection regardless of the benefit-to-cost ratio $b/c$ (\fig{star-complete}\textbf{b}). Nonetheless, when the population transitions dynamically between networks 1 and 2, cooperation is favored provided the benefit-to-cost ratio $b/c$ exceeds the critical value $\left(b/c\right)^{\ast}\approx 7$. As a result, we see that dynamic population structures can favor cooperation, even when all networks involved would each individually suppress cooperation were they static.

Dynamic population structure facilities cooperation across a wide range of population sizes for the pair of networks shown in \fig{star-complete}\textbf{a}. When $t_1=t_2=1$, which means that individuals each update their strategy once, on average, before the network changes, cooperation can be favored by selection regardless of network size, $N$ (\fig{size-duration}\textbf{a}). By contrast, if the network is static, then cooperation is favored only when the population size is very small ($N<17$)--and, even then, only if the benefit-to-cost ratio is large. For larger population sizes, $N\geqslant 17$, the critical benefit-to-cost ratio is negative on a static network, $\left(b/c\right)^{\ast}<0$, which means that selection actually favors the evolution of spite, a behavior in which individuals pay a cost $c$ to decrease the fitness of their opponent by $b$. For this static network, we can show that $\left(b/c\right)^{\ast}\approx -N/2$ in large populations (see Methods), whereas the critical ratio for cooperation approaches a constant positive value, $\left(b/c\right)^{\ast}=7$,  for large $N$ in these dynamic networks (see \eq{sparse_dense_dynamic} for the case of $a=1/2$ and $t=1$). And so the effects of dynamic population structures can be dramatic, capable of converting a spiteful outcome into a cooperative one, and they persist across a wide range of population sizes.

\begin{figure}
\centering
\includegraphics[width=1\textwidth]{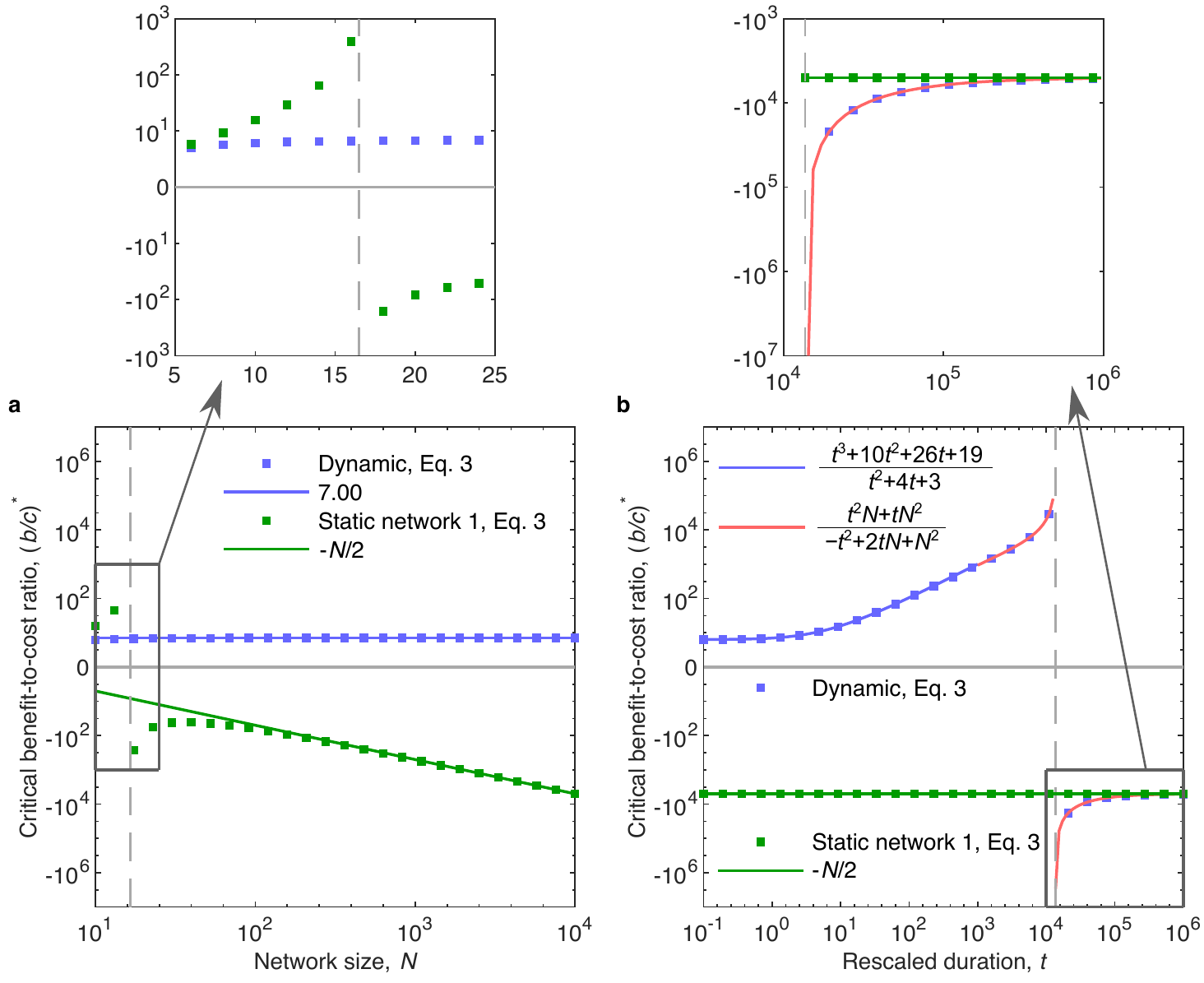}
\caption{\textbf{Dynamic structures facilitate cooperation for a broad range of population sizes and network transition rates.} We consider transitions between the two networks shown in \fig{star-complete}\textbf{a}, each composed of a sparse community and a dense community. \textbf{a}, The critical benefit-to-cost ratio required to favor cooperation as a function of population size, $N$, for community size $a=0.5$ and mean duration $t=1$. Dynamic networks can favor cooperation for any population size, $N$, provided $b/c>7$. In contrast, the corresponding static networks favor cooperation only in small populations ($N<17$), and they favor the evolution of spite ($\left(b/c\right)^{\ast}<0$) in larger populations. Dots show exact analytical computations for finite $N$ (\eq{selection_condition}), and lines show analytical approximations for large $N$. \textbf{b}, The critical benefit-to-cost ratio as a function of the mean duration between network transitions, $t$, for $a=0.5$ and $N=10{,}000$. Whereas a static network always disfavors cooperation, dynamic networks can favor cooperation provided they do not transition too slowly ($t<\left(\sqrt{2}+1\right) N$). Dots show exact analytical calculations for arbitrary $t$; the blue line shows an analytical approximation in the regime $t\ll N$; and the red line shows an analytical approximation in the regime $t=O\left(N\right)$.\label{fig:size-duration}}
\end{figure}

Dynamic networks also facilitate cooperation across a wide range of structural transition rates. For a sufficiently large population size, $N$, on a single static network of the type shown in \fig{star-complete}\textbf{a}, the critical benefit-to-cost ratio is negative ($\left(b/c\right)^{\ast}\approx -N/2$), which means that selection favors the evolution of spite. By contrast, dynamic transitions between networks 1 and 2 can favor cooperation, especially when they occur rapidly (\fig{size-duration}\textbf{b}). When the transition rate is very slow -- in particular, when $t$ exceeds $\left(\sqrt{2}+1\right) N$ -- the population stays in one network for so long that the evolutionary dynamics are similar to those of a static network, and the critical benefit-to-cost ratio becomes negative (\fig{size-duration}\textbf{b}). In the limit of the transition rate approaching zero ($t\rightarrow \infty$), the ``dynamic'' network is actually static and our dynamic calculations agree with those of a static network.

\subsection{How dynamic structures can facilitate cooperation}
To further understand how dynamic structures can favor cooperation more than their static counterparts, we inspect evolutionary trajectories on the dense-sparse graph of \fig{star-complete}\textbf{a}. When the network is static, the process is depicted in \fig{intuition}\textbf{a}. Starting from a specific configuration of cooperators in both hubs and two leaf nodes, cooperation will initially tend to spread in the star community while shrinking in the complete community. After cooperation fixes within the star community, selection strongly suppresses further spread to the complete community because the node connected to the star community is exploited by multiple defectors. If ever a defector manages to diffuse to the hub of the star community, however, defection will then rapidly spread within the star and ultimately fix in the entire network.

\begin{figure}
\centering
\includegraphics[width=0.7\textwidth]{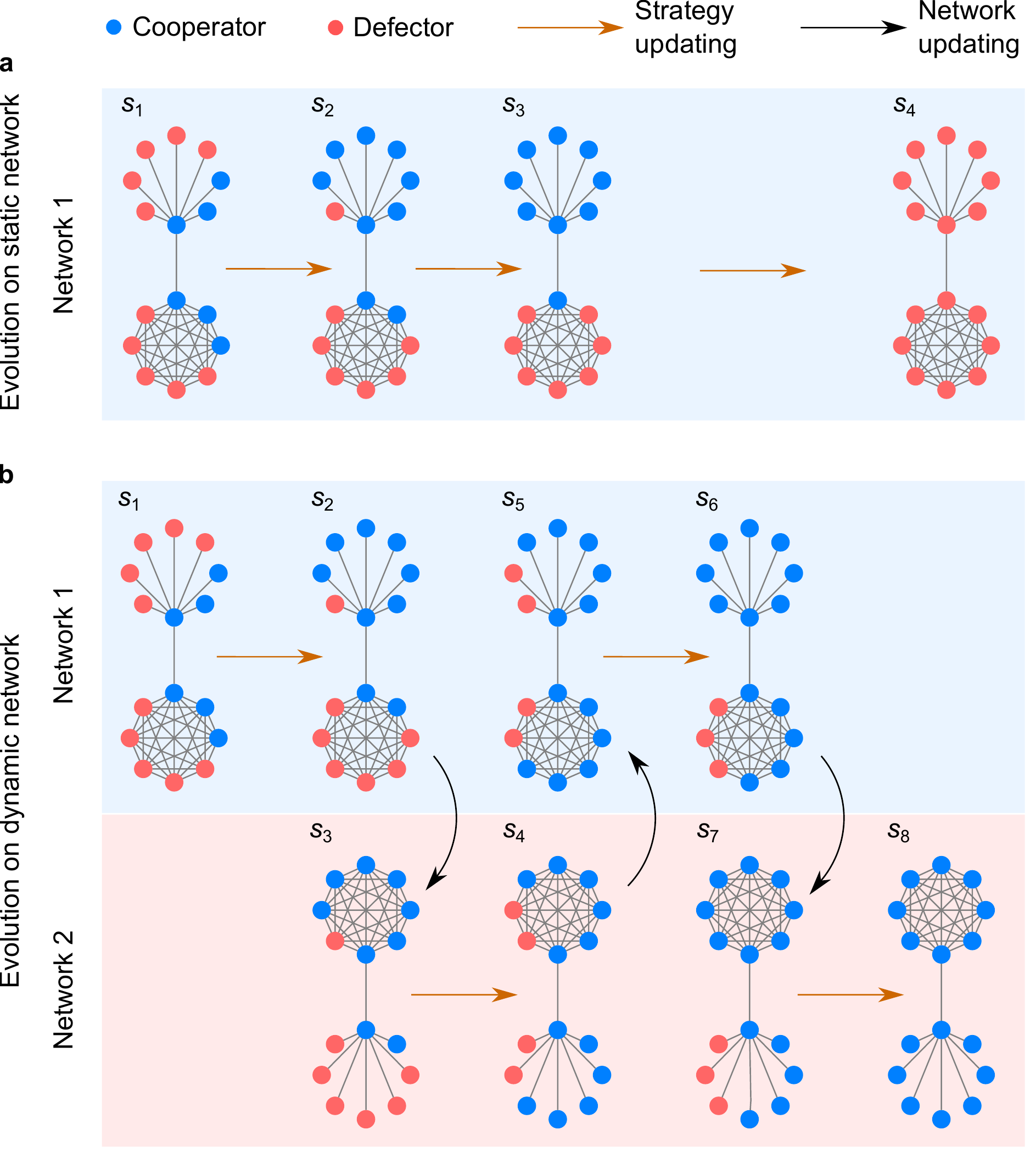}
\caption{\textbf{Intuition for how dynamic structures can facilitate cooperation.} Starting from a configuration in which the hub and two leaf nodes are cooperators (time point $s_{1}$ in \textbf{a} and \textbf{b}), we illustrate how cooperation can be favored in dynamic structures even when it is inhibited in each static structure. Initially, cooperators are expected to spread in the star community and shrink in the complete community, and the rate of spreading exceeds that of shrinking. \textbf{a}, The evolutionary process on a static network. Cooperators rapidly take over the star community and nearly die out in the complete community ($s_{1} \rightarrow s_{3}$). The system tends to stay in this state until defectors spread throughout the star community ($n_{s}$). \textbf{b}, The evolutionary process with network transitions. Initially, cooperators spread in the star community and shrink in the complete community ($s_{1}\rightarrow s_{2}$). However, when the network changes, the star community transitions to the complete community and \emph{vice versa} ($s_{2}\rightarrow s_{3}$). This transition is followed by the rapid spread of cooperators in the star community and (relatively slower) shrinking of cooperators in the complete community ($s_{3}\rightarrow s_{4}$). From $s_{1}$ to $s_{5}$, the frequency of cooperators increases in both communities so that, under dynamic structure transitions, cooperators tend to fix in both communities ($s_{8}$).\label{fig:intuition}}
\end{figure}

By contrast, if the population undergoes structural transitions between networks (e.g. $s_{2} \rightarrow s_{3}$ in \fig{intuition}\textbf{b}), the star community of network $1$ will transition into the complete community of network $2$, which promotes the exploitation of cooperators and allows defectors to spread ($s_{3} \rightarrow s_{4}$). Meanwhile, the complete community of network $1$ transitions into the star community of network $2$, which stimulates the expansion of cooperators. The rate of cooperator expansion in one community exceeds their exploitation in the other community (see the qualitative analysis in Supplementary Figure 2), so that, overall, network transitions facilitate cooperation.

\subsection{Other dynamic structures}
The examples of dynamic structure considered so far may seem highly specialized because the networks each contain two stylized communities with a single edge between them. But we find similar results on networks with many communities and with more complicated connections between them. In \fig{other_structures}\textbf{a,b}, we analyze networks comprised of multiple star and complete communities, connected by either hub nodes or by leaf nodes. In both cases, we again find that dynamic transitions between networks reduce the critical benefit-to-cost ratio for the evolution of cooperation, compared to any single static network. This effect is increasingly strong as the network size grows (see Supplementary Figure 3). For the networks in \fig{other_structures}\textbf{a} with $N=1{,}200$, for example, the critical benefit-to-cost ratio to favor cooperation is $\left(b/c\right)^{\ast}\approx 188.1$ when the network is static, which is reduced to $\left(b/c\right)^{\ast}\approx 3.49$ when the network is dynamic.

\begin{figure}
\centering
\includegraphics[width=1\textwidth]{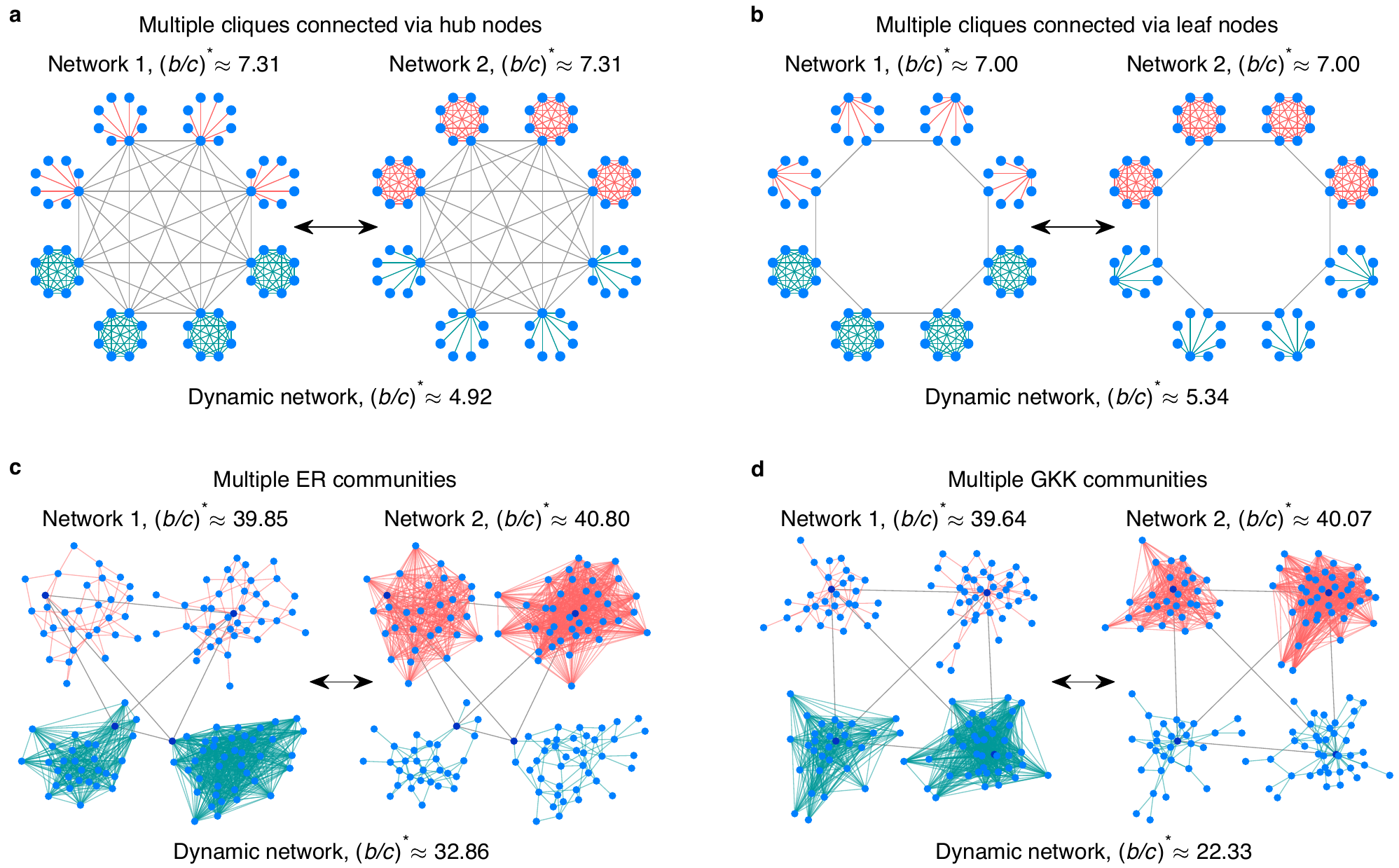}
\caption{\textbf{Evolution of cooperation on diverse dynamic structures.} \textbf{a}, Each individual network comprises four star communities and four complete communities, where each star community in one network corresponds to a complete community in the other network, and community hubs are fully connected to each other. \textbf{b} is similar to \textbf{a}, but communities are now sparsely connected via leaf nodes. Network transitions facilitate cooperation compared to a static structure. \textbf{c}, Each individual network comprises two sparse and two dense communities of Erd\"{o}s-R\'{e}nyi (ER) random networks \cite{1960-Erdoes-p17-61}, with communities connected by random nodes. \textbf{d}, Each individual network comprises two sparse and two dense communities of Goh-Kahng-Kim scale-free networks (GKK) \cite{2001-Goh-p278701-278701} with exponent $2.5$, with communities connected by nodes of the highest degree. In all these examples, network transitions reduce the benefit-to-cost ratio $\left(b/c\right)^{\ast}$ required for cooperation compared to each static network. Parameters: mean duration $t=1$ and population size $N=64$ for \textbf{a} and \textbf{b}. For panels \textbf{c} and \textbf{d}, in network $1$, the two sparse communities have $30$ nodes and average degree $4$, and the two dense communities have $40$ nodes and average degree $30$; in network $2$, the two sparse communities have $40$ nodes and average degree $4$, and the two dense communities have $30$ nodes and average degree $20$.\label{fig:other_structures}}
\end{figure}

In addition to networks comprised of star and complete communities, we also investigated networks with communities defined by various types of random graphs, such as Erd\"{o}s-R\'{e}nyi and scale-free networks. In the former case, node degrees within a community do not vary substantially, while the latter exhibits large variation in degree. 
Here we emphasize that when dynamic transitions occur between two random networks or two random scale-free networks that lack community structure, then cooperation is impeded (relative to a static random or a static scale-free network) -- for the intuitive reason that random switches disrupt cooperative clusters \cite{2006-Ohtsuki-p502-505}. 
But in the multi-community random networks, dynamic transitions between random networks can promote cooperation, compared to each static network (\fig{other_structures}\textbf{c,d}).

In all examples of dynamic networks considered thus far, transitions between networks involve dense regions of a network swapping with sparse regions. Regardless of the exact structure of the communities (star-like, complete, random, or scale-free communities), this general feature of structural transitions conforms to the underlying intuition for why dynamic networks can facilitate cooperation in networks with communities (\fig{intuition}). Dynamic structures can still facilitate cooperation even when networks differ in only a small fraction of connections, although the strength of the effect is weakened. Furthermore, these effects also persist (and can be quite strong) when populations transition between three or more network structures. We give illustrations in Supplementary Figure 4.

\subsection{The probability and time to fixation of cooperation}
We have studied dynamic structures by comparing the fixation probability of a cooperator to that of a defector, and by calculating the critical benefit-to-cost ratio $\left(b/c\right)^{\ast}$ that ensures $\rho_{C}>\rho_{D}$ in a large class of dynamic networks. We can also study the fixation probability $\rho_{C}$ in absolute terms. We find that a dynamic population structure increases the fixation probability of cooperators, making them more likely to overtake the population, compared to a static network. Dynamic population structures also tend to decrease the duration before one type or another fixes (see Supplementary Figure 5), as well as shorten the mean conditional time until cooperators fix. The underlying intuition for these results is evident in \fig{intuition}: on a static network, the population will tend to be stuck at stage $s_3$ for a long time, before defectors eventually diffuse to the sparse community; whereas on dynamic networks, cooperators spread rapidly by selection in both communities. Thus, dynamic networks increase the likelihood that cooperators sweep the population as well as the rate at which they do so.

\subsection{Spatial and temporal burstiness}
We can adapt our method of analysis to study the effects of spatial and temporal burstiness. For dynamically changing networks, spatial burstiness arises when there is temporal variation in the density of network edges (node degree), whereas temporal burstiness arises when there are periods of rapidly changing network structures along with periods in which structures change more slowly. Empirical networks of both human and non-human (e.g. honeybee) interactions are known to exhibit both spatial and temporal burstiness \cite{Akbarpour-2018-pnas,gernat:PNAS:2018}, but the effects of these two forms of over-dispersion for behavior remains an active area of current research.

To study spatial burstiness, we consider the following minimal model of dynamically varying homogenous networks that differ in their average node degree. We construct a pair of networks as follows (see \fig{spatial_temporal}\textbf{a}): \emph{(i)} we first generate a single network with $N$ nodes and $E$ edges drawn from one of several classical families of networks (e.g. Erdos-Reyni random networks \cite{1960-Erdoes-p17-61}, Watts-Strogatz small-world networks \cite{1998-Watts-p440-442}, Barab\'{a}si-Albert scale free networks \cite{1999-Barabasi-p509-512}, etc.); \emph{(ii)} we decompose this network into two networks, by randomly selecting a fraction $\varepsilon\in\left[0,1/2\right]$ of the edges for network $1$ and using the remaining $\left(1-\varepsilon\right) E$ edges for network $2$. If $\varepsilon =1/2$ then the resulting networks 1 and 2 have the same density of interactions, and there is no spatial burstiness. For all other values of $\varepsilon\neq 1/2$, the network exhibits spatial burstiness, and we study a simple stochastic transition pattern between these networks, with $t_{1}=t_{2}=1$ so that each individual updates his strategy once, on average, before the network switches.

\begin{figure}
\centering
\includegraphics[width=1\textwidth]{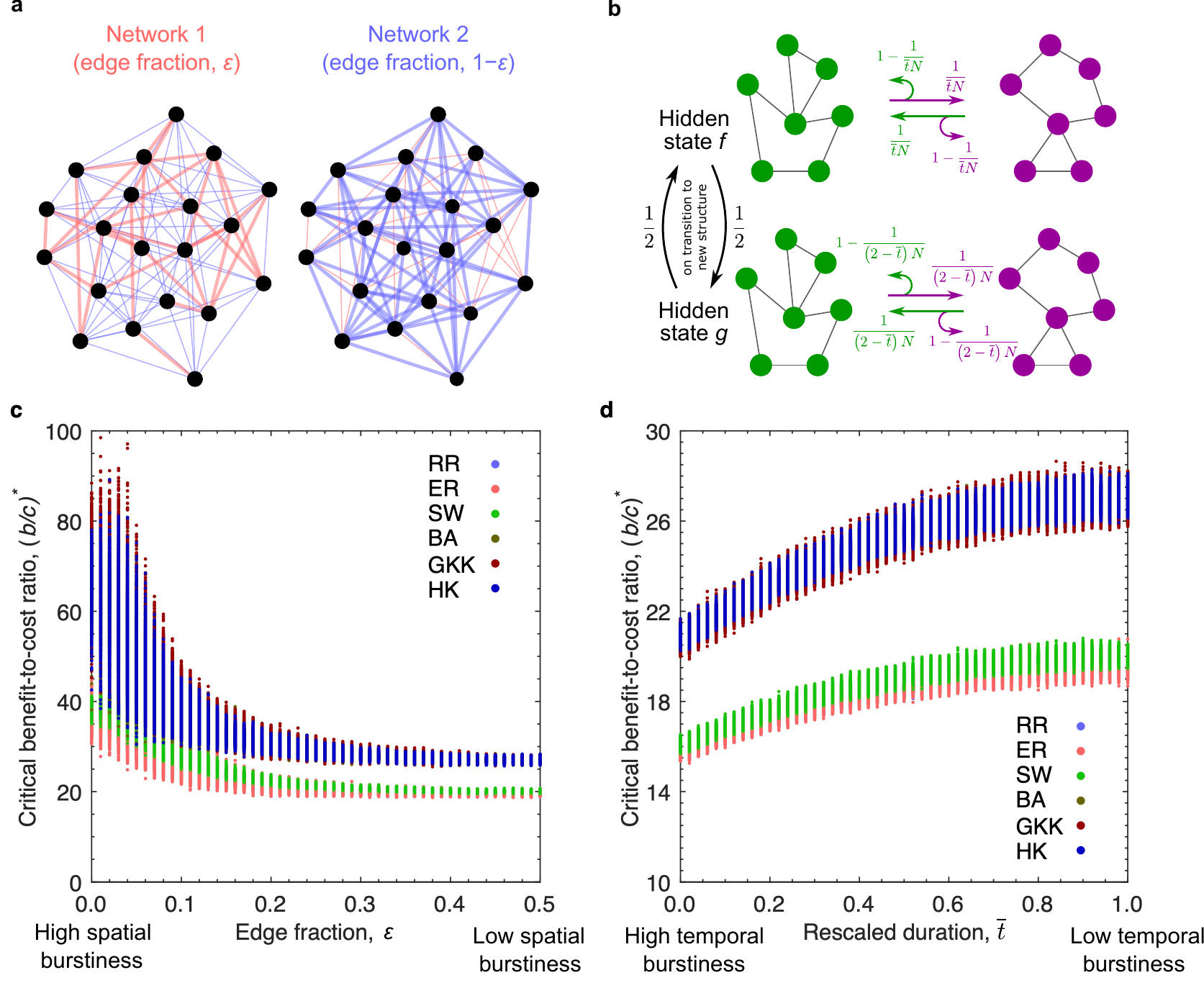}
\caption{\textbf{Effects of spatial and temporal burstiness on cooperation.} We consider transitions between two networks, with either \textbf{a}, spatial burstiness (different edge densities) or \textbf{b}, temporal burstiness (periods of both rapid and slow transitions). \textbf{c}, The critical benefit-to-cost ratio $\left(b/c\right)^{\ast}$ as a function of spatial heterogeneity, $\varepsilon$. When the two networks have the same edge density, $\varepsilon =0.5$, cooperation is most readily favored. When the networks that differ in their edge densities ($\varepsilon \ll 0.5$), much larger values of the benefit-to-cost ratio, $b/c$, are required to support cooperation. \textbf{d} The critical benefit-to-cost ratio $\left(b/c\right)^{\ast}$ required to favor cooperation as a function of temporal heterogeneity, $\bar{t}$. The case $\bar{t}=1$ means that networks transition at the same rate, regardless of the hidden state. When $\bar{t}<1$, the networks transition more rapidly in state $f$ than in state $g$, so that there is temporal burstiness. Critical benefit-to-cost ratios under spatial and temporal burtiness are shown for six classes of networks: random regular networks (RR), Erd\"{o}s-R\'{e}nyi networks (ER) \cite{1960-Erdoes-p17-61}, Watts-Strogatz small-world networks (SW) \cite{1998-Watts-p440-442} with rewiring probability $0.1$, Barab\'{a}si-Albert scale-free networks (BA) \cite{1999-Barabasi-p509-512}, Goh-Kahng-Kim scale-free networks (GKK) \cite{2001-Goh-p278701-278701} with exponent $2.5$, and Holme–Kim scale-free networks (HK) \cite{Holme2002} with triad formation probability $0.1$. For each such class, we generate $1{,}000$ networks, each with $100$ nodes and average degree $20$. We take $\bar{t}=1$ in \textbf{c} and $\varepsilon =0.5$ in \textbf{d}.\label{fig:spatial_temporal}}
\end{figure}

We find that spatial burstiness tends to inhibit the evolution of cooperation, whereas spatial regularity (equal network densities) is more beneficial for cooperation (\fig{spatial_temporal}\textbf{c}). In particular, regardless of the class of network from which networks 1 and 2 are derived, the critical ratio $\left(b/c\right)^{\ast}$ required to favor cooperation is substantially increased (roughly by a factor of two) in the regime $\varepsilon \rightarrow 0$ compared to the spatially homogeneous regime $\varepsilon =1/2$.
The dynamic networks with spatial burstiness shown in \fig{spatial_temporal}\textbf{a,c}, which impede cooperation, differ fundamentally from the dynamic networks studied in Figures \ref{fig:star-complete}-\ref{fig:other_structures}, which facilitate cooperation.  The networks in Figures \ref{fig:star-complete}-\ref{fig:other_structures} each contain multiple communities with different local edge densities; and the densities switch when the network changes. By contrast, the dynamic networks shown in \fig{spatial_temporal}\textbf{a,c} (spatial burstiness) contain no communities but rather they have a uni-modal edge degree distribution in every state: these homogeneous networks simply switch between high-degree and low-degree distributions, which tends to impede cooperation. This result aligns with previous simulation studies of random network rewiring \cite{2009-Kun-Biosystems}. The intuition for this result is simple: cooperative partnerships may arise in the sparse homogeneous state, but transitions to a dense homogeneous network allow defectors to invade cooperative clusters. 

We also study the effects of temporal burstiness, in which case networks 1 and 2 are chosen to have the same edge density ($\varepsilon =1/2$), but there are periods of rapid transitions between the two networks, punctuated by periods of slow transitions. To construct this scenario, instead of having a single transition matrix, $Q$, we consider two such matrices, $Q^{f}$ and $Q^{g}$, corresponding to fast and slow epochs. At any time, the population is either in hidden state $f$, so that network transitions occur according to $Q^{f}$, or alternatively in hidden state $g$, so that network transitions occur according to $Q^{g}$. Whenever the population transitions to a new network, the hidden state is drawn uniformly-at-random from $\left\{f,g\right\}$ (see \fig{spatial_temporal}{\bf{b}}). (Note that the hidden state $g$ or $f$ is re-sampled only when the network changes, from 1 to 2 or from 2 to 1.)

The speed of network transitions in each hidden state, $g$ and $f$, is governed by a parameter $\bar{t}\in\left[0,1\right]$, so that transitions are fast in state $f$ and slow in state $g$. When the population enters state $f$, the expected duration before a network transition is small, namely $\bar{t}N$. Whereas when the population enters state $g$ the expected duration of the current network is longer, $\left(2-\bar{t}\right) N$ (see \fig{spatial_temporal}\textbf{b}). The case $\bar{t}=1$ means that the current network has the same expected duration, regardless of the hidden state, and there is no temporal burstiness. When $\bar{t}<1$, the networks transition more quickly in state $f$ than they do in state $g$. Regardless of the value of $\bar{t}$, however, the total accumulated time spent in network $1$ is the same as in network $2$, throughout the evolutionary process.

Temporal burstiness tends to facilitate cooperation in networks lacking community structure (\fig{spatial_temporal}\textbf{d}). In particular, the critical benefit-to-cost required to favor cooperation is largest when temporal burstiness is absent ($\bar{t}=1$), and it is reduced (typically by $~20$\%) when temporal burstiness is large ($\bar{t}=0$). 
In the case of high temporal burstiness, the system experiences an epoch of time with frequent changes in network states, followed by an epoch with long periods of stability within a network. The epoch with frequent network changes is particularly detrimental to the survival of cooperator clusters. However, the extended periods of stability promote the formation of cooperator clusters. Even when two networks have the same edge density ($\varepsilon=1/2$) and the accumulated time spent on each network is the same, large 
temporal burstiness allows the establishment of large cooperator clusters in epochs with slow transitions, and this promotes the survival of cooperators, in stark contrast to our findings for spatial burstiness.

\section{Discussion}
Many real-world interactions are ephemeral, and the entire network of social interactions may be subject to exogenous changes. Seasonal changes in a species' environment, for example, can lead to active and dormant periods, as can diurnal cycles. Such periodic transitions are widely used to model temporal networks \cite{2005-Lieberman-p312-316,Pan-2011-pre,2015-Holme-p234-234,2015-Valdano-p21005-21005}. Stochastic transitions in social structures can arise from the effects of weather, animal migration and movement, and role reversal \cite{povinelli:AB:1992}. Motivated by the ubiquity of structural variation in nature, we provide a treatment of dynamic social networks that allows for arbitrary stochastic transitions between structures, with arbitrary networks within each time step.

Our main mathematical result (\eq{selection_condition}) shows when cooperation will evolve on dynamic networks, under weak selection. The population structure in every time step need not be connected; all that we require is that the population satisfy a coherence condition so that it does not become fragmented into multiple sub-populations (see Supplementary Section 1). In addition to probabilistic transitions, our analysis extends to deterministic and periodic network transitions (see Equation 33 in Supplementary Information). Our analysis also applies to other scenarios for changing structures, such as when the direction of public goods or information flow changes over time\cite{su-2022-pnas}; the number of active nodes or edges varies; or the population size fluctuates (in fact, the results in Supplementary Information allow for arbitrary patterns of replacement, see Supplementary Section 5). Although prosocial behaviors in different strategic domains may manifest in different ways, such as trust games or dictator games, the desire to pay costs to benefit others has a substantial degree of domain generality \cite{peysakhovich-nc-2014}. Our conclusions, based on donation games, are thus indicative of how dynamic networks may broadly impact prosocial behavior.

In the donation game, we have seen that changing social structures can sometimes promote cooperation. On the one hand, if the population simply transitions between homogeneous networks of low degree and high degree, then this dynamic tends to suppress cooperation \fig{spatial_temporal}\textbf{a,c}.  By contrast, if each network contains multiple communities of different local edge densities, then dynamic changes can facilitate cooperation and these effects can be dramatic (Figures \ref{fig:star-complete}-\ref{fig:other_structures}). Even if every such network individually disfavors cooperators, transitions between them can facilitate the evolution of cooperation -- a result that is reminiscent of Parrondo's paradox \cite{harmer:Nature:1999}. \fig{intuition} illustrates the mechanism for how this phenomenon arises, as transitions move individuals between regions of the network that are dense to those that are sparse. 
These dynamic structures also facilitate cooperation in the widely studied birth-death updating process (see Supplementary Figure 6) and for strong selection strength (see Supplementary Figure 7).

Changing social structures that each exhibit local communities are common in real-world settings. Groups and communities are more likely to form among people with close geographical locations and similar religion, culture, and affiliations \cite{girvan-2002-pnas,newman-2006-pnas}; but connection densities will be altered over time. Changes in connection densities in different communities may result from a phase difference, e.g. in online social networks across different time zones, in workers' interactions across a typical two-shift working schedule, and during periodic gatherings and migrations in religious communities. Spatio-temporal heterogeneity of interaction density within a given community also leads to time-varying connection densities, from sparse to dense and \emph{vice versa} \cite{2012-Holme-p97-125}. We find that each kind of burstiness has a clear effect on cooperation, either hindering it in the case of spatial burstiness or promoting it in the case of temporal burstiness. Broadly speaking, our work highlights the significance of integrating multiple communities into one system, since treating communities individually and independently may lead to erroneous conclusions about behavioral dynamics in the population as a whole \cite{Blonder-2012-mee}.

In line with many studies on evolutionary dynamics, we have assumed that mutations appear infrequently, which justifies the study of fixation probabilities to quantify evolutionary dynamics \cite{fudenberg:JET:2006}. Within static networks \cite{tarnita:JTB:2009} (and certain kinds of dynamic networks \cite{tarnita:PNAS:2009}), positive mutation rates lead to what is called a ``$\sigma$-rule'' for selection to favor cooperators in mean abundance. Such results involve inequalities based on (generally) at most three coefficients (e.g., $\sigma_{1}$, $\sigma_{2}$, and $\sigma_{3}$), all related to the population's structure, update rule, and mutation rate \cite{tarnita:PNAS:2011}. Calculating these coefficients is not straightforward, and only recently has a method been proposed for doing so on static networks \cite{mcavoy:PNAS:2022}. From a technical standpoint, the analysis of mutation-selection dynamics requires a different approach from what is typically used to study fixation probabilities. Furthermore, weak-mutation dynamics do not necessarily predict what happens when mutation rates are larger \cite{debarre:DGA:2019,mcavoy:PNAS:2022}. For these reasons, we have focused on mutation-free dynamics in this study. Of course, larger mutation rates may be relevant in dynamic populations, as they are in static populations, and future investigations into mutations on dynamic networks are warranted.

In addition to the mutation-free assumption, all of our results are based on exogenous network transitions, which means that individuals cannot selectively engineer their neighborhoods based on the traits of others. There are, of course, many interesting models involving endogenous transitions, in which cooperators can selectively form links with other cooperators and break links with defectors. In such models cooperation can flourish when structure transitions are rapid enough \cite{2006-Pacheco-p258103,2006-Santos-p1284-1291,2008-pacheco-jtb,2009-VanSegbroeck-p-,2010-Wu-p11187-11187,2011-Fehl-p546-551,2011-Rand-p19193-19198,Bravo2012,2012-Wang-p14363-14368,Bednarik2014,2014-Cardillo-p52825-52825,Harrell2018,2018-Akcay-p2692-2692}, for the simple reason that this endogenous dynamic establishes cooperative clusters. Such ``form follows function'' models are frequently aimed at answering the question: what kinds of networks arise from certain traits, and how do these networks serve the greater good? By contrast, our focus is not the coevolutionary dynamics of trait and structure, but on a different question altogether: what is the impact of exogenous structural changes on the evolution of behavior? This approach is more closely related to classical studies of network effects on cooperation: given a (dynamic) network, what behavioral traits evolve? Since exogenous structural changes do not provide any explicit advantage or disadvantage to cooperators relative to defectors, the resulting evolutionary dynamics of social traits are all the more intriguing.

We have aimed for generality in framing our mathematical results, but a natural limitation of our study is the scope of networks we have analyzed, compared to the vast space of possible population structures and transitions among them. For this reason, even static structures are still an active topic of current research in evolutionary game theory. We have therefore chosen to consider a limited number of representative examples of dynamic networks, which showcase the interesting effects they can have on the evolution of cooperation. Areas for future investigation include the effects of fluctuating resources on cooperation, varying the timescale of network changes relative to reproductive events (to include, for example, much more gradual changes), and other forms of exogenous changes that favor cooperation, and environments that involve both endogenous and exogenous transitions. In fact, although we use cooperation as an example, our analysis is framed quite generally to allow the study of other traits on dynamic structures. To the best of our knowledge, our analytical findings constitute the first general results for behavioral evolution on dynamic networks, and we hope that they will be valuable tools in future work.

\section{Methods}

\subsection{Analysis of weak selection}
Here, we outline a derivation of the critical benefit-to-cost ratio $\left(b/c\right)^{\ast}$ for selection to favor cooperation, based on an extension of a result of McAvoy \& Allen \cite{McAvoy2021}. Complete mathematical details of the extension may be found in Supplementary Sections 1-5. We first briefly recall the relevant result for (static) networks.

The general modeling framework of Allen \& McAvoy \cite{Allen2019} and McAvoy \& Allen \cite{McAvoy2021} assumes that the replacement process, which drives evolutionary dynamics, depends on the configuration of traits within the population, but it does not allow for a dynamic environmental state (e.g., network). In that context, a first-order expansion (in selection intensity) of a mutant trait's fixation probability was derived, which involves two quantities derived from neutral drift ($\delta =0$) and one quantity linking interactions to weak selection ($\delta\ll 1$). The first neutral quantity is reproductive value, $\pi$. The reproductive value of node $i$ is the probability that, under neutral drift, $i$ generates a lineage that takes over the population. In general, $\pi$ can be described as the unique solution to a linear system of $N$ equations. For the death-Birth (dB) update rule, we obtain $\pi_{i}\propto\sum_{j=1}^{N}w_{ij}$, where $w_{ij}$ is the weight of the edge between nodes $i$ and $j$ in the network.

The second quantity is related to the assortment of traits under neutral drift. Informally, for two nodes $i$ and $j$, we wish to understand the expected fraction of time prior to absorption that $i$ and $j$ share the same trait. This quantity, again, can be understood by looking at a linear system, this time of size $O\left(N^{2}\right)$. For dB updating, it suffices to calculate expected remeeting times of random walks on the network \cite{2017-Allen-p227-230}. Specifically, we seek quantities $\tau_{ij}$ such that $\tau_{ij}=0$ if $i=j$ and $\tau_{ij}=1+\frac{1}{2}\sum_{k=1}^{N}p_{ik}\tau_{kj}+\frac{1}{2}\sum_{k=1}^{N}p_{jk}\tau_{ik}$ when $i\neq j$, where $p_{ij}\coloneqq w_{ij}/\sum_{k=1}^{N}w_{ik}$ is the probability of moving from $i$ to $j$ in one step of a random walk on the network. Finally, the interaction comes into play via a game, in which type $A$ at node $i$ pays $C_{ij}$ to donate $B_{ij}$ to the individual at node $j$ (and type $B$ does nothing).

The fixation probability of a mutant placed uniformly-at-random in the network then satisfies
\begin{equation}\label{eq:main_fp_static}
\begin{split}
\frac{d}{d\delta}\Bigg\vert_{\delta =0} \rho_{C} &= \frac{1}{N}\sum_{i,j=1}^{N}\pi_{i}p_{ij}\sum_{\ell =1}^{N}\left(-\left(T-\tau_{jj}\right) C_{j\ell} +\left(T-\tau_{j\ell}\right) B_{\ell j}\right) \\
&\quad -\frac{1}{N}\sum_{i,j,k=1}^{N}\pi_{i}p_{ij}p_{ik}\sum_{\ell =1}^{N}\left(-\left(T-\tau_{jk}\right) C_{k\ell} +\left(T-\tau_{j\ell}\right) B_{\ell k}\right) ,
\end{split}
\end{equation}
where $T$ is the total time until all individuals are identical by descent. $T$ actually cancels out of \eq{main_fp_static}, but it is useful in the interpretation of this equation since then $T-\tau_{ij}$ is the mean time that nodes $i$ and $j$ are identical by descent.

We now outline how this result extends to dynamic networks. For $i,j\in\left\{1,\dots ,N\right\}$, let $w_{ij}^{\left[\beta\right]}$ be the weight of edge between nodes $i$ and $j$ in network $\beta\in\left\{1,\dots ,L\right\}$. We assume that the network is undirected, meaning $w_{ij}^{\left[\beta\right]}=w_{ji}^{\left[\beta\right]}$ for all $i,j\in\left\{1,\dots ,N\right\}$ and $\beta\in\left\{1,\dots ,L\right\}$. If $i$ and $j$ share an edge then they interact in game play and in strategic imitation. The class of models we are interested in here involve social goods \cite{mcavoy-2020-nhb} in which, on network $\beta$, an individual of type $A$ at $i$ pays a cost of $C_{ij}^{\left[\beta\right]}$ to donate $B_{ij}^{\left[\beta\right]}$ to the individual at $j$. In state $\left(\mathbf{x} ,\beta\right)$, the total payoff to the individual at $i$ is
\begin{equation}
u_{i}\left(\mathbf{x},\beta\right) = \sum_{j=1}^{N} \left( -x_{i} C_{ij}^{\left[\beta\right]} + x_{j} B_{ji}^{\left[\beta\right]} \right) . 
\end{equation} 
This net payoff is converted to fecundity via the formula $F_{k}\left(\mathbf{x},\beta\right) =e^{\delta u_{k}\left(\mathbf{x},\beta\right)}$. 
In the main text, under the assumption of weak selection, we use a linearized fecundity function $F_{k}\left(\mathbf{x},\beta\right)$. If the population structure is $\beta$, then a node in $\beta$ is first selected uniformly-at-random to die. Subsequently, all neighboring nodes in $\beta$ compete to produce an offspring to fill the vacancy at node $i$. The probability that $j$ replaces $i$ in state $\left(\mathbf{x} ,\beta\right)$ is given by \eq{eji_db}.

Let $p_{ij}^{\left[\beta\right]}\coloneqq w_{ij}^{\left[\beta\right]}/\sum_{k=1}^{N}w_{ik}^{\left[\beta\right]}$ be the probability of moving from $i$ to $j$ in one step of a random walk on network $\beta$. Under neutral drift, the probability $\pi_{i}^{\left[\beta\right]}$ that, starting in network $\beta$, $i$ generates a lineage that takes over the population (i.e.~the reproductive value of $i$ in $\beta$) satisfies
\begin{equation}
\pi_{i}^{\left[\beta\right]} = \frac{1}{N}\sum_{j=1}^{N} p_{ji}^{\left[\beta\right]} \sum_{\gamma =1}^{L}q_{\beta\gamma}\pi_{j}^{\left[\gamma\right]}
+\left(1-\frac{1}{N}\sum_{j=1}^{N}p_{ij}^{\left[\beta\right]}\right)\sum_{\gamma =1}^{L} q_{\beta\gamma}\pi_{i}^{\left[\gamma\right]} , \label{eq:reproductive_value_db}
\end{equation}
subject to the constraint $\sum_{i=1}^{N}\pi_{i}^{\left[\beta\right]}=1$. $\pi$ is thus determined by a linear system of size $O\left(LN\right)$.
The reproductive value of individual $i$ in network $\beta$ satisfies a system involving the sum of two components. The first term represents the reproductive value of the offspring, if $i$ spreads its offspring to another node in the network $\gamma$. The second term represents the reproductive value of individual $i$ in the subsequent network $\gamma$, if it is not selected to die.

For the initial state, we choose a network from the stationary distribution of stochastic network transitions; and a mutant is chosen uniformly-at-random within that network. We consider two types of initial mutants:  a mutant $C$ arising in an otherwise all-$D$ state (denoted $\mu_{C}$) and a mutant $D$ arising in an otherwise all-$C$ state (denoted $\mu_{D}$). Associated with each $\mu\in\left\{\mu_{C},\mu_{D}\right\}$ is a quantity $\eta_{I}^{\left[\beta\right]}\left(\mu\right)$ related to the co-occurrence of a trait in $\beta$ among the nodes in $I\subseteq\left\{1,\dots ,N\right\}$, which is defined formally in Supplementary Section 5. We can think of $\eta_{I}^{\left[\beta\right]}$ as representing the probability that, given the initial state, all individuals in $I$ have type $C$. For our purposes, we need an expression for $\eta_{I}^{\left[\beta\right]}\left(\mu\right)$ only when $I$ contains one or two nodes, because interactions occur among pairs of neighbors. (More details on the required size of $I$ may be found in Supplementary Section 5, and an extensive discussion of how the maximal size of $I$ is determined by the \emph{degree} of a game may be found in McAvoy \& Allen \cite{McAvoy2021}.) For $\left| I\right| =2$, we find that $\eta_{I}^{\left[\beta\right]}\left(\mu_{C}\right) =\eta_{I}^{\left[\beta\right]}\left(\mu_{D}\right)$ and
\begin{equation}
\eta_{ij}^{\left[\beta\right]} = 
\begin{cases}
\displaystyle 0 & \displaystyle i=j , \\
& \\
\displaystyle \frac{1}{N}\upsilon\left(\beta\right) +\sum_{\gamma =1}^{L}q_{\gamma\beta}\left(\frac{1}{N}\sum_{k=1}^{N}p_{ik}^{\left[\gamma\right]} \eta_{kj}^{\left[\gamma\right]} +\frac{1}{N}\sum_{k=1}^{N}p_{jk}^{\left[\gamma\right]} \eta_{ik}^{\left[\gamma\right]} +\left(1-\frac{2}{N}\right)\eta_{ij}^{\left[\gamma\right]}\right) & \displaystyle i\neq j .
\end{cases} \label{eq:eta_db}
\end{equation}
We refer the reader to Equation 32 in Supplementary Information for further details.

It turns out that a scaled version of $\eta$, namely $\tau_{ij}^{\left[\beta\right]}\coloneqq\eta_{ij}^{\left[\beta\right]}/\upsilon\left(\beta\right)$, allows for a more intuitive interpretation of the selection condition. Consider the time-reversed structure transition chain defined by
\begin{equation}
\widetilde{q}_{\beta\gamma} \coloneqq \frac{\upsilon\left(\gamma\right)}{\upsilon\left(\beta\right)} q_{\gamma\beta} .
\end{equation}
Using this time-reversed chain in conjunction with \eq{eta_db}, we see that
\begin{equation}
\tau_{ij}^{\left[\beta\right]} = 
\begin{cases}
\displaystyle 0 & \displaystyle i=j , \\
& \\
\displaystyle \frac{1}{N} +\sum_{\gamma =1}^{L}\widetilde{q}_{\beta\gamma}\left(\frac{1}{N}\sum_{k=1}^{N}p_{ik}^{\left[\gamma\right]} \tau_{kj}^{\left[\gamma\right]} +\frac{1}{N}\sum_{k=1}^{N}p_{jk}^{\left[\gamma\right]} \tau_{ik}^{\left[\gamma\right]} +\left(1-\frac{2}{N}\right)\tau_{ij}^{\left[\gamma\right]}\right) & \displaystyle i\neq j .
\end{cases} \label{eq:tau_db}
\end{equation}
In the ancestral process, looking backward in time under neutral drift, $N\tau_{ij}^{\left[\beta\right]}$ represents the expected number of update steps until $i$ and $j$ coalesce. Equivalently, since one of $N$ individuals is updated in each time step, $\tau_{ij}^{\left[\beta\right]}$ can be seen as the mean number of generations needed for $i$ and $j$ to coalesce. This interpretation arises from a one-step analysis of the time-reversed chain and the fact that, in each time increment, one update step (a portion $1/N$ of a generation) has transpired, which accounts for the additive factor of $1/N$ in \eq{tau_db}. The probability of transitioning from network $\beta$ to network $\gamma$ backward in time (i.e., in the \emph{reversed} chain) is $\widetilde{q}_{\beta\gamma}$. With probability $1/N$, $i$ is chosen for replacement, and with probability $p_{ik}$, the offspring of $k$ replaces $i$. After this step, the mean number of generations for the resulting random walks to coalesce is $\tau_{kj}^{\left[\gamma\right]}$. Likewise, with probability $1/N$, $j$ is chosen for replacement, and with probability $p_{jk}$, the offspring of $k$ replaces $j$. Then, the mean number of generations for coalescence is $\tau_{ik}^{\left[\gamma\right]}$. Finally, with probability $1-2/N$, neither $i$ nor $j$ is chosen for replacement. Since the network can still transition to $\gamma$ even when neither $i$ nor $j$ is replaced, we are left with the mean coalescence time $\tau_{ij}^{\left[\gamma\right]}$. We refer the reader to Allen et al.~\cite{2017-Allen-p227-230} and Allen \& McAvoy \cite{allen:preprint:2022} for more detailed discussions of random walks on networks and their relationships to the coalescent.

If, conditioned on the population being in state $\beta$, $T^{\left[\beta\right]}$ is the mean time to reach the most recent common ancestor going backward in time, then the mean time that $i$ and $j$ spend identical by descent is $T^{\left[\beta\right]}-\tau_{ij}^{\left[\beta\right]}$. Finding $\tau$ for all structures and pairs of individuals involves solving a linear system of size $O\left(LN^{2}\right)$. We note that, although $T$ aids in the interpretation of $\tau$ as determining identity by descent, it does not need to be calculated directly in order to understand the first-order effects of selection on fixation probability.

We now have all of the neutral quantities we need to state the selection condition. The final piece is the connection between the payoffs and the replacement probabilities under weak selection. A straightforward calculation gives the probability that individual $i$ copies $j$ type, i.e.  $e_{ji}\left(\mathbf{x},\beta\right) =\frac{1}{N}p_{ij}^{\left[\beta\right]}+\delta\sum_{k=1}^{N}c_{k}^{ji}\left(\beta\right) x_{k}+O\left(\delta^{2}\right)$, where
\begin{equation}\label{eq:cji_beta_db_game}
c_{k}^{ji}\left(\beta\right) = 
\begin{cases}
\displaystyle \frac{1}{N}p_{ij}^{\left[\beta\right]}\left( -\sum_{\ell =1}^{N}C_{j\ell}^{\left[\beta\right]}+B_{jj}^{\left[\beta\right]}+p_{ij}^{\left[\beta\right]}\sum_{\ell =1}^{N}C_{j\ell}^{\left[\beta\right]}
-\sum_{\ell =1}^{N}p_{i\ell}^{\left[\beta\right]}B_{j\ell}^{\left[\beta\right]}\right) & \displaystyle k= j , \\
& \\
\displaystyle \frac{1}{N}p_{ij}^{\left[\beta\right]}\left( B_{kj}^{\left[\beta\right]}+p_{ik}^{\left[\beta\right]}\sum_{\ell =1}^{N}C_{k\ell}^{\left[\beta\right]}
-\sum_{\ell =1}^{N}p_{i\ell}^{\left[\beta\right]}B_{k\ell}^{\left[\beta\right]}\right) & \displaystyle k\neq j .
\end{cases}
\end{equation}

Putting everything together using Equation 31 in Supplementary Information, we see that
\begin{equation}
\begin{split}
\frac{d}{d\delta}\Bigg\vert_{\delta =0} \rho_{C}&= \frac{1}{N}\sum_{i,j=1}^{N} \sum_{\beta =1}^{L} \upsilon\left(\beta\right) \left(\sum_{\gamma =1}^{L} q_{\beta\gamma} \pi_{i}^{\left[\gamma\right]}\right) p_{ij}^{\left[\beta\right]} \sum_{\ell =1}^{N} \left(\substack{-\left(T^{\left[\beta\right]}-\tau_{jj}^{\left[\beta\right]}\right) C_{j\ell}^{\left[\beta\right]} \\ + \left(T^{\left[\beta\right]}-\tau_{j\ell}^{\left[\beta\right]}\right) B_{\ell j}^{\left[\beta\right]}}\right)  \\
&\quad -\frac{1}{N}\sum_{i,j,k=1}^{N} \sum_{\beta =1}^{L} \upsilon\left(\beta\right) \left(\sum_{\gamma =1}^{L} q_{\beta\gamma} \pi_{i}^{\left[\gamma\right]}\right) p_{ij}^{\left[\beta\right]} p_{ik}^{\left[\beta\right]} \sum_{\ell =1}^{N} \left( \substack{-\left(T^{\left[\beta\right]}-\tau_{jk}^{\left[\beta\right]}\right) C_{k\ell}^{\left[\beta\right]}
\\ +\left(T^{\left[\beta\right]}-\tau_{j\ell}^{\left[\beta\right]}\right) B_{\ell k}^{\left[\beta\right]}}\right) . \label{eq:sg_derivative}
\end{split}
\end{equation}
An analogous calculation for $D$ gives $\frac{d}{d\delta}\Big\vert_{\delta =0}\rho_{D}=-\frac{d}{d\delta}\Big\vert_{\delta =0}\rho_{C}$, which means $\frac{d}{d\delta}\Big\vert_{\delta =0}\rho_{C}>\frac{d}{d\delta}\Big\vert_{\delta =0}\rho_{D}$ is equivalent to $\frac{d}{d\delta}\Big\vert_{\delta =0}\rho_{C}>0$.
Under weak selection ($\delta\ll 1$) we have
$\rho_C=\rho_C\Big\vert_{\delta=0}+\delta \frac{d}{d\delta}\Big\vert_{\delta=0}\rho_C+O(\delta^2)$ and 
$\rho_D=\rho_D\Big\vert_{\delta=0}+\delta \frac{d}{d\delta}\Big\vert_{\delta=0}\rho_D+O(\delta^2)$.
The quantities $\rho_C\Big\vert_{\delta=0}$ and $\rho_D\Big\vert_{\delta=0}$ are respectively the fixation probability of cooperators and defectors under neutral drift, and they are identical, i.e. $\rho_C\Big\vert_{\delta=0}=\rho_D\Big\vert_{\delta=0}=1/N$.
In summary, the condition for selection to favor cooperation, $\rho_C>\rho_D$, is therefore equivalent to $\frac{d}{d\delta}\Big\vert_{\delta=0}\rho_C>\frac{d}{d\delta}\Big\vert_{\delta=0}\rho_D$ and accordingly $\frac{d}{d\delta}\Big\vert_{\delta =0}\rho_{C}>0$.

In the donation game, we have $B_{ij}^{\left[\beta\right]}=w_{ij}^{\left[\beta\right]}b$ and $C_{ij}^{\left[\beta\right]}=w_{ij}^{\left[\beta\right]}c$. 
Collecting all terms involving $T$ in \eq{sg_derivative} gives $0$, and then we have 
\begin{equation} \label{eq:rhoAB_stochastic2}
\rho_C>\rho_D \iff b\mu_{2}-c\nu_{2}>b\mu_{0}-c\nu_{0} , 
\end{equation}
where 
\begin{equation} \label{eq:uv_stochastic1}
\begin{split}
\mu_{0} &= \frac{1}{N}\sum_{i,j=1}^{N} \sum_{\beta =1}^{L} \upsilon\left(\beta\right) \left(\sum_{\gamma =1}^{L} q_{\beta\gamma} \pi_{i}^{\left[\gamma\right]}\right) p_{ij}^{\left[\beta\right]} \sum_{\ell =1}^{N} w_{\ell j}^{\left[\beta\right]}\tau_{j\ell}^{\left[\beta\right]} ; \\
\nu_{0} &= \frac{1}{N}\sum_{i,j=1}^{N} \sum_{\beta =1}^{L} \upsilon\left(\beta\right) \left(\sum_{\gamma =1}^{L} q_{\beta\gamma} \pi_{i}^{\left[\gamma\right]}\right) p_{ij}^{\left[\beta\right]} \sum_{\ell =1}^{N} w_{j\ell}^{\left[\beta\right]}\tau_{jj}^{\left[\beta\right]} ; \\
\mu_{2} &= \frac{1}{N}\sum_{i,j,k=1}^{N} \sum_{\beta =1}^{L} \upsilon\left(\beta\right) \left(\sum_{\gamma =1}^{L} q_{\beta\gamma} \pi_{i}^{\left[\gamma\right]}\right) p_{ij}^{\left[\beta\right]} p_{ik}^{\left[\beta\right]} \sum_{\ell =1}^{N} w_{\ell k}^{\left[\beta\right]}\tau_{j\ell}^{\left[\beta\right]} ; \\
\nu_{2} &= \frac{1}{N}\sum_{i,j,k=1}^{N} \sum_{\beta =1}^{L} \upsilon\left(\beta\right) \left(\sum_{\gamma =1}^{L} q_{\beta\gamma} \pi_{i}^{\left[\gamma\right]}\right) p_{ij}^{\left[\beta\right]} p_{ik}^{\left[\beta\right]} \sum_{\ell =1}^{N} w_{k\ell}^{\left[\beta\right]}\tau_{jk}^{\left[\beta\right]} .
\end{split}
\end{equation}
The critical benefit-to-cost ratio is therefore $\left(b/c\right)^{\ast}=\left(\nu_{2}-\nu_{0}\right) /\left(\mu_{2}-\mu_{0}\right)$.

Note that, for simplicity, we have assumed that any node can be selected for death in a given network. In reality, this assumption might not hold because each individual network need not be connected, which can lead to isolated nodes. If an isolated node is chosen for death, then the individual at this node cannot be immediately replaced by the offspring of a neighbor. All of our calculations can be modified to allow for only non-isolated nodes to be chosen for death, although in practice we do not need to do so in any of our examples.

\subsection{Specific examples}
We study the transition between two networks, with transition probabilities given by
\begin{equation} \label{eq:sparse_dense3}
\begin{split}
q_{\beta\gamma} &= 
\begin{cases}
\displaystyle 1-p & \displaystyle \beta =1,\gamma =1 ; \\
\displaystyle p & \displaystyle \beta =1,\gamma =2 ; \\
\displaystyle q & \displaystyle \beta =2,\gamma =1 ; \\
\displaystyle 1-q & \displaystyle \beta =2,\gamma =2 .
\end{cases} 
\end{split}
\end{equation}
The expected durations in networks $1$ and $2$ are $q/\left(p+q\right)$ and $p/\left(p+q\right)$, respectively.

We study evolution on dynamic two-community networks. The two-community network is made up of a star community and a complete community, with the hubs connected (see \fig{star-complete}\textbf{a}). Let $n$ and $m$ denote the numbers of nodes in the star and complete communities, respectively, so that $n+m=N$. We denote by $1,\dots ,n$ the nodes in the star community and by $n+1,\dots n+m$ the nodes in the complete community, where $n$ is the hub of the star and $n+m$ is the node of the complete community connected to the hub of the star. The other network is obtained by swapping the star and complete communities. The adjacency matrix for the first network satisfies $w_{ij}^{\left[1\right]}=1$ only if \emph{(i)} $i=n$ and $j<n$, or $i<n$ and $j=n$; \emph{(ii)} $i=n$ and $j=n+m$, or $i=n+m$ and $j=n$; or \emph{(iii)} $i,j\geqslant n+1$ and $i\neq j$. The adjacency for the second network satisfies $w_{ij}^{\left[2\right]}=1$ only if \emph{(i)} $i=n+1$ and $j>n+1$, or $i>n+1$ and $j=n+1$; \emph{(ii)} $i=n$ and $j=n+m$, or $i=n+m$ and $j=n$; or \emph{(iii)} $i,j\leqslant n$ and $i\neq j$.

Using the results of the previous section, we can directly calculate $\pi$ and $\tau$ and calculate the critical benefit-to-cost ratio. Here, we provide explicit mathematical results for representative cases. Assuming $p=q=1/\left(tN\right)$ and letting $a\coloneqq n/\left(n+m\right)$, we find that 
\begin{equation} \label{eq:sparse_dense_dynamic}
\begin{split}
\left(\frac{b}{c}\right)^{\ast} &= 
\begin{cases}
\frac{\left(2a^{2} - 2a + 1\right) t^{3} + \left(8a^{2} - 8a + 7\right) t^{2} + \left(8a^{2} - 8a + 15
\right) t + 2a^{2} - 2a + 10}{2a\left(1-a\right)\left(t^{2} + 4t + 3\right)} & N\rightarrow \infty , \\
& \\
\frac{t^{3} + 10t^{2} + 26t + 19}{t^{2} + 4t + 3} & N\rightarrow \infty, \, a=\frac{1}{2}, \\
& \\
\frac{20a^{2} - 20a + 33}{16a\left(1-a\right)} & N\rightarrow \infty\, , t=1, \\
& \\
\frac{\tilde{t} \left(\tilde{t} + 1\right)}{ - 2\tilde{t}^{2} + 2\tilde{t} + 1}N & N\rightarrow \infty, \, t/N=\tilde{t}, \, a=\frac{1}{2}, \\
& \\
\frac{\left(\substack{210N^{16} - 520N^{15} - 1034N^{14} + 1770N^{13} \\ 
+ 14028N^{12} - 93440N^{11} + 300848N^{10} - 330944N^{9} \\
- 663040N^{8} + 2230528N^{7} - 1096448N^{6} - 4570112N^{5}  \\ 
+ 10000384N^{4} - 9265152N^{3} + 4259840N^{2} - 786432N}\right)}
{\left(\substack{30N^{16} - 79N^{15} + 225N^{14} + 1756N^{13}\\ - 15088N^{12} - 13128N^{11} + 247296N^{10} - 365152N^{9} \\
- 849344N^{8} + 2987392N^{7} - 1801984N^{6} - 5024768N^{5} \\
+ 11302912N^{4} - 9949184N^{3} + 3940352N^{2} - 327680N - 131072}\right)} & a=\frac{1}{2},\, t=1 .
\end{cases} 
\end{split}
\end{equation}
In particular, when $N\rightarrow\infty$ and $a=1/2$, $\left(b/c\right)^{\ast}$ is a monotonically increasing function of $t$.

We can compare this critical ratio to that of just a single network, which is the same for either network $1$ or network $2$ and satisfies
\begin{equation} \label{eq:sparse_dense_single}
\begin{split}
\left(\frac{b}{c}\right)^{\ast} &= 
\begin{cases}
-\left(1-a\right) N & N\rightarrow \infty , \\
& \\
\frac{- 3N^{9} - 40N^{8} - 204N^{7} - 848N^{6} - 2464N^{5} + 1920N^{4} + 15872N^{3} - 40960N^{2} + 24576N}{6N^{8} - 64N^{7} - 520N^{6} - 1232N^{5} + 3872N^{4} + 6272N^{3} - 24320N^{2} + 22528N + 4096}
& a=\frac{1}{2} .
\end{cases} 
\end{split}
\end{equation}

\subsection*{Acknowledgements}
This work is supported by the Simons Foundation (Math+X Grant to the University of Pennsylvania). J.B.P. acknowledges generous support by the John Templeton Foundation (grant no. 62281), and the David \& Lucille Packard Foundation.

\clearpage

\makeatletter
\@fpsep\textheight
\makeatother

\setcounter{figure}{0}
\renewcommand{\figurename}{Supplementary Figure}

\begin{figure*}[!h]
\centering
\includegraphics[width=0.5\textwidth]{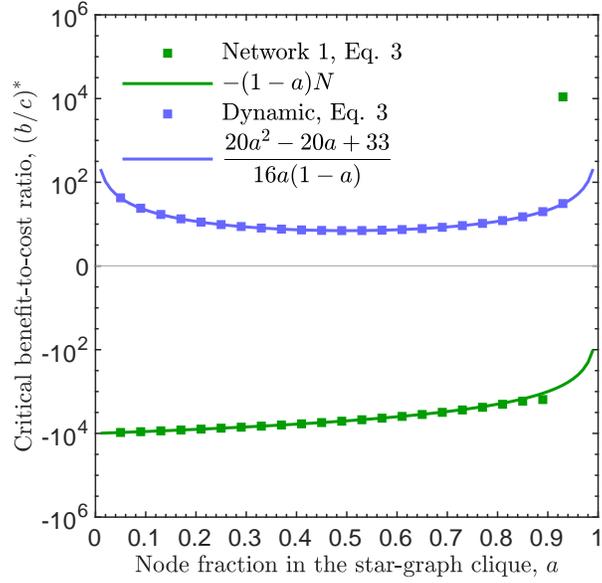}
\caption{ \textbf{The cooperation-promoting effects of structure transitions as the sizes of the two communities vary.} The dynamic network is illustrated in Figure 2\textbf{a}, with a fraction $a$ (resp. $1-a$) of nodes in the top (resp. bottom) community. The critical benefit-to-cost ratio, $\left(b/c\right)^{\ast}$, is shown as a function of $a$. The dots are the results of numerical calculations with $N=10{,}000$ and the lines are analytical approximations for sufficiently large $N$. The rescaled duration is $t=1$.\label{fig:ED_Fig1}}
\end{figure*}

\newpage

\newpage
\begin{figure*}
\centering
\includegraphics[width=1.0\textwidth]{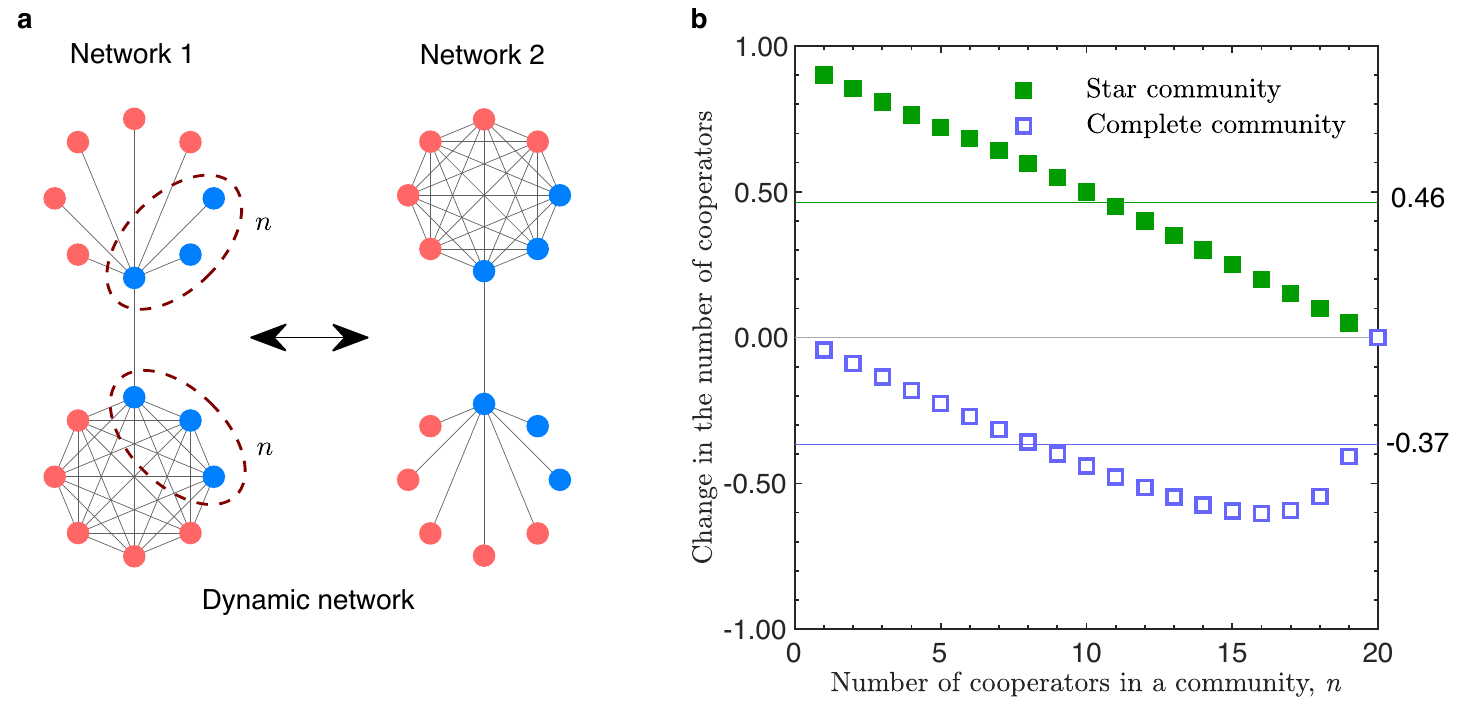}
\caption {{\textbf{A qualitative analysis of cooperator expansion in the star community 
and cooperator reduction in the complete community.}}
\textbf{a}, We consider a configuration where the hub node and $n-1$ other nodes are cooperators in each community, and the rest are defectors.
The fecundity of the hub node in the star community is given by $F_{\text{star},\text{hub}}=\exp\left[\delta\left(nb-Nc/2\right)\right]$. And the fecundity of a cooperator the other nodes is given by $F_{\text{star},C}=\exp\left[\delta\left(b-c\right)\right]$; whereas the fecundity of a defector node is given by $F_{\text{star},D}=\exp\left[\delta b\right]$). We can also derive expressions for fecundities in the complete community: $F_{\text{comp},\text{hub}}=\exp\left[\delta\left(nb-Nc/2\right)\right]$, $F_{\text{comp},C}=\exp\left[\delta\left((n-1)b-(N/2-1)c\right)\right]$, $F_{\text{comp},D}=\exp\left[\delta nb\right]$. Finally, we can calculate the expected change in the number of cooperators in each community, denoted by $\Delta_{\text{star}}$ in the star community and $\Delta_{\text{comp}}$ in the complete community.
\textbf{b}, The expected change in the number of cooperators in the star community and in the complete community, for across the full range sub-graph sizes, $n$.
Dots are obtained by $\Delta_{\text{star}}$ and $\Delta_{\text{comp}}$.
Horizontal lines mark the average change across all possibilities of $n$.
The increase in the number of cooperators in the star community exceeds the decrease in the number of cooperators in the complete community.
Parameters: $N=40$, $\delta=0.1$, $b=10$, $c=1$.
\label{fig:Assumption}}
\end{figure*}

\newpage

\begin{figure*}[!h]
\centering
\includegraphics[width=1\textwidth]{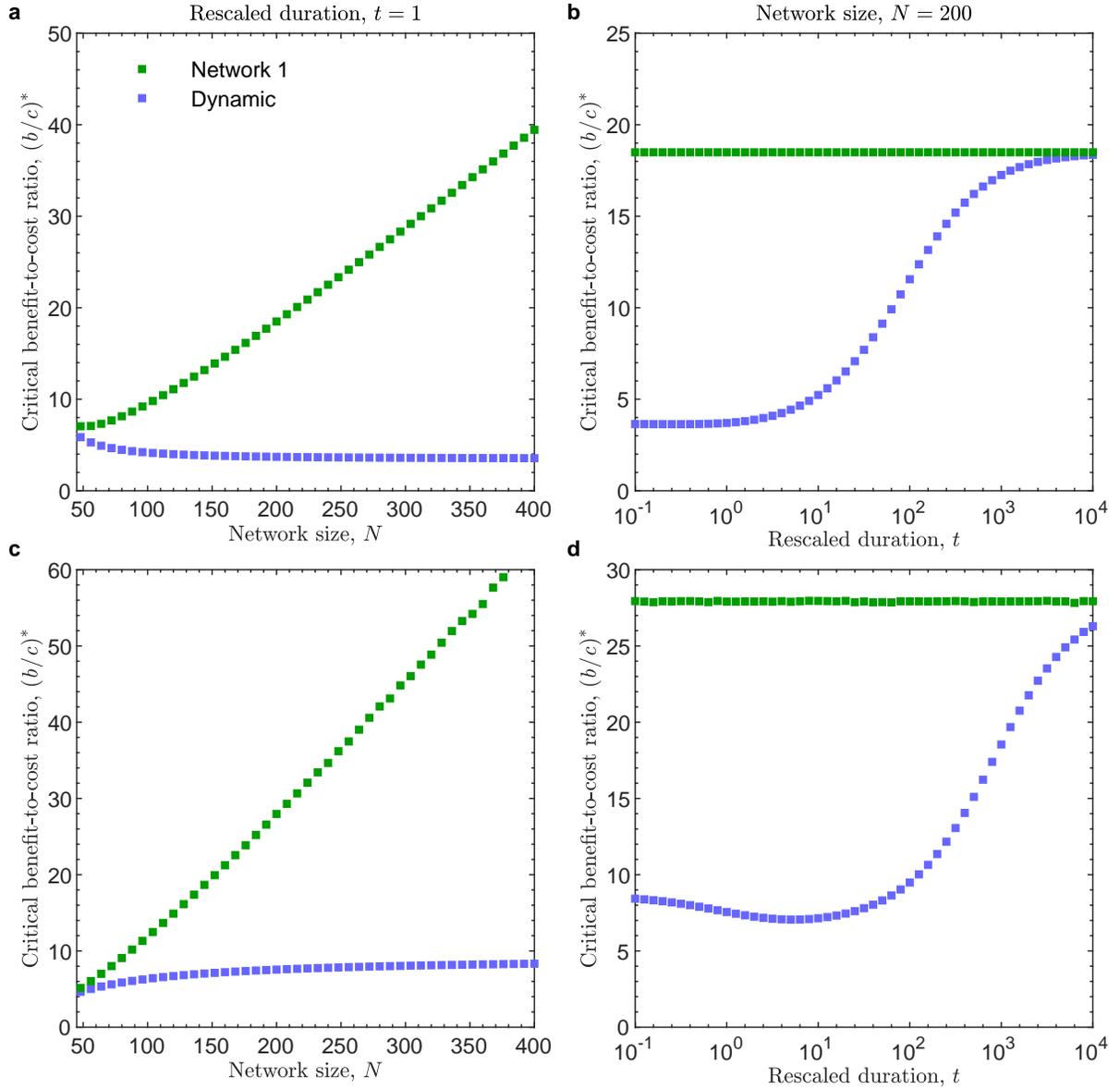}
\caption{\textbf{Cooperation-promoting effects of dynamic multi-community networks.} We consider networks made up of eight communities connected via hub nodes (see Figure 5\textbf{a}; panels \textbf{a} and \textbf{b} here) and via leaf nodes (see Figure 5\textbf{b}; panels \textbf{c} and \textbf{d} here). \textbf{a,c}, The critical ratio $\left(b/c\right)^{\ast}$ as a function of population size $N$, for the rescaled duration $t=1$. \textbf{b,d}, The critical ratio $\left(b/c\right)^{\ast}$ as a function of the rescaled duration $t$, for $N=200$.\label{fig:multi_community_size}}
\end{figure*}

\newpage

\begin{figure*}[!h]
\centering
\includegraphics[width=1\textwidth]{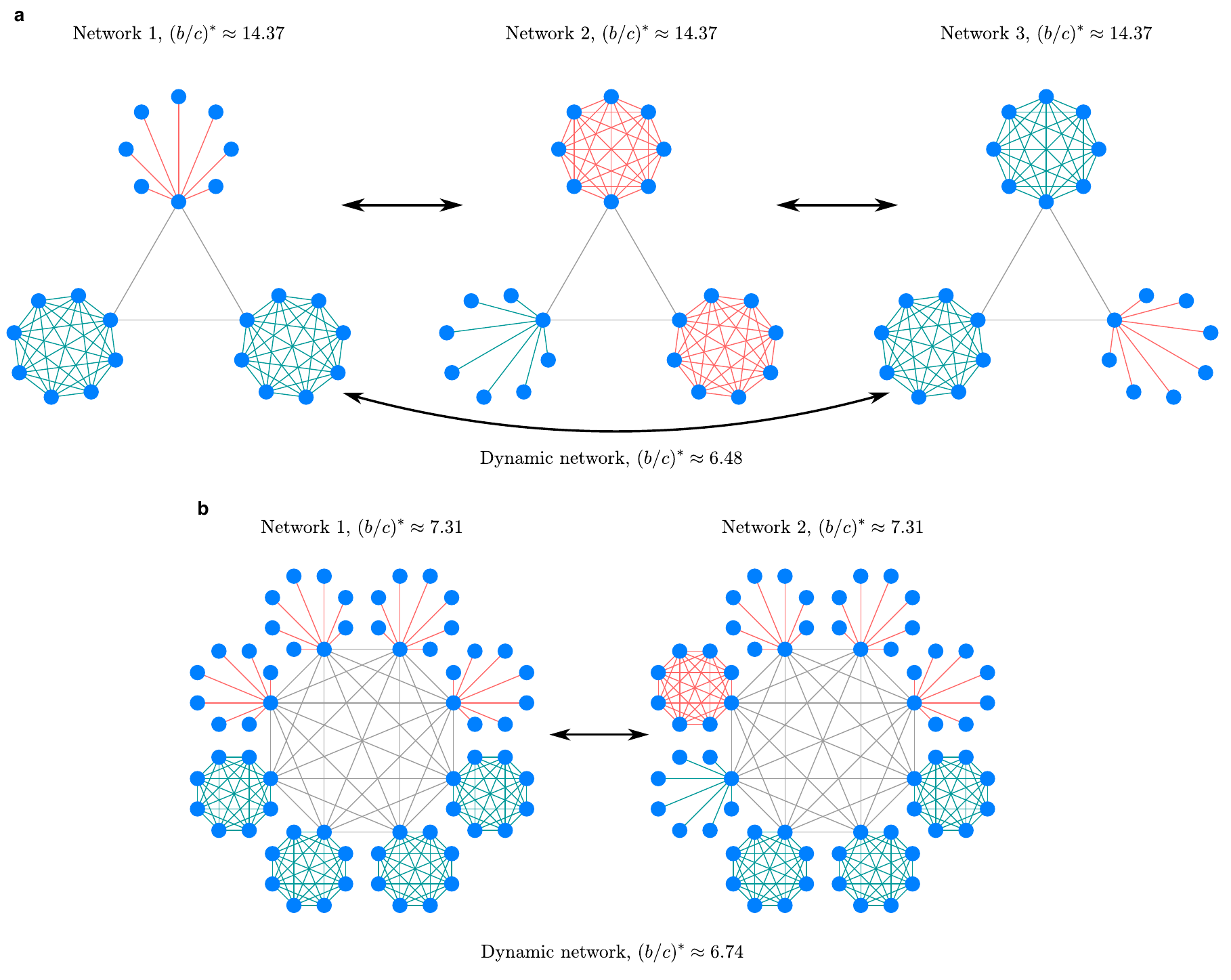}
\caption{\textbf{Cooperation-promoting effects of structure transitions among more than two networks, and when networks differ in a small fraction of connections.} \textbf{a}, Structure transitions among three networks. Every network transitions to another network with probability $1/\left(2tN\right)$ and remains unchanged otherwise. \textbf{b}, Structure transitions between multi-community networks in which the two networks differ in only two communities. We take $N=150$ in \textbf{a} and $N=64$ in \textbf{b}, and the rescaled duration is $t=1$.\label{fig:similar_networks}}
\end{figure*}

\newpage

\begin{figure*}[!h]
\centering
\includegraphics[width=1\textwidth]{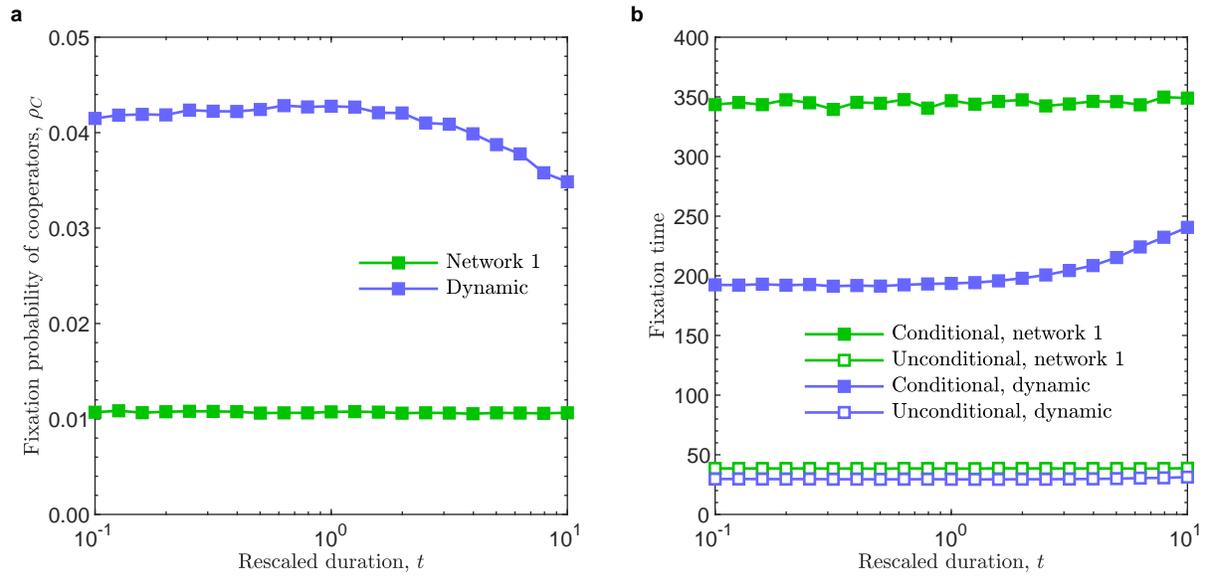}
\caption{\textbf{Dynamic networks promote and accelerate the fixation of cooperators.} We consider the network with a star community and a complete-graph community with $N=16$ and $a=0.5$ (see Figure 2\textbf{a}). \textbf{a}, Fixation probability of cooperators as a function of the rescaled duration, $t$, in network $1$ and in the dynamic network. The dynamic network leads to the larger fixation probability of cooperators than in network $1$. \textbf{b}, Conditional and unconditional fixation times as functions of the rescaled duration, $t$. Both the conditional and unconditional times in the dynamic networks are smaller than in network $1$.We take selection intensity $\delta =0.1$. \label{fig:time}}
\end{figure*}

\newpage

\begin{figure*}
\centering
\includegraphics[width=0.6\textwidth]{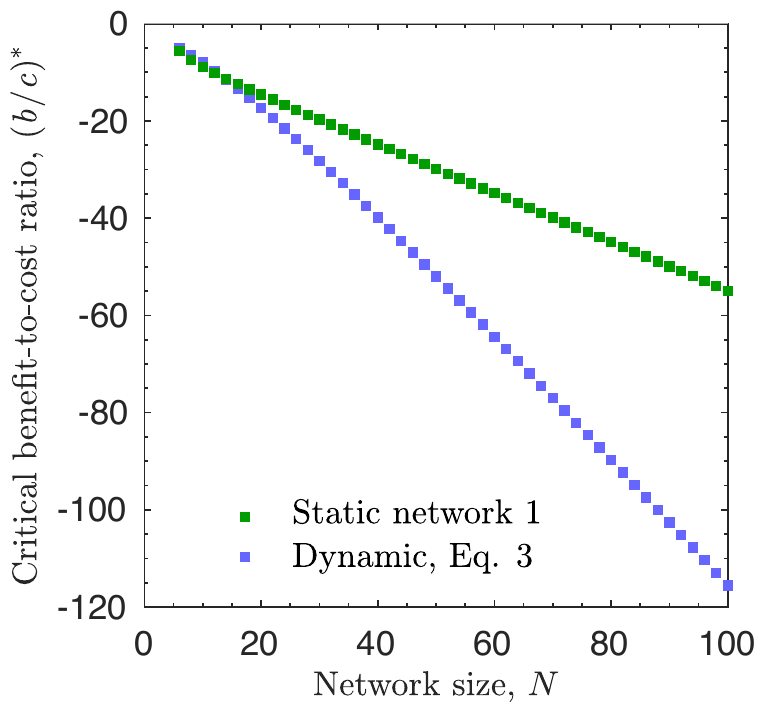}
\caption {{\bf{Dynamic structures can inhibit spite for a broad range of population sizes, under birth-death updating.} }
We consider transitions between the two networks shown in Figure 2{\bf{a}} in the main text, each composed of a sparse community and a dense community. 
The figure plots the critical benefit-to-cost ratio for cooperation as a function of population size, $N$, for $a=0.5$ and $t=1$.
Under birth-death updating, spite rather than cooperation can evolve, as indicated by a negative critical ratio $(b/c)^*$.
These dynamic networks, compared with static networks,  reduce the magnitude of the critical ratio, thus making spiteful behavior harder to evolve.
\label{fig:birth_death}}
\end{figure*}

\newpage

\begin{figure*}
\centering
\includegraphics[width=0.6\textwidth]{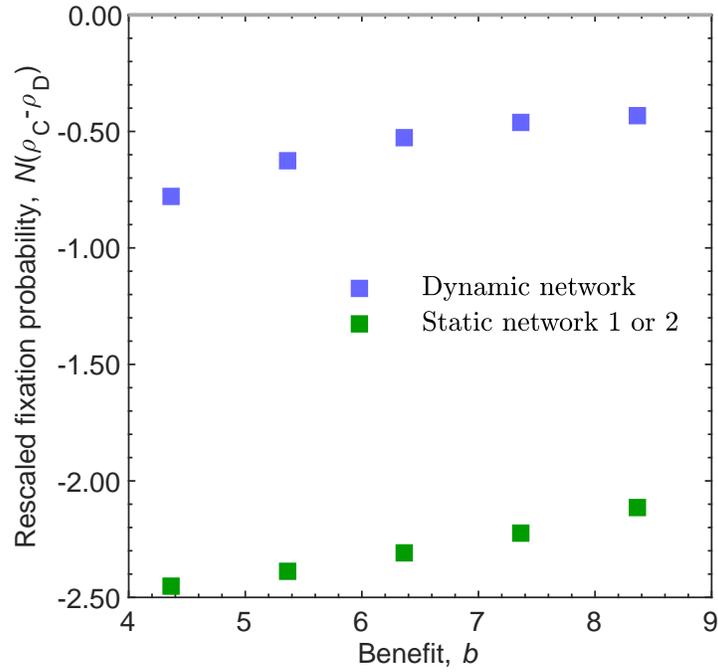}
\caption {{\textbf{Dynamic structures can facilitate cooperation under strong selection.} 
We consider transitions between the two networks shown in Figure 2{\bf{a}} in the main text, each composed of a sparse community and a dense community. 
The figure plots the fixation probability of cooperation versus defection, $\rho_{C}-\rho_{D}$, as a function of the benefit $b$ in the donation game, for $N=12$, $a=0.5$, and $t=1$.
These dynamic networks are more beneficial than static networks, for the evolution of cooperation. Selection strength: $\delta=0.1$.
}
\label{fig:selection_strength}}
\end{figure*}

\clearpage

\begin{center}
\Large\textbf{Supplementary Information}
\end{center}

\setcounter{equation}{0}
\setcounter{figure}{0}
\setcounter{section}{0}
\setcounter{table}{0}
\renewcommand{\thesection}{SI.\arabic{section}}
\renewcommand{\theequation}{SI.\arabic{equation}}
\renewcommand{\thefigure}{SI.\arabic{figure}}

\section{Modeling evolution on dynamic networks}
\label{sec:assumptions}
We consider a population of $N$ individuals (labeled $\mathcal{N}=\left\{1,2,\dots ,N\right\}$), residing at any point in time on one of $L$ structures (labeled $\mathcal{L}=\left\{1,2,\dots ,L\right\}$). Implicitly, this means that each of these $L$ structures is a network on $N$ nodes, although each network need not be connected, and some nodes can be isolated. Each individual has type $A$ or $B$, and the state of population is tracked by a pair $\left(\vx ,\beta\right)\in\left\{0,1\right\}^{\mathcal{N}}\times\mathcal{L}$, where $x_{i}=1$ means $i$ has type $A$ and $x_{i}=0$ means $i$ has type $B$.

At each time step, a set of individuals to be replaced, $R\subseteq\mathcal{N}$, is chosen, together with an offspring-to-parent map, $\alpha :R\rightarrow\mathcal{N}$. Let $p_{\left(R,\alpha\right)}\left(\vx ,\beta\right)$ denote the probability of replacement event $\left(R,\alpha\right)$ in state $\left(\vx ,\beta\right)$. Once $\left(R,\alpha\right)$ is chosen, the type configuration, $\vx$, is updated to $\vy$, where $y_{i}=x_{\alpha\left(i\right)}$ if $i\in R$ and $y_{i}=x_{i}$ if $i\not\in R$. This update can be specified more succinctly using an extended mapping $\widetilde{\alpha}:\mathcal{N}\rightarrow\mathcal{N}$ defined by $\widetilde{\alpha}\left(j\right) =\alpha\left(j\right)$ if $j\in R$ and $\widetilde{\alpha}\left(j\right) =j$ if $j\not\in R$, which leads to the updated state $\vx_{\widetilde{\alpha}}$, where $\left(\vx_{\widetilde{\alpha}}\right)_{i}=x_{\widetilde{\alpha}\left(i\right)}$ for $i\in\mathcal{N}$. The network, $\beta$, is updated via a transition matrix, $Q=\left(q_{\beta\gamma}\right)_{\beta,\gamma\in\mathcal{L}}$, where $q_{\beta\gamma}$ is the probability of transitioning from network $\beta$ to network $\gamma$. An important feature of the model is that network transitions are independent of $\vx$; thus, the population structure is exogenous and not influenced by traits. We assume that $Q$ is irreducible, which guarantees that it has a unique stationary distribution, $\upsilon$.

We assume that for each replacement event, $\left(R,\alpha\right)$, type configuration, $\vx$, and network, $\beta$, the probability $p_{\left(R,\alpha\right)}\left(\vx ,\beta\right)$ is a smooth function of a selection intensity parameter, $\delta\geqslant 0$, in a small neighborhood of $\delta =0$. Moreover, when $\delta =0$ (``neutral drift''), we assume that $p_{\left(R,\alpha\right)}\left(\vx ,\beta\right)$ is independent of $\vx$ (but it can depend on $\beta$). We denote by $p_{\left(R,\alpha\right)}^{\circ}\left(\beta\right)$ the probability of choosing $\left(R,\alpha\right)$ under neutral drift. The chain defined by $Q$ does not depend on the selection intensity.

We also make the following assumption, which ensures that for every starting configuration and network, there exists at least one individual whose lineage can take over the population:
\begin{fixation}
For all network structures $\beta_{0}\in\mathcal{L}$, there exists a location $i\in\mathcal{N}$, an integer $m \geqslant 1$, and sequences of replacement events $\left\{\left(R_{k},\alpha_{k}\right)\right\}_{k=1}^{m}$ and networks $\left\{\beta_{k}\right\}_{k=1}^{m-1}$ for which
\begin{enumerate}

\item[\emph{(i)}] $p_{\left(R_{k},\alpha_{k}\right)}\left(\vx ,\beta_{k-1}\right) >0$ for every $k\in\left\{1,\dots ,m\right\}$ and $\vx\in\left\{0,1\right\}^{\mathcal{N}}$;

\item[\emph{(ii)}] $q_{\beta_{k-1}\beta_{k}}>0$ for every $k\in\left\{1,\dots ,m-1\right\}$;

\item[\emph{(iii)}] $i\in R_{k}$ for some $k\in\left\{1,\dots ,m\right\}$;

\item[\emph{(iv)}] $\widetilde{\alpha}_{1}\circ\widetilde{\alpha}_{2}\circ\cdots\circ\widetilde{\alpha}_{m}\left(j\right) =i$ for all locations $j\in\mathcal{N}$.

\end{enumerate}
\end{fixation}
These conditions are similar to those used by \citet{Allen2019} and \citet{McAvoy2021}, except here the Fixation Axiom is modified to account for dynamic networks. Regardless of the configuration of traits, we require the existence of a sequence of replacement maps and networks for the process to transition through, which have non-zero probability (conditions \emph{(i)} and \emph{(ii)}), such that all individuals can trace their ancestry to some $i$ under this sequence (condition \emph{(iii)}), with the additional requirement that this progenitor $i$ is not eternal (condition \emph{(iv)}). One implication is that no individual lives forever and that the process eventually reaches a state in which all individuals are identical by descent, and thus mutant traits can fix within the population (hence ``Fixation'' Axiom). We note that here it does not require each network to be connected.

Since there is no mutation of traits, all individuals must have the same type when they are identical by descent. The configurations $\vA\coloneqq\left(1,1,\dots ,1\right)$ and $\vB\coloneqq\left(0,0,\dots ,0\right)$ are the only absorbing configurations. (Note that while the configuration of types cannot leave $\vA$ or $\vB$, the state itself, which includes the network structure, can still change.) We denote by $\mathbb{B}^{\mathcal{N}}$ the set of all configurations, $\left\{0,1\right\}^{\mathcal{N}}$, and by $\mathbb{B}_{\intercal}^{\mathcal{N}}$ the set of all transient configurations, $\left\{0,1\right\}^{\mathcal{N}}-\left\{\vA ,\vB\right\}$. From the Fixation Axiom, we see that given any starting configuration-network pair, $\left(\vx ,\beta\right)\in\mathbb{B}^{\mathcal{N}}\times\mathcal{L}$, there is a well-defined probability, $\rho_{A}\left(\vx ,\beta\right)$ (resp. $\rho_{B}\left(\vx ,\beta\right)$), that the population eventually reaches the monomorphic state $\vA$ (resp. $\vB$). The behavior of these fixation probabilities (under weak selection, meaning $\delta \ll 1$) is the main focus of this study.

We follow the workflow proposed by \citet{McAvoy2021} for analyzing mutation-free evolutionary dynamics under weak selection. We first study the assortment of traits under neutral drift ($\delta =0$). Subsequently, we link these findings to the game using a martingale perturbation argument. We avoid reproducing the entire derivation in \citep{McAvoy2021}; instead, we highlight the main modifications to those arguments necessary to accommodate stochastic network transitions.

\section{Network-mediated reproductive value}
With the main assumptions in place, we now introduce some derived, demographic quantities that we will refer to throughout the analysis of the model. If the population is in state $\left(\vx ,\beta\right)$, then the marginal probability that $i$ produces an offspring that replaces $j$ in the next update is
\begin{align}
e_{ij}\left(\vx ,\beta\right) &\coloneqq \sum_{\substack{\left(R,\alpha\right) \\ j\in R,\ \alpha\left(j\right) =i}} p_{\left(R,\alpha\right)}\left(\vx ,\beta\right) .
\end{align}
The expected change in the abundance of $A$ in state $\left(\vx ,\beta\right)$ can be expressed as
\begin{align}
\del\left(\vx ,\beta\right) &\coloneqq \sum_{i\in\mathcal{N}} x_{i} \sum_{j\in\mathcal{N}}e_{ij}\left(\vx ,\beta\right) + \sum_{i\in\mathcal{N}}x_{i}\left(1-\sum_{j\in\mathcal{N}}e_{ji}\left(\vx ,\beta\right)\right) -\sum_{i\in\mathcal{N}}x_{i} \nonumber \\
&= \sum_{i,j\in\mathcal{N}} e_{ji}\left(\vx ,\beta\right)\left(x_{j}-x_{i}\right) .
\end{align}

One inconvenient aspect of dealing with the true abundance of $A$ is that it is generally not a martingale under neutral drift. This property is well-known even in models without dynamic structure \citep{Allen2019} and it necessitates working with a weighted frequency instead. The notion of reproductive value, which can be (informally) interpreted as the expected contribution of an individual to future generations, turns out to give the proper weighting. For our purposes, we interpret the reproductive value of $i\in\mathcal{N}$ as the probability that, under neutral drift, $i$ generates a lineage that eventually takes over the population. Because our interest is in fixation probabilities in the first place, it is not surprising that such a quantity should appear. This quantity depends on the network structure, but it is independent of the type configuration due to the drift assumption.

Formally, we define the reproductive value of $i$ in network $\beta$, denoted $\pi_{i}^{\left[\beta\right]}$, to be the probability that under neutral drift and starting in structure $\beta$, a mutant in node $i$ eventually takes over the whole population. Let $e_{ij}^{\circ}\left(\beta\right)$ denote the probability, that under neutral drift and in structure $\beta$, individual $i$ spreads her strategy to $j$. A one-step analysis of the neutral Markov chain gives
\begin{subequations}\label{eq:reproductive_value}
\begin{align}
\pi_{i}^{\left[\beta\right]} &= \sum_{j\in\mathcal{N}}e_{ij}^{\circ}\left(\beta\right) \sum_{\gamma\in\mathcal{L}}q_{\beta\gamma}\pi_{j}^{\left[\gamma\right]}+\left(1-\sum_{j\in \mathcal{N}}e_{ji}^{\circ}\left(\beta\right)\right)\sum_{\gamma\in \mathcal{L}}q_{\beta\gamma}\pi_{i}^{\left[\gamma\right]} ; \label{eq:reproductive_value_recurrence} \\
\sum_{i\in \mathcal{N}}\pi_{i}^{\left[\beta\right]} &= 1
\end{align}
\end{subequations}
for all $i\in\mathcal{N}$ and $\beta\in\mathcal{L}$. There is one point of subtlety in relation to reproductive value on static networks, which relates to the normalization condition $\sum_{i\in\mathcal{N}}\pi_{i}^{\left[\beta\right]}=1$ for all $\beta\in\mathcal{L}$. The Fixation Axiom guarantees that there is a unique $\pi$ satisfying \eq{reproductive_value_recurrence} up to a scalar multiple. In this case, for any fixed $C\in\mathbb{R}$, requiring $\sum_{i\in\mathcal{N}}\sum_{\beta\in\mathcal{L}}\pi_{i}^{\left[\beta\right]}=C$ yields a unique solution to \eq{reproductive_value_recurrence}. Summing both sides of \eq{reproductive_value_recurrence} over $i\in\mathcal{N}$ yields $\sum_{i\in\mathcal{N}}\pi_{i}^{\left[\beta\right]}=\sum_{\gamma\in\mathcal{L}}q_{\beta\gamma}\sum_{i\in\mathcal{N}}\pi_{i}^{\left[\gamma\right]}$. Since the chain $Q$ is irreducible, it follows that $\sum_{i\in\mathcal{N}}\pi_{i}^{\left[\beta\right]}$ is independent of $\beta\in\mathcal{L}$, and thus it must be equal to $C/L$. Therefore, asserting that $\sum_{i\in\mathcal{N}}\sum_{\beta\in\mathcal{L}}\pi_{i}^{\left[\beta\right]}=L$ is equivalent to the requirement that $\sum_{i\in\mathcal{N}}\pi_{i}^{\left[\beta\right]}=1$ for all $\beta\in\mathcal{L}$. As a result, $\pi$, which we refer to as \emph{network-mediated reproductive value} due to its dependence on network transitions, is uniquely defined by \eq{reproductive_value}.

Finally, the change in $\sum_{i\in\mathcal{N}}\pi_{i}^{\left[\beta\right]}x_{i}$, the $\pi$-weighted abundance of $A$, is
\begin{align}
\delhat\left(\vx,\beta\right) &= \sum_{i\in\mathcal{N}} x_{i}\sum_{j\in\mathcal{N}}e_{ij}\left(\vx ,\beta\right) \sum_{\gamma\in\mathcal{L}}q_{\beta\gamma} \pi_{j}^{\left[\gamma\right]} \nonumber \\
&\quad + \sum_{i\in\mathcal{N}}x_{i}\left(1-\sum_{j\in\mathcal{N}}e_{ji}\left(\vx ,\beta\right)\right)\sum_{\gamma\in\mathcal{L}}q_{\beta\gamma}\pi_{i}^{\left[\gamma\right]} -\sum_{i\in\mathcal{N}}\pi_{i}^{\left[\beta\right]}x_{i} \nonumber \\
&= \sum_{i,j\in\mathcal{N}} e_{ji}\left(\vx ,\beta\right) \sum_{\gamma\in\mathcal{L}} q_{\beta\gamma} \pi_{i}^{\left[\gamma\right]} \left(x_{j}-x_{i}\right) + \sum_{i\in\mathcal{N}}x_{i}\left(\sum_{\gamma\in\mathcal{L}}q_{\beta\gamma}\pi_{i}^{\left[\gamma\right]}-\pi_{i}^{\left[\beta\right]}\right) .
\end{align}
It follows from \eq{reproductive_value} that, under neutral drift, $\delhat^{\circ}\left(\vx ,\beta\right) =0$, for all $\vx\in\mathbb{B}^{\mathcal{N}}$ and $\beta\in\mathcal{L}$. This property will play a key role in our subsequent weak-selection analysis of the process (\eq{rhoA_derivative}).

\section{A mutation-modified evolutionary process}
The process under consideration is mutation-free. However, following Ref.~\citep{McAvoy2021}, in order to get an idea of the assortment of types prior to hitting an absorbing configuration, it is convenient to introduce an artificial mutation that makes the chain ergodic and gives it a unique stationary distribution. The idea is to choose a state $\left(\vz ,\lambda\right)$ with $\vz\in\mathbb{B}_{\intercal}^{\mathcal{N}}$, and let mutations bring absorbing configurations into $\left(\vz ,\lambda\right)$ with some small probability $u>0$. If $P_{\left(\vx ,\beta\right)\rightarrow\left(\vy ,\gamma\right)}$ denotes the probability of transitioning from $\left(\vx ,\beta\right)$ to $\left(\vy ,\gamma\right)$ in the original (mutation-free) chain over the course of one time step, then the transition probabilities for the mutation-modified chain are given by
\begin{align} \label{eq:MSS_chain_p}
P_{\left(\vx ,\beta\right)\rightarrow\left(\vy ,\gamma\right)}^{\MSS} &= 
\begin{cases}
u & \vx\in\left\{\vA,\vB\right\} ,\ \left(\vy ,\gamma\right) =\left(\vz ,\lambda\right) , \\
& \\
\left(1-u\right) P_{\left(\vx ,\beta\right)\rightarrow\left(\vy ,\gamma\right)} & \vx\in\left\{\vA ,\vB\right\} ,\ \left(\vy ,\gamma\right)\neq\left(\vz ,\lambda\right) , \\
& \\
P_{\left(\vx ,\beta\right)\rightarrow\left(\vy ,\gamma\right)} & \vx\not\in\left\{\vA, \vB\right\} .
\end{cases}
\end{align}
As a result of the Fixation Axiom, there is a unique stationary distribution, $\pi_{\MSS}$, such that
\begin{align}
\pi_{\MSS}^{\circ}\left(\vx ,\beta\right) &= \sum_{\gamma\in\mathcal{L}} \left(\pi_{\MSS}^{\circ}\left(\vA ,\gamma\right) P_{\left(\vA ,\gamma\right)\rightarrow\left(\vx ,\beta\right)}^{\MSS}+\pi_{\MSS}^{\circ}\left(\vB ,\gamma\right) P_{\left(\vB ,\gamma\right)\rightarrow\left(\vx ,\beta\right)}^{\MSS}\right) \nonumber \\
&\quad + \sum_{\vy\in\mathbb{B}_{\intercal}^{\mathcal{N}}} \sum_{\gamma\in\mathcal{L}} \pi_{\MSS}^{\circ}\left(\vy ,\gamma\right) P_{\left(\vy ,\gamma\right)\rightarrow\left(\vx ,\beta\right)}^{\MSS} \nonumber \\
&= \sum_{\gamma\in\mathcal{L}} \pi_{\MSS}^{\circ}\left(\vA ,\gamma\right) \left( u\delta_{\vz ,\vx}\delta_{\lambda ,\beta} +\left(1-u\right) \delta_{\vA ,\vx}q_{\gamma\beta}\right) \nonumber \\
&\quad +\sum_{\gamma\in\mathcal{L}} \pi_{\MSS}^{\circ}\left(\vB ,\gamma\right)\left( u\delta_{\vz ,\vx}\delta_{\lambda ,\beta} +\left(1-u\right) \delta_{\vB ,\vx}q_{\gamma\beta}\right) \nonumber \\
&\quad + \sum_{\vy\in\mathbb{B}_{\intercal}^{\mathcal{N}}} \sum_{\gamma\in\mathcal{L}} \pi_{\MSS}^{\circ}\left(\vy ,\gamma\right) P_{\left(\vy ,\gamma\right)\rightarrow\left(\vx ,\beta\right)} \label{eq:MSS_recurrence}
\end{align}
for all $\vx\in\mathbb{B}$ and $\beta\in\mathcal{L}$.

In one step after state $\left(\vx,\beta\right)$, the expected change in the $\pi$-weighted abundance of $A$ is 
\begin{align}
\delhat_{\MSS}\left(\vx,\beta\right) &= 
\begin{cases}
-u\left(1-\sum_{i\in\mathcal{N}}\pi_{i}^{\left[\lambda\right]}z_{i}\right) & \vx =\vA , \\
& \\
u\sum_{i\in\mathcal{N}}\pi_{i}^{\left[\lambda\right]}z_{i} & \vx =\vB , \\
& \\
\delhat\left(\vx,\beta\right) & \vx\not\in\left\{\vA ,\vB\right\} .
\end{cases}
\end{align}
Averaging this expected change over the stationary distribution of the modified chain gives
\begin{align}
0 &= \E_{\MSS}\left[\delhat_{\MSS}\right] \nonumber \\
&= \E_{\MSS}\left[\delhat\right] -u\sum_{\beta\in \mathcal{L}}\pi_{\MSS}\left(\vA,\beta\right)\left(1-\sum_{i\in\mathcal{N}}\pi_{i}^{\left[\lambda\right]}z_{i}\right) \nonumber \\
&\quad +u\sum_{\beta\in \mathcal{L}}\pi_{\MSS}\left(\vB,\beta\right)\sum_{i\in\mathcal{N}}\pi_{i}^{\left[\lambda\right]}z_{i} . \label{eq:MSS_chain_Delta}
\end{align}
Owing to a result of \citet{fudenberg:JET:2006}, we know that, in the low-mutation limit,
\begin{subequations}
\begin{align}
\lim_{u\rightarrow 0}\sum_{\beta\in\mathcal{L}}\pi_{\MSS}\left(\vA ,\beta\right) &= \rho_{A}\left(\vz ,\lambda\right) ; \\
\lim_{u\rightarrow 0}\sum_{\beta\in\mathcal{L}}\pi_{\MSS}\left(\vB ,\beta\right) &= \rho_{B}\left(\vz ,\lambda\right) .
\end{align}
\end{subequations}
Therefore, taking the derivative of both sides of \eq{MSS_chain_Delta} with respect to $u$ at $u=0$ gives
\begin{align}
\rho_{A}\left(\vz ,\lambda\right) &= \sum_{i\in\mathcal{N}}\pi_{i}^{\left[\lambda\right]}z_{i}+\frac{d}{du}\Bigg\vert_{u=0}\mathbb{E}_{\MSS}\left[\delhat\right] .
\end{align}

Let $\left<\cdot\right>_{\left(\vz ,\lambda\right)}\coloneqq\frac{d}{du}\Big\vert_{u=0}\mathbb{E}_{\MSS}\left[\cdot\right]$. By the argument given in Ref.~\citep[][Corollary 1]{McAvoy2021}, we see that for any function $\varphi :\mathbb{B}^{\mathcal{N}}\times\mathcal{L}\rightarrow\mathbb{R}$ satisfying $\varphi\left(\vA ,\beta\right) =\varphi\left(\vB ,\beta\right) =0$ for all $\beta\in\mathcal{L}$,
\begin{align}
\left<\varphi\right>_{\left(\vz ,\lambda\right)} &= \sum_{t=0}^{\infty} \mathbb{E}\left[ \varphi\left(\vx^{t},\beta^{t}\right)\mid\left(\vx^{0},\beta^{0}\right) =\left(\vz ,\lambda\right)\right] , \label{eq:MSS_derivative_sum}
\end{align}
where the summation on the right-hand side converges absolutely. In particular, this equation holds for the expected change in the $\pi$-weighted abundance of $A$, $\varphi =\delhat$. Since we also have
\begin{align}
\frac{d}{d\delta}\Bigg\vert_{\delta =0} e_{ij}\left(\vx,\beta\right) &= \sum_{I\subseteq \mathcal{N}} c_{I}^{ij}\left(\beta\right) \vx_{I} \label{eq:eji}
\end{align}
for unique coefficients $c_{I}^{ij}\left(\beta\right)$, where $\vx_{I}\coloneqq\prod_{i\in I}x_{i}$, it follows that
\begin{align}
\frac{d}{d\delta}\Bigg\vert_{\delta =0}\rho_{A}\left(\vz ,\lambda\right) &= \frac{d}{d\delta}\Bigg\vert_{\delta =0}\left<\delhat\right>_{\left(\vz ,\lambda\right)} \nonumber \\
&= \left<\frac{d}{d\delta}\Bigg\vert_{\delta =0}\delhat\right>_{\left(\vz ,\lambda\right)}^{\circ} \nonumber \\
&= \left<\frac{d}{d\delta}\Bigg\vert_{\delta =0}\sum_{i,j\in\mathcal{N}} e_{ji}\left(\vx ,\beta\right) \sum_{\gamma\in\mathcal{L}} q_{\beta\gamma} \pi_{i}^{\left[\gamma\right]} \left(x_{j}-x_{i}\right)\right>_{\left(\vz ,\lambda\right)}^{\circ} \nonumber \\
&= \sum_{i,j\in\mathcal{N}} \sum_{I\subseteq\mathcal{N}} \left<c_{I}^{ji}\left(\beta\right) \sum_{\gamma\in\mathcal{L}} q_{\beta\gamma} \pi_{i}^{\left[\gamma\right]} \left(\vx_{I\cup\left\{j\right\}}-\vx_{I\cup\left\{i\right\}}\right)\right>_{\left(\vz ,\lambda\right)}^{\circ} , \label{eq:rhoA_derivative}
\end{align}
where the interchange of the two limits is possible due to \eq{MSS_derivative_sum} and the absolute convergence of its summation. The second line of \eq{rhoA_derivative} is where we use the fact that $\delhat^{0}\left(\vx ,\beta\right) =0$ for all $\vx\in\mathbb{B}^{\mathcal{N}}$ and $\beta\in\mathcal{L}$, highlighting the importance of network-mediated reproductive value.

As a result of these calculations, what remains in order to understand the first-order effects of selection on a mutant type's fixation probability is an analysis of the neutral operator $\left<\cdot\right>_{\left(\vz ,\lambda\right)}^{\circ}$.

\section{Analysis of neutral drift}
Throughout this section, we denote the stationary distribution of the structure-transition chain, $Q$, by $\upsilon$. We also suppress either the configuration or the network when we marginalize. For example, we write $\pi_{\MSS}\left(\vx\right)$ for $\sum_{\beta\in\mathcal{L}}\pi_{\MSS}\left(\vx ,\beta\right)$ and $\pi_{\MSS}\left(\beta\right)$ for $\sum_{\vx\in\mathbb{B}^{\mathcal{N}}}\pi_{\MSS}\left(\vx ,\beta\right)$.

In the limit of low mutation, we know $\pi_{\MSS}^{\circ}\left(\vA\right)$ converges to $\rho_{A}^{\circ}\left(\vz ,\lambda\right)$ and $\pi_{\MSS}^{\circ}\left(\vB\right)$ converges to $\rho_{B}^{\circ}\left(\vz ,\lambda\right)$. The following lemma is a slightly stronger version of this result:
\begin{lemma}\label{lemma:rho_low_u}
For all networks $\beta\in\mathcal{L}$,
\begin{subequations}
\begin{align}
\lim_{u\rightarrow 0} \pi_{\MSS}^{\circ}\left(\vA ,\beta\right) &= \rho_{A}^{\circ}\left(\vz ,\lambda\right)\upsilon\left(\beta\right) ; \\
\lim_{u\rightarrow 0} \pi_{\MSS}^{\circ}\left(\vB ,\beta\right) &= \rho_{B}^{\circ}\left(\vz ,\lambda\right)\upsilon\left(\beta\right) .
\end{align}
\end{subequations}
\end{lemma}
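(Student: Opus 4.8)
The plan is to exploit the structure of the mutation-modified stationary equation \eqref{eq:MSS_recurrence} in the low-mutation limit, combined with the fact that the absorbing configurations $\vA$ and $\vB$ can only transition among themselves (via the network-transition matrix $Q$) until a mutation event returns the process to $\left(\vz,\lambda\right)$. First I would observe that under neutral drift the type configuration stays fixed at $\vA$ (resp. $\vB$) once it is reached, while the network component continues to evolve according to $Q$. Consequently, conditioned on being in the $\vA$-block, the marginal distribution over networks $\beta$ is governed \emph{solely} by the chain $Q$, which is irreducible with unique stationary distribution $\upsilon$. The content of the lemma is that, as $u\to 0$, the within-block network distribution relaxes to $\upsilon$, decoupling from the overall mass $\rho_A^\circ$ that the block carries.

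The key steps, in order, are as follows. First I would restrict \eqref{eq:MSS_recurrence} to the case $\vx=\vA$, obtaining a relation that expresses $\pi_{\MSS}^{\circ}\left(\vA,\beta\right)$ in terms of $\left(1-u\right)\sum_{\gamma}\pi_{\MSS}^{\circ}\left(\vA,\gamma\right) q_{\gamma\beta}$ plus terms of order $u$ coming from mutation inflow (which land on $\left(\vz,\lambda\right)$, not on $\left(\vA,\beta\right)$ for transient $\vz$) and from transient configurations flowing into $\vA$. Writing $f_{u}\left(\beta\right)\coloneqq\pi_{\MSS}^{\circ}\left(\vA,\beta\right)$, this says that $f_{u}$ nearly satisfies the stationarity equation $f=f Q$ up to corrections of order $u$. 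Next I would take $u\to 0$: the total mass $\sum_{\beta} f_{u}\left(\beta\right)=\pi_{\MSS}^{\circ}\left(\vA\right)$ converges to $\rho_A^\circ\left(\vz,\lambda\right)$ by the Fudenberg--Imhof result already invoked in the text, and any limit point $f_0$ of the vector $\left(f_u\left(\beta\right)\right)_\beta$ must satisfy $f_0 = f_0 Q$. By irreducibility of $Q$, the only nonnegative solutions of $f_0=f_0 Q$ are scalar multiples of $\upsilon$, so $f_0=\rho_A^\circ\left(\vz,\lambda\right)\upsilon$, which is exactly the claim. The argument for $\vB$ is identical with $\rho_B^\circ$ replacing $\rho_A^\circ$.

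The main obstacle is making the limiting interchange rigorous: one must justify that the order-$u$ inflow terms genuinely vanish relative to the leading balance and that the limit vector $f_0$ exists and is a genuine (normalized) stationary vector of $Q$ rather than merely an eigenvector from a spurious direction. I would handle this by a compactness argument on the simplex — the normalized vectors $f_u/\pi_{\MSS}^{\circ}\left(\vA\right)$ lie in a compact set, so a convergent subsequence exists, its limit solves $f=fQ$, and uniqueness of the stationary distribution of an irreducible chain forces every subsequential limit to equal $\upsilon$, giving convergence of the whole family. A secondary subtlety is confirming that mutation inflow into $\vA$ is $o\left(1\right)$ in the relevant sense: since mutations send absorbing states only to the transient state $\left(\vz,\lambda\right)$, the direct inflow to $\left(\vA,\beta\right)$ comes from transient configurations whose stationary mass itself scales like $u$ in the low-mutation limit, so these contributions are genuinely subleading and do not disturb the $Q$-balance. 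Once these two points are secured, the $Q$-stationarity and its unique solution $\upsilon$ deliver the result directly.
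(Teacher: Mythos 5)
Your proposal is correct and follows essentially the same route as the paper: setting $\vx=\vA$ in the stationary recurrence, noting that the transient-configuration inflow and the $O\left(u\right)$ mutation terms vanish in the limit so that the limiting vector satisfies the $Q$-stationarity equation, and then pinning down the constant of proportionality via the Fudenberg--Imhof identity $\lim_{u\rightarrow 0}\pi_{\MSS}^{\circ}\left(\vA\right) =\rho_{A}^{\circ}\left(\vz ,\lambda\right)$. Your added compactness argument on the simplex and the explicit justification that the transient mass is subleading are refinements of rigor that the paper leaves implicit, not a different method.
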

\begin{proof}
Letting $\vx =\vA$ in \eq{MSS_recurrence} and taking $u\rightarrow 0$ gives
\begin{align}
\lim_{u\rightarrow 0}\pi_{\MSS}^{\circ}\left(\vA ,\beta\right) &= \sum_{\gamma\in\mathcal{L}} \left( \lim_{u\rightarrow 0}\pi_{\MSS}^{\circ}\left(\vA ,\gamma\right) \right) q_{\gamma\beta} .
\end{align}
It follows that $\lim_{u\rightarrow 0}\pi_{\MSS}^{\circ}\left(\vA ,\beta\right)$ is proportional to $\upsilon\left(\beta\right)$, for all $\beta\in\mathcal{L}$. The constant of proportionality must be $\rho_{A}^{\circ}\left(\vz ,\lambda\right)$ due to the fact that $\lim_{u\rightarrow 0}\pi_{\MSS}^{\circ}\left(\vA\right) =\rho_{A}^{\circ}\left(\vz ,\lambda\right)$. The result for $\lim_{u\rightarrow 0} \pi_{\MSS}^{\circ}\left(\vB ,\beta\right)$ follows from analogous reasoning and is omitted here.
\end{proof}

\begin{remark}
Neutral fixation probabilities, $\rho_{A}^{\circ}\left(\vz ,\lambda\right)$ and $\rho_{B}^{\circ}\left(\vz ,\lambda\right)$, can be calculated using reproductive values and the identities $\rho_{A}^{\circ}\left(\vz ,\lambda\right) =\sum_{i\in\mathcal{N}}\pi_{i}^{\left[\lambda\right]}z_{i}$ and $\rho_{B}^{\circ}\left(\vz ,\lambda\right) =1-\sum_{i\in\mathcal{N}}\pi_{i}^{\left[\lambda\right]}z_{i}$.
\end{remark}

The following is an immediate consequence of Lemma~\ref{lemma:rho_low_u}:
\begin{corollary}
$\lim_{u\rightarrow 0}\pi_{\MSS}^{\circ}\left(\beta\right) =\upsilon\left(\beta\right)$.
\end{corollary}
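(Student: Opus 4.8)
The plan is to marginalize the stationary distribution of the mutation-modified chain over the type configuration, split off the two absorbing configurations from the transient ones, and then apply Lemma~\ref{lemma:rho_low_u} to the absorbing part. Writing $\pi_{\MSS}^{\circ}\left(\beta\right) = \pi_{\MSS}^{\circ}\left(\vA ,\beta\right) + \pi_{\MSS}^{\circ}\left(\vB ,\beta\right) + \sum_{\vx\in\mathbb{B}_{\intercal}^{\mathcal{N}}}\pi_{\MSS}^{\circ}\left(\vx ,\beta\right)$, the first two terms are controlled directly by the lemma, so the only genuine content is showing that the transient sum is negligible in the low-mutation limit.

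First I would argue that $\lim_{u\rightarrow 0}\sum_{\vx\in\mathbb{B}_{\intercal}^{\mathcal{N}}}\pi_{\MSS}^{\circ}\left(\vx ,\beta\right) = 0$. In the modified chain of \eq{MSS_chain_p}, a transient configuration can be entered only through a mutation event, which occurs with probability $u$ and moves an absorbing state into $\left(\vz ,\lambda\right)$; thereafter the Fixation Axiom guarantees almost-sure re-absorption in finite expected time, a quantity independent of $u$. Hence the total stationary mass on the transient set scales like $O\left(u\right)$ and vanishes as $u\rightarrow 0$. This is exactly the standard concentration of the stationary distribution on absorbing configurations in the low-mutation limit, which already underlies the result of \citet{fudenberg:JET:2006} invoked around \eq{MSS_chain_Delta}.

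Next I would apply Lemma~\ref{lemma:rho_low_u} to the two absorbing terms, obtaining $\lim_{u\rightarrow 0}\left[\pi_{\MSS}^{\circ}\left(\vA ,\beta\right) + \pi_{\MSS}^{\circ}\left(\vB ,\beta\right)\right] = \left(\rho_{A}^{\circ}\left(\vz ,\lambda\right) + \rho_{B}^{\circ}\left(\vz ,\lambda\right)\right)\upsilon\left(\beta\right)$. Since the neutral process is absorbed into either $\vA$ or $\vB$ with total probability one (equivalently, by the identities for $\rho_{A}^{\circ}$ and $\rho_{B}^{\circ}$ recorded in the Remark following Lemma~\ref{lemma:rho_low_u}), we have $\rho_{A}^{\circ}\left(\vz ,\lambda\right) + \rho_{B}^{\circ}\left(\vz ,\lambda\right) = 1$, so this limit equals $\upsilon\left(\beta\right)$. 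Combining this with the vanishing of the transient sum gives $\lim_{u\rightarrow 0}\pi_{\MSS}^{\circ}\left(\beta\right) = \upsilon\left(\beta\right)$.

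The only step requiring any care is the vanishing of the transient mass; everything else follows immediately from the lemma and the normalization $\rho_{A}^{\circ}+\rho_{B}^{\circ}=1$. Because the concentration on absorbing configurations is already implicit in the machinery leading to \eq{MSS_chain_Delta}, I expect this to be a one-line observation rather than a real obstacle, consistent with the statement that the corollary is an immediate consequence of Lemma~\ref{lemma:rho_low_u}.
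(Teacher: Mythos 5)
Your proposal is correct and matches the argument the paper intends when it calls the corollary an ``immediate consequence'' of Lemma~\ref{lemma:rho_low_u}: marginalize over configurations, note that the stationary mass on transient configurations vanishes as $u\rightarrow 0$ (which is already implicit in the Fudenberg--Imhof limit invoked at \eq{MSS_chain_Delta}, since $\lim_{u\rightarrow 0}\pi_{\MSS}^{\circ}\left(\vA\right) +\lim_{u\rightarrow 0}\pi_{\MSS}^{\circ}\left(\vB\right) =\rho_{A}^{\circ}\left(\vz ,\lambda\right) +\rho_{B}^{\circ}\left(\vz ,\lambda\right) =1$ forces the nonnegative transient total to zero), and then sum the two limits from the lemma. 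No gaps.
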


The next lemma establishes a recurrence for $\frac{d}{du}\Big\vert_{u=0}\pi_{\MSS}^{\circ}\left(\beta\right)$:
\begin{lemma}\label{lemma:du_pi_beta}
For every $\beta$, we have
\begin{align}
\frac{d}{du}\Bigg\vert_{u=0}\pi_{\MSS}^{\circ}\left(\beta\right) &= \delta_{\beta ,\lambda}-\upsilon\left(\beta\right) + \sum_{\gamma\in\mathcal{L}} \left(\frac{d}{du}\Bigg\vert_{u=0} \pi_{\MSS}^{\circ}\left(\gamma\right)\right) q_{\gamma\beta} . \label{eq:structure_derivative_recurrence}
\end{align}
\end{lemma}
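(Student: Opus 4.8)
The plan is to turn the stationarity identity \eq{MSS_recurrence} into a closed recurrence for the network-marginal $\pi_{\MSS}^{\circ}(\beta)=\sum_{\vx\in\mathbb{B}^{\mathcal{N}}}\pi_{\MSS}^{\circ}(\vx,\beta)$ by summing over all configurations $\vx$, and then to differentiate the resulting identity in $u$ at $u=0$. First I would sum \eq{MSS_recurrence} over $\vx\in\mathbb{B}^{\mathcal{N}}$. Each mutation contribution carries a factor $\sum_{\vx}\delta_{\vz,\vx}=1$, so $u\,\delta_{\vz,\vx}\delta_{\lambda,\beta}$ collapses to $u\,\delta_{\lambda,\beta}$; likewise $\sum_{\vx}\delta_{\vA,\vx}=\sum_{\vx}\delta_{\vB,\vx}=1$, so the absorbing pieces collapse to $(1-u)q_{\gamma\beta}$ times $\pi_{\MSS}^{\circ}(\vA,\gamma)$ and $\pi_{\MSS}^{\circ}(\vB,\gamma)$.

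The key step is the transient block $\sum_{\vx}\sum_{\vy\in\mathbb{B}_{\intercal}^{\mathcal{N}}}\sum_{\gamma}\pi_{\MSS}^{\circ}(\vy,\gamma)\,P_{(\vy,\gamma)\to(\vx,\beta)}$. Here I would invoke the modeling assumption that network transitions are exogenous and governed by $Q$ independently of the trait configuration, so that for any transient $\vy$ the one-step mutation-free kernel marginalizes to the structure-transition probability, $\sum_{\vx}P_{(\vy,\gamma)\to(\vx,\beta)}=q_{\gamma\beta}$. This rewrites the block as $\sum_{\gamma}q_{\gamma\beta}\sum_{\vy\in\mathbb{B}_{\intercal}^{\mathcal{N}}}\pi_{\MSS}^{\circ}(\vy,\gamma)$. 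Recombining the absorbing and transient contributions through the decomposition $\pi_{\MSS}^{\circ}(\gamma)=\pi_{\MSS}^{\circ}(\vA,\gamma)+\pi_{\MSS}^{\circ}(\vB,\gamma)+\sum_{\vy\in\mathbb{B}_{\intercal}^{\mathcal{N}}}\pi_{\MSS}^{\circ}(\vy,\gamma)$ yields the marginal stationarity identity
\[
\pi_{\MSS}^{\circ}(\beta)=u\,\delta_{\lambda,\beta}\bigl(\pi_{\MSS}^{\circ}(\vA)+\pi_{\MSS}^{\circ}(\vB)\bigr)+\sum_{\gamma\in\mathcal{L}}q_{\gamma\beta}\,\pi_{\MSS}^{\circ}(\gamma)-u\sum_{\gamma\in\mathcal{L}}q_{\gamma\beta}\bigl(\pi_{\MSS}^{\circ}(\vA,\gamma)+\pi_{\MSS}^{\circ}(\vB,\gamma)\bigr).
\]

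Finally I would differentiate at $u=0$. Since the first and third terms carry an explicit factor of $u$, the product rule leaves only their $u=0$ values. The first term contributes $\delta_{\beta,\lambda}$, because $\pi_{\MSS}^{\circ}(\vA)+\pi_{\MSS}^{\circ}(\vB)\to\rho_A^{\circ}(\vz,\lambda)+\rho_B^{\circ}(\vz,\lambda)=1$ by the Remark's identities. For the third term, Lemma~\ref{lemma:rho_low_u} gives $\pi_{\MSS}^{\circ}(\vA,\gamma)+\pi_{\MSS}^{\circ}(\vB,\gamma)\to(\rho_A^{\circ}+\rho_B^{\circ})\upsilon(\gamma)=\upsilon(\gamma)$, and stationarity of $\upsilon$ under $Q$, namely $\sum_{\gamma}\upsilon(\gamma)q_{\gamma\beta}=\upsilon(\beta)$, collapses it to $-\upsilon(\beta)$. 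The middle term differentiates directly into $\sum_{\gamma}q_{\gamma\beta}\frac{d}{du}\big|_{u=0}\pi_{\MSS}^{\circ}(\gamma)$. Assembling these three contributions gives exactly \eq{structure_derivative_recurrence}.

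The main obstacle is bookkeeping rather than depth: one must carefully track how the mutation terms survive the $\vx$-sum and, above all, justify the marginalization $\sum_{\vx}P_{(\vy,\gamma)\to(\vx,\beta)}=q_{\gamma\beta}$ from the exogeneity of the structure-transition chain. Once that is in hand, the differentiation is routine, the $u$-prefactors ensuring that only the $u=0$ limits supplied by Lemma~\ref{lemma:rho_low_u} and the normalization $\rho_A^{\circ}+\rho_B^{\circ}=1$ enter the final recurrence.
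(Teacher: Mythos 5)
Your proof is correct and follows essentially the same route as the paper: sum \eq{MSS_recurrence} over all configurations $\vx$ to get the marginal stationarity identity (your displayed equation is just a rearrangement of the paper's intermediate formula $\pi_{\MSS}^{\circ}(\beta)=u\sum_{\gamma}(\pi_{\MSS}^{\circ}(\vA,\gamma)+\pi_{\MSS}^{\circ}(\vB,\gamma))(\delta_{\beta,\lambda}-q_{\gamma\beta})+\sum_{\gamma}\pi_{\MSS}^{\circ}(\gamma)q_{\gamma\beta}$), then differentiate at $u=0$ using Lemma~\ref{lemma:rho_low_u}, $\rho_{A}^{\circ}+\rho_{B}^{\circ}=1$, and stationarity of $\upsilon$. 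You in fact supply more detail than the paper does, in particular the justification of the marginalization $\sum_{\vx}P_{(\vy,\gamma)\rightarrow(\vx,\beta)}=q_{\gamma\beta}$ via exogeneity of the structure-transition chain, which the paper leaves implicit.
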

\begin{proof}
Summing both sides of \eq{MSS_recurrence} over all $\vx\in\mathbb{B}^{\mathcal{N}}$ gives
\begin{align}
\pi_{\MSS}^{\circ}\left(\beta\right) &= u\sum_{\gamma\in\mathcal{L}} \left(\pi_{\MSS}^{\circ}\left(\vA ,\gamma\right) +\pi_{\MSS}^{\circ}\left(\vB ,\gamma\right) \right)\left(\delta_{\beta ,\lambda}-q_{\gamma\beta}\right) \nonumber \\
&\quad + \sum_{\gamma\in\mathcal{L}} \pi_{\MSS}^{\circ}\left(\gamma\right) q_{\gamma\beta} .
\end{align}
Differentiating this equation with respect to $u$ at $u=0$ and using Lemma~\ref{lemma:rho_low_u} yields \eq{structure_derivative_recurrence}.
\end{proof}

Since the state of the process consists of both a configuration of traits and a network structure, the next result gives a recurrence for calculating a modified version of $\left<\cdot\right>_{\left(\vz ,\lambda\right)}^{\circ}$, using conditioning on the network structure. In particular, for a function $\varphi :\mathbb{B}^{\mathcal{N}}\rightarrow\mathbb{R}$ defined on \emph{just} configurations, we let $\left<\varphi\mid\beta\right>_{\left(\vz ,\lambda\right)}^{\circ}=\frac{d}{du}\Big\vert_{u=0}\mathbb{E}_{\MSS}^{\circ}\left[\varphi\mid\beta\right]$. This quantity can be calculated as follows:
\begin{proposition}\label{prop:conditional_recurrence}
For every function $\varphi :\mathbb{B}^{\mathcal{N}}\rightarrow\mathbb{R}$, we have
\begin{align}
\upsilon\left(\beta\right) \left<\varphi\mid\beta\right>_{\left(\vz ,\lambda\right)}^{\circ} &= \delta_{\lambda ,\beta}\left(\varphi\left(\vz\right) -\rho_{A}^{\circ}\left(\vz ,\lambda\right)\varphi\left(\vA\right) -\rho_{B}^{\circ}\left(\vz ,\lambda\right)\varphi\left(\vB\right)\right) \nonumber \\
&\quad + \sum_{\gamma\in\mathcal{L}} \upsilon\left(\gamma\right) \sum_{\left(R,\alpha\right)} p_{\left(R,\alpha\right)}^{\circ}\left(\gamma\right) q_{\gamma\beta} \left<\varphi_{\widetilde{\alpha}}\mid\gamma\right>_{\left(\vz ,\lambda\right)}^{\circ} , \label{eq:conditional_recurrence}
\end{align}
where, for $\widetilde{\alpha}:\mathcal{N}\rightarrow\mathcal{N}$, $\varphi_{\widetilde{\alpha}}:\mathbb{B}^{\mathcal{N}}\rightarrow\mathbb{R}$ is the map defined by $\varphi_{\widetilde{\alpha}}\left(\vx\right) =\varphi\left(\vx_{\widetilde{\alpha}}\right)$ for $\vx\in\mathbb{B}^{\mathcal{N}}$.
\end{proposition}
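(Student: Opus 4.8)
The plan is to run a one-step analysis of the neutral stationary recurrence \eqref{eq:MSS_recurrence}, weighted by $\varphi$, and then differentiate in the mutation rate $u$ at $u=0$. Write $g_{\beta}\coloneqq\sum_{\vx}\pi_{\MSS}^{\circ}\left(\vx,\beta\right)\varphi\left(\vx\right)$, so that $\E_{\MSS}^{\circ}\left[\varphi\mid\beta\right]=g_{\beta}/\pi_{\MSS}^{\circ}\left(\beta\right)$, and abbreviate $K\coloneqq\rho_{A}^{\circ}\left(\vz,\lambda\right)\varphi\left(\vA\right)+\rho_{B}^{\circ}\left(\vz,\lambda\right)\varphi\left(\vB\right)$. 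First I would differentiate this ratio by the quotient rule at $u=0$; using the corollary that $\pi_{\MSS}^{\circ}\left(\beta\right)\to\upsilon\left(\beta\right)$ and Lemma~\ref{lemma:rho_low_u} (which gives $g_{\beta}(0)=\upsilon\left(\beta\right)K$), this produces the identity
\[
\upsilon\left(\beta\right)\left<\varphi\mid\beta\right>_{\left(\vz,\lambda\right)}^{\circ}=g_{\beta}'(0)-K\,\tfrac{d}{du}\big|_{u=0}\pi_{\MSS}^{\circ}\left(\beta\right).
\]
The same identity holds verbatim for each pushed-forward function $\varphi_{\widetilde{\alpha}}$ on each network $\gamma$, the crucial point being that $\varphi_{\widetilde{\alpha}}\left(\vA\right)=\varphi\left(\vA\right)$ and $\varphi_{\widetilde{\alpha}}\left(\vB\right)=\varphi\left(\vB\right)$, since the monomorphic configurations are fixed points of every extended map $\widetilde{\alpha}$; hence the constant $K$ is unchanged under this substitution.

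Next I would derive a recurrence for $g_{\beta}$ by multiplying \eqref{eq:MSS_recurrence} through by $\varphi\left(\vx\right)$ and summing over $\vx\in\mathbb{B}^{\mathcal{N}}$. The mutation term contributes a multiple of $u\,\delta_{\lambda,\beta}\varphi\left(\vz\right)$, the $\left(1-u\right)$ term contributes absorbing boundary values, and the transient-state term yields $\sum_{\gamma}\sum_{\left(R,\alpha\right)}p_{\left(R,\alpha\right)}^{\circ}\left(\gamma\right)q_{\gamma\beta}\sum_{\vy\in\mathbb{B}_{\intercal}^{\mathcal{N}}}\pi_{\MSS}^{\circ}\left(\vy,\gamma\right)\varphi_{\widetilde{\alpha}}\left(\vy\right)$. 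Rewriting the restricted sum over transient $\vy$ as the full sum $G_{\gamma,\widetilde{\alpha}}\coloneqq\sum_{\vy}\pi_{\MSS}^{\circ}\left(\vy,\gamma\right)\varphi_{\widetilde{\alpha}}\left(\vy\right)$ minus its two absorbing contributions, and using $\sum_{\left(R,\alpha\right)}p_{\left(R,\alpha\right)}^{\circ}\left(\gamma\right)=1$, the subtracted absorbing pieces cancel against the $\left(1-u\right)$ term up to an explicit remainder linear in $u$. This leaves a recurrence of the form $g_{\beta}=\sum_{\gamma}\sum_{\left(R,\alpha\right)}p_{\left(R,\alpha\right)}^{\circ}\left(\gamma\right)q_{\gamma\beta}G_{\gamma,\widetilde{\alpha}}$ plus an explicit boundary term built from absorbing-state weights and carrying a single factor of $u$.

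I would then differentiate this recurrence at $u=0$. The explicit boundary term differentiates cleanly: by Lemma~\ref{lemma:rho_low_u}, $\pi_{\MSS}^{\circ}\left(\vA,\gamma\right)\big|_{u=0}=\rho_{A}^{\circ}\left(\vz,\lambda\right)\upsilon\left(\gamma\right)$ and likewise for $\vB$, and the stationarity identity $\sum_{\gamma}\upsilon\left(\gamma\right)q_{\gamma\beta}=\upsilon\left(\beta\right)$ turns the boundary contribution into $\delta_{\lambda,\beta}\varphi\left(\vz\right)-K\upsilon\left(\beta\right)$, while the remaining term differentiates to $\sum_{\gamma}\sum_{\left(R,\alpha\right)}p_{\left(R,\alpha\right)}^{\circ}\left(\gamma\right)q_{\gamma\beta}G_{\gamma,\widetilde{\alpha}}'(0)$.

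Finally I would substitute the quotient-rule identity for $g_{\beta}'(0)$ on the left and the analogous one for each $G_{\gamma,\widetilde{\alpha}}'(0)$ on the right. The hard part — and the only place genuine care is required — is showing that the spurious terms carrying $\tfrac{d}{du}\big|_{u=0}\pi_{\MSS}^{\circ}$ cancel. Here Lemma~\ref{lemma:du_pi_beta} is essential: after using $\sum_{\left(R,\alpha\right)}p_{\left(R,\alpha\right)}^{\circ}\left(\gamma\right)=1$ to produce $\sum_{\gamma}q_{\gamma\beta}\,\tfrac{d}{du}\big|_{u=0}\pi_{\MSS}^{\circ}\left(\gamma\right)$, its recurrence rewrites this sum as $\tfrac{d}{du}\big|_{u=0}\pi_{\MSS}^{\circ}\left(\beta\right)-\delta_{\beta,\lambda}+\upsilon\left(\beta\right)$. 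The $K\,\tfrac{d}{du}\big|_{u=0}\pi_{\MSS}^{\circ}\left(\beta\right)$ contributions then match on both sides and drop out, the $K\upsilon\left(\beta\right)$ terms cancel, and the leftover $-K\delta_{\beta,\lambda}$ merges with $\delta_{\lambda,\beta}\varphi\left(\vz\right)$ to form exactly $\delta_{\lambda,\beta}\left(\varphi\left(\vz\right)-\rho_{A}^{\circ}\left(\vz,\lambda\right)\varphi\left(\vA\right)-\rho_{B}^{\circ}\left(\vz,\lambda\right)\varphi\left(\vB\right)\right)$. What survives is precisely \eqref{eq:conditional_recurrence}.
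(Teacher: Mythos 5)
Your proposal is correct and follows essentially the same route as the paper: differentiate the stationary recurrence of the mutation-modified chain at $u=0$, apply the quotient rule together with Lemma~\ref{lemma:rho_low_u} to express $\upsilon\left(\beta\right)\left<\cdot\mid\beta\right>_{\left(\vz ,\lambda\right)}^{\circ}$ in terms of $\frac{d}{du}\big\vert_{u=0}\pi_{\MSS}^{\circ}$, and use Lemma~\ref{lemma:du_pi_beta} to cancel the spurious derivative terms; the only difference is that you sum against $\varphi$ before differentiating in $u$ whereas the paper differentiates the per-configuration recurrence first, and you actually spell out the cancellation the paper dismisses as ``tedious but straightforward,'' correctly noting the key fact that $\varphi_{\widetilde{\alpha}}$ agrees with $\varphi$ on $\vA$ and $\vB$.
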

\begin{proof}
For $\vx\in\mathbb{B}_{\intercal}^{\mathcal{N}}$, differentiating both sides of \eq{MSS_recurrence} with respect to $u$ at $u=0$ gives
\begin{align}
\frac{d}{du} &\Bigg\vert_{u=0} \pi_{\MSS}^{\circ}\left(\vx ,\beta\right) \nonumber \\
&= \delta_{\vz ,\vx} \delta_{\lambda ,\beta} + \sum_{\vy\in\mathbb{B}_{\intercal}^{\mathcal{N}}} \sum_{\gamma\in\mathcal{L}} \left(\frac{d}{du}\Bigg\vert_{u=0} \pi_{\MSS}^{\circ}\left(\vy ,\gamma\right)\right) P_{\left(\vy ,\gamma\right)\rightarrow\left(\vx ,\beta\right)}^{\circ} \nonumber \\
&= \delta_{\vz ,\vx} \delta_{\lambda ,\beta} + \sum_{\vy\in\mathbb{B}_{\intercal}^{\mathcal{N}}} \sum_{\gamma\in\mathcal{L}} \left(\frac{d}{du}\Bigg\vert_{u=0} \pi_{\MSS}^{\circ}\left(\vy ,\gamma\right)\right) \sum_{\substack{\left(R,\alpha\right) \\ \vy_{\widetilde{\alpha}}=\vx}} p_{\left(R,\alpha\right)}^{\circ}\left(\gamma\right) q_{\gamma\beta} . \label{eq:dpi_transient}
\end{align}
Doing so for $\vx\in\left\{\vA ,\vB\right\}$ gives
\begin{subequations}\label{eq:dpi_absorbing}
\begin{align}
\frac{d}{du}\Bigg\vert_{u=0} \pi_{\MSS}^{\circ}\left(\vA ,\beta\right) &= \sum_{\gamma\in\mathcal{L}} \left(\frac{d}{du}\Bigg\vert_{u=0}\pi_{\MSS}^{\circ}\left(\vA ,\gamma\right)\right) q_{\gamma\beta} - \rho_{A}^{\circ}\left(\vz ,\lambda\right)\upsilon\left(\beta\right) \nonumber \\
&\quad + \sum_{\vy\in\mathbb{B}_{\intercal}^{\mathcal{N}}} \sum_{\gamma\in\mathcal{L}} \left(\frac{d}{du}\Bigg\vert_{u=0} \pi_{\MSS}^{\circ}\left(\vy ,\gamma\right)\right) \sum_{\substack{\left(R,\alpha\right) \\ \vy_{\widetilde{\alpha}}=\vA}} p_{\left(R,\alpha\right)}^{\circ}\left(\gamma\right) q_{\gamma\beta} ; \\
\frac{d}{du}\Bigg\vert_{u=0} \pi_{\MSS}^{\circ}\left(\vB ,\beta\right) &= \sum_{\gamma\in\mathcal{L}} \left(\frac{d}{du}\Bigg\vert_{u=0}\pi_{\MSS}^{\circ}\left(\vB ,\gamma\right)\right) q_{\gamma\beta} - \rho_{B}^{\circ}\left(\vz ,\lambda\right)\upsilon\left(\beta\right) \nonumber \\
&\quad +\sum_{\vy\in\mathbb{B}_{\intercal}^{\mathcal{N}}} \sum_{\gamma\in\mathcal{L}} \left(\frac{d}{du}\Bigg\vert_{u=0} \pi_{\MSS}^{\circ}\left(\vy ,\gamma\right)\right) \sum_{\substack{\left(R,\alpha\right) \\ \vy_{\widetilde{\alpha}}=\vB}} p_{\left(R,\alpha\right)}^{\circ}\left(\gamma\right) q_{\gamma\beta} .
\end{align}
\end{subequations}

If $\varphi :\mathbb{B}^{\mathcal{N}}\rightarrow\mathbb{R}$ is a fixed function, then, by definition,
\begin{align}
\upsilon\left(\beta\right)\left<\varphi\mid\beta\right>_{\left(\vz ,\lambda\right)}^{\circ} &= \sum_{\vx\in\mathbb{B}^{\mathcal{N}}} \upsilon\left(\beta\right)\frac{d}{du}\Bigg\vert_{u=0} \frac{\pi_{\MSS}^{\circ}\left(\vx ,\beta\right)}{\pi_{\MSS}^{\circ}\left(\beta\right)} \varphi\left(\vx\right) .
\end{align}
Combining Lemma~\ref{lemma:du_pi_beta} and \eqs{dpi_transient}{dpi_absorbing} with the fact that
\begin{align}
\upsilon\left(\beta\right) &\frac{d}{du}\Bigg\vert_{u=0} \frac{\pi_{\MSS}^{\circ}\left(\vx ,\beta\right)}{\pi_{\MSS}^{\circ}\left(\beta\right)} \nonumber \\
&= \frac{d}{du}\Bigg\vert_{u=0} \pi_{\MSS}^{\circ}\left(\vx ,\beta\right) - \left(\delta_{\vA ,\vx}\rho_{A}^{\circ}\left(\vz ,\lambda\right) +\delta_{\vB ,\vx}\rho_{B}^{\circ}\left(\vz ,\lambda\right)\right)\frac{d}{du}\Bigg\vert_{u=0}\pi_{\MSS}^{\circ}\left(\beta\right)
\end{align}
then gives \eq{conditional_recurrence} after some tedious but straightforward simplifications.
\end{proof}

\begin{corollary}\label{cor:conditional_recurrence}
With $I\subseteq\mathcal{N}$ and $\eta_{I}^{\left[\beta\right]}\left(\vz ,\lambda\right)\coloneqq\upsilon\left(\beta\right)\left<\sum_{i\in\mathcal{N}}\pi_{i}^{\left[\beta\right]}x_{i}-\vx_{I}\mid\beta\right>_{\left(\vz ,\lambda\right)}^{\circ}$, we have
\begin{align}
\eta_{I}^{\left[\beta\right]}\left(\vz ,\lambda\right) &= \delta_{\lambda ,\beta}\left(\sum_{i\in\mathcal{N}}\pi_{i}^{\left[\beta\right]}z_{i}-\vz_{I}\right) + \sum_{\gamma\in\mathcal{L}}\sum_{\left(R,\alpha\right)}p_{\left(R,\alpha\right)}^{\circ}\left(\gamma\right) q_{\gamma\beta}\eta_{\widetilde{\alpha}\left(I\right)}^{\left[\gamma\right]}\left(\vz ,\lambda\right) . \label{eq:eta_recurrence}
\end{align}
Subject to $\sum_{i\in\mathcal{N}}\pi_{i}^{\left[\beta\right]}\eta_{i}^{\left[\beta\right]}\left(\vz ,\lambda\right) =0$ for some $\beta\in\mathcal{L}$, the solution to \eq{eta_recurrence} is unique.
\end{corollary}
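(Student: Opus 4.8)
The plan is to obtain \eqref{eq:eta_recurrence} by feeding a single, carefully chosen test function into Proposition~\ref{prop:conditional_recurrence}, and then to reduce the whole identity to one fact about the reproductive-value--weighted abundance. Write $W^{[\beta]}(\vx):=\sum_{i}\pi_i^{[\beta]}x_i$, so that $\eta_I^{[\beta]}=\upsilon(\beta)\langle W^{[\beta]}-\vx_I\mid\beta\rangle_{\left(\vz,\lambda\right)}^{\circ}$. The first thing I would record is that $\varphi:=W^{[\beta]}-\vx_I$ vanishes on both absorbing configurations: since $\sum_i\pi_i^{[\beta]}=1$ we have $W^{[\beta]}(\vA)=1=\vA_I$ and $W^{[\beta]}(\vB)=0=\vB_I$ (for $I\neq\varnothing$). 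Consequently the boundary term $\delta_{\lambda,\beta}(\varphi(\vz)-\rho_A^{\circ}\varphi(\vA)-\rho_B^{\circ}\varphi(\vB))$ in \eqref{eq:conditional_recurrence} collapses to exactly $\delta_{\lambda,\beta}(\sum_i\pi_i^{[\beta]}z_i-\vz_I)$, which is precisely the source term claimed in \eqref{eq:eta_recurrence}.

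Second, I would treat the two pieces of $\varphi$ separately. For the monomial part, apply Proposition~\ref{prop:conditional_recurrence} to $\vx_I$ alone; the only point to check is that pushforward respects the set operation: because $x_j^2=x_j$ on $\{0,1\}^{\mathcal N}$, one has $(\vx_I)_{\widetilde\alpha}(\vx)=\prod_{i\in I}x_{\widetilde\alpha(i)}=\vx_{\widetilde\alpha(I)}$, so the recursion produces exactly the monomials $\vx_{\widetilde\alpha(I)}$ appearing in the definition of $\eta$. Writing $B_I^{[\beta]}:=\upsilon(\beta)\langle\vx_I\mid\beta\rangle^{\circ}$ and $A^{[\beta]}:=\upsilon(\beta)\langle W^{[\beta]}\mid\beta\rangle^{\circ}$, this yields a closed recurrence for $B_I^{[\beta]}$ with source $\delta_{\lambda,\beta}(\vz_I-\rho_A^{\circ})$. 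Substituting $\eta_I^{[\beta]}=A^{[\beta]}-B_I^{[\beta]}$ and $\eta_{\widetilde\alpha(I)}^{[\gamma]}=A^{[\gamma]}-B_{\widetilde\alpha(I)}^{[\gamma]}$ into \eqref{eq:eta_recurrence}, and using the $B$-recurrence to cancel all the $B$-terms and the two $\delta_{\lambda,\beta}$ source contributions, the entire claim collapses to the single identity $A^{[\beta]}=\sum_{\gamma}q_{\gamma\beta}A^{[\gamma]}$; that is, $(A^{[\beta]})_\beta$ must be a left-invariant vector of $Q$, hence proportional to $\upsilon$.

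Establishing that harmonicity is the crux, and it is where the reproductive-value recurrence \eqref{eq:reproductive_value_recurrence} (equivalently, the martingale property $\delhat^{\circ}(\vx,\beta)=0$) must enter. Applying Proposition~\ref{prop:conditional_recurrence} to $W^{[\beta]}$ directly shows the source vanishes and that $A^{[\beta]}$ equals a sum over predecessor networks $\gamma$ in which the $\beta$-weights are pushed backward through one neutral replacement, placing the weight $\sum_i e_{ki}^{\circ}(\gamma)\pi_i^{[\beta]}+(1-\sum_j e_{jk}^{\circ}(\gamma))\pi_k^{[\beta]}$ on node $k$. The reproductive-value recurrence is exactly the statement that the same operator applied to the transition-averaged values $\sum_\delta q_{\gamma\delta}\pi^{[\delta]}$ returns $\pi^{[\gamma]}$. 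The difficulty is that these two weight vectors agree only after summing over the network-transition structure, not network-by-network, so a purely algebraic substitution of \eqref{eq:reproductive_value_recurrence} does not close the recursion termwise. I expect this to be the main obstacle, and I would resolve it globally via the martingale: since $W$ is a bounded martingale under neutral drift equal to $\mathbb{1}[\vx=\vA]$ at absorption, a perturbation (in the mutation rate $u$) of the network-conditioned stationary average of $W$ inherits $Q$-harmonicity, giving $A^{[\beta]}\propto\upsilon(\beta)$.

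Finally, for uniqueness I would pass to the rescaled quantities $\tau_{ij}^{[\beta]}=\eta_{ij}^{[\beta]}/\upsilon(\beta)$ and the time-reversed chain $\widetilde q_{\beta\gamma}=\upsilon(\gamma)q_{\gamma\beta}/\upsilon(\beta)$, under which the recursion becomes a genuine conservative backward ancestral process: a single lineage ($|I|=1$) performs an irreducible recurrent walk over nodes and networks, while for $|I|\geq 2$ the lineages additionally coalesce, making the operator on the $|I|\geq 2$ block contract toward the $|I|=1$ boundary. Irreducibility (guaranteed by the Fixation Axiom) forces the kernel of the homogeneous recursion to be one-dimensional, so \eqref{eq:eta_recurrence} determines $\eta$ up to a single additive constant in that direction, which I would pin down with $\sum_i\pi_i^{[\beta]}\eta_i^{[\beta]}=0$. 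I would also note that the operator-defined $\eta$ satisfies this normalization automatically: since $A^{[\beta]}=\sum_i\pi_i^{[\beta]}B_i^{[\beta]}$, we get $\sum_i\pi_i^{[\beta]}\eta_i^{[\beta]}=A^{[\beta]}-\sum_i\pi_i^{[\beta]}B_i^{[\beta]}=0$, so the defined object is indeed the unique solution.
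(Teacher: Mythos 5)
Your reduction of the recurrence is correct and is in fact sharper than the paper's own justification, which simply asserts that substituting $\varphi=\sum_{i}\pi_{i}^{\left[\beta\right]}x_{i}-\vx_{I}$ into Proposition~\ref{prop:conditional_recurrence} ``gives'' \eq{eta_recurrence}. You rightly note that this substitution leaves $\left<\sum_{i}\pi_{i}^{\left[\beta\right]}x_{\widetilde{\alpha}\left(i\right)}-\vx_{\widetilde{\alpha}\left(I\right)}\mid\gamma\right>_{\left(\vz ,\lambda\right)}^{\circ}$ on the right-hand side, whereas \eq{eta_recurrence} requires $\left<\sum_{i}\pi_{i}^{\left[\gamma\right]}x_{i}-\vx_{\widetilde{\alpha}\left(I\right)}\mid\gamma\right>_{\left(\vz ,\lambda\right)}^{\circ}$, and that after cancelling the $\vx_{I}$-recurrence and the source terms (using $\rho_{A}^{\circ}\left(\vz ,\lambda\right)=\sum_{i}\pi_{i}^{\left[\lambda\right]}z_{i}$) the entire claim is equivalent to the single identity $A^{\left[\beta\right]}=\sum_{\gamma}q_{\gamma\beta}A^{\left[\gamma\right]}$, where $A^{\left[\beta\right]}\coloneqq\upsilon\left(\beta\right)\left<\sum_{i}\pi_{i}^{\left[\beta\right]}x_{i}\mid\beta\right>_{\left(\vz ,\lambda\right)}^{\circ}$. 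Your uniqueness half (one-dimensional kernel of the homogeneous recursion via the Fixation Axiom, pinned down by the normalization, which the operator-defined $\eta$ satisfies automatically because $A^{\left[\beta\right]}=\sum_{i}\pi_{i}^{\left[\beta\right]}B_{i}^{\left[\beta\right]}$ by linearity) is correct and essentially the paper's argument, transported to the time-reversed chain.

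The gap is that you never prove the identity you yourself single out as the crux. The sentence claiming that a perturbation in $u$ of the network-conditioned stationary average of $W$ ``inherits $Q$-harmonicity'' is a restatement of $A^{\left[\beta\right]}=\sum_{\gamma}q_{\gamma\beta}A^{\left[\gamma\right]}$, not an argument for it. The martingale property of $W$ only controls averages over the \emph{successor} network: setting $c_{t}\left(\beta\right)=\mathbb{E}^{\circ}\left[\left(W\left(\vx^{t},\beta^{t}\right)-\rho_{A}^{\circ}\right)\mathbb{1}\left[\beta^{t}=\beta\right]\right]$, one gets $\sum_{\beta}c_{t}\left(\beta\right)=0$ for every $t$, but the quantity your argument needs, $c_{t+1}\left(\beta\right)-\sum_{\gamma}c_{t}\left(\gamma\right)q_{\gamma\beta}$, involves $W\left(\cdot,\beta\right)$ evaluated at the \emph{fixed} target network $\beta$ rather than averaged against $q_{\beta^{t}\cdot}$, so it reproduces exactly the termwise mismatch between $\pi^{\left[\beta\right]}$ and $\sum_{\delta}q_{\gamma\delta}\pi^{\left[\delta\right]}$ that you flagged; the martingale property kills only the sum over $\beta$ of these discrepancies, not each one individually. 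In other words, your proposed ``global'' resolution runs into precisely the same obstruction as the termwise substitution, and no additional input is supplied to overcome it (for instance, an explicit computation of $A^{\left[\beta\right]}$ from the recurrences for $\frac{d}{du}\big\vert_{u=0}\pi_{\MSS}^{\circ}\left(\vx ,\beta\right)$, or a direct proof that $A^{\left[\beta\right]}/\upsilon\left(\beta\right)$ is independent of $\beta$). Until that step is filled in, the existence half of the corollary is not established by your argument.
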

\begin{proof}
Setting $\varphi\left(\vx\right) =\sum_{i\in\mathcal{N}}\pi_{i}^{\left[\beta\right]}x_{i}-\vx_{I}$ in Proposition~\ref{prop:conditional_recurrence} gives \eq{eta_recurrence}. Conversely, we know that $\eta_{I}^{\left[\beta\right]}\left(\vz ,\lambda\right)\coloneqq\upsilon\left(\beta\right)\left<\sum_{i\in\mathcal{N}}\pi_{i}^{\left[\beta\right]}x_{i}-\vx_{I}\mid\beta\right>_{\left(\vz ,\lambda\right)}^{\circ}$ solves \eq{eta_recurrence}, so that there is at least one solution to \eq{eta_recurrence}. By the Fixation Axiom, the dimensionality of the space of  solutions to \eq{eta_recurrence} is determined by that of the case $\left| I\right| =1$. (The reason is that all subsets of size greater than one are transient under the ancestral process.) Specifically, the recurrence for $I=\left\{i\right\}$ is
\begin{align}
\eta_{i}^{\left[\beta\right]}\left(\vz ,\lambda\right) &= \delta_{\lambda ,\beta}\left(\rho_{A}^{\circ}\left(\vz ,\lambda\right)-z_{i}\right) + \sum_{\gamma\in\mathcal{L}}\sum_{j\in\mathcal{N}} e_{ji}^{\circ}\left(\gamma\right) q_{\gamma\beta}\eta_{j}^{\left[\gamma\right]}\left(\vz ,\lambda\right) \nonumber \\
&\quad + \sum_{\gamma\in\mathcal{L}}\left(1-\sum_{j\in\mathcal{N}}e_{ji}^{\circ}\left(\gamma\right)\right) q_{\gamma\beta}\eta_{i}^{\left[\gamma\right]}\left(\vz ,\lambda\right) . \label{eq:eta_recurrence_singleton}
\end{align}
If $\widetilde{\eta}\left(\vz ,\lambda\right)$ is another solution to \eq{eta_recurrence_singleton}, then $\chi\left(\vz ,\lambda\right)\coloneqq\eta\left(\vz ,\lambda\right) -\widetilde{\eta}\left(\vz ,\lambda\right)$ satisfies
\begin{align}
\chi_{i}^{\left[\beta\right]}\left(\vz ,\lambda\right) &= \sum_{\gamma\in\mathcal{L}}\sum_{j\in\mathcal{N}} e_{ji}^{\circ}\left(\gamma\right) q_{\gamma\beta}\chi_{j}^{\left[\gamma\right]}\left(\vz ,\lambda\right) + \sum_{\gamma\in\mathcal{L}}\left(1-\sum_{j\in\mathcal{N}}e_{ji}^{\circ}\left(\gamma\right)\right) q_{\gamma\beta}\chi_{i}^{\left[\gamma\right]}\left(\vz ,\lambda\right) . \label{eq:chi_recurrence}
\end{align}
Noting that any constant function is a solution to \eq{chi_recurrence}, and the space of solutions to this equation is one-dimensional as a result of the Fixation Axiom, there must exist $K\in\mathbb{R}$ such that $\eta\left(\vz ,\lambda\right) = \widetilde{\eta}\left(\vz ,\lambda\right)+K$. Since the solution $\eta_{i}^{\left[\beta\right]}\left(\vz ,\lambda\right)=\upsilon\left(\beta\right)\left<x_{i}\mid\beta\right>_{\left(\vz ,\lambda\right)}^{\circ}$ satisfies $\sum_{i\in\mathcal{N}}\pi_{i}^{\left[\beta\right]}\eta_{i}^{\left[\beta\right]}\left(\vz ,\lambda\right)=0$ for all $\beta\in\mathcal{L}$, it follows that $K=0$ and $\eta\left(\vz ,\lambda\right) =\widetilde{\eta}\left(\vz ,\lambda\right)$ whenever $\widetilde{\eta}\left(\vz ,\lambda\right)$ satisfies \eq{eta_recurrence} and $\sum_{i\in\mathcal{N}}\eta_{i}^{\left[\beta\right]}\widetilde{\eta}_{i}^{\left[\beta\right]}\left(\vz ,\lambda\right) =0$ for some $\beta\in\mathcal{L}$. We note that $\sum_{i\in\mathcal{N}}\pi_{i}^{\left[\beta\right]}\eta_{i}^{\left[\beta\right]}\left(\vz ,\lambda\right)=0$ for \emph{some} $\beta\in\mathcal{L}$ ensures that this equation holds for \emph{all} $\beta\in\mathcal{L}$.
\end{proof}

\section{Calculating first-order effects of selection}\label{sec:calculating_first_order}

\subsection{Fixed initial configurations}\label{sec:fixed_configurations}
Note that for functions $\varphi :\mathbb{B}^{\mathcal{N}}\rightarrow\mathbb{R}$ and $\phi :\mathcal{L}\rightarrow\mathbb{R}$, we have
\begin{align}
\left<\phi\varphi\right>_{\left(\vz ,\lambda\right)}^{\circ} &= \frac{d}{du}\Bigg\vert_{u=0} \sum_{\beta\in\mathcal{L}}\pi_{\MSS}^{\circ}\left(\beta\right) \phi\left(\beta\right) \mathbb{E}_{\MSS}^{\circ}\left[ \varphi\mid\beta \right] \nonumber \\
&= \sum_{\beta\in\mathcal{L}}\upsilon\left(\beta\right) \phi\left(\beta\right) \left<\varphi\mid\beta \right>_{\left(\vz ,\lambda\right)}^{\circ} \nonumber \\
&\quad + \left(\rho_{A}^{\circ}\left(\vz ,\lambda\right)\varphi\left(\vA\right) +\rho_{B}^{\circ}\left(\vz ,\lambda\right)\varphi\left(\vB\right)\right) \sum_{\beta\in\mathcal{L}}\phi\left(\beta\right) \frac{d}{du}\Bigg\vert_{u=0}\pi_{\MSS}^{\circ}\left(\beta\right) .
\end{align}
Therefore, we may rewrite \eq{rhoA_derivative} as
\begin{align}
\frac{d}{d\delta} &\Bigg\vert_{\delta =0}\rho_{A}\left(\vz ,\lambda\right) \nonumber \\
&= \sum_{i,j\in\mathcal{N}} \sum_{I\subseteq\mathcal{N}} \left<\underbrace{c_{I}^{ji}\left(\beta\right) \textstyle\sum_{\gamma\in\mathcal{L}} q_{\beta\gamma} \pi_{i}^{\left[\gamma\right]}}_{\phi} \Big(\underbrace{\vx_{I\cup\left\{j\right\}}-\vx_{I\cup\left\{i\right\}}}_{\varphi}\Big)\right>_{\left(\vz ,\lambda\right)}^{\circ} \nonumber \\
&= \sum_{i,j\in\mathcal{N}} \sum_{I\subseteq\mathcal{N}} \sum_{\beta\in\mathcal{L}} \upsilon\left(\beta\right) c_{I}^{ji}\left(\beta\right) \sum_{\gamma\in\mathcal{L}} q_{\beta\gamma} \pi_{i}^{\left[\gamma\right]}\left( \left<\vx_{I\cup\left\{j\right\}}\mid\beta\right>_{\left(\vz ,\lambda\right)}^{\circ} - \left<\vx_{I\cup\left\{i\right\}}\mid\beta\right>_{\left(\vz ,\lambda\right)}^{\circ}\right) .
\end{align}
Defining $\eta_{I}^{\left[\beta\right]}\left(\vz,\lambda\right)\coloneqq\upsilon\left(\beta\right)\left<\sum_{i\in\mathcal{N}}\pi_{i}^{\left[\beta\right]}x_{i}-\vx_{I}\mid\beta\right>_{\left(\vz ,\lambda\right)}^{\circ}$, we then have
\begin{align}
\frac{d}{d\delta}\Bigg\vert_{\delta =0}\rho_{A}\left(\vz ,\lambda\right) &= \sum_{i,j\in\mathcal{N}} \sum_{I\subseteq\mathcal{N}} \sum_{\beta\in\mathcal{L}} c_{I}^{ji}\left(\beta\right) \sum_{\gamma\in\mathcal{L}} q_{\beta\gamma} \pi_{i}^{\left[\gamma\right]}\left( \eta_{I\cup\left\{i\right\}}^{\left[\beta\right]}\left(\vz ,\lambda\right) - \eta_{I\cup\left\{j\right\}}^{\left[\beta\right]}\left(\vz ,\lambda\right)\right) ,
\end{align}
where, by Corollary~\ref{cor:conditional_recurrence}, the terms $\eta$ are uniquely determined by
\begin{subequations}
\begin{align}
\eta_{I}^{\left[\beta\right]}\left(\vz ,\lambda\right) &= \delta_{\lambda ,\beta}\left(\sum_{i\in\mathcal{N}}\pi_{i}^{\left[\beta\right]}z_{i}-\vz_{I}\right) + \sum_{\gamma\in\mathcal{L}}\sum_{\left(R,\alpha\right)}p_{\left(R,\alpha\right)}^{\circ}\left(\gamma\right) q_{\gamma\beta}\eta_{\widetilde{\alpha}\left(I\right)}^{\left[\gamma\right]}\left(\vz ,\lambda\right) ; \\
\sum_{i\in\mathcal{N}} \pi_{i}^{\left[\beta\right]} \eta_{i}^{\left[\beta\right]}\left(\vz ,\lambda\right) &= 0 \textrm{ for some }\beta\in\mathcal{L} .
\end{align}
\end{subequations}

\subsection{Probabilistic initial configurations}
Up until this point, we have focused on fixation probabilities given some fixed initial state, $\left(\vz ,\lambda\right)\in\mathcal{N}\times\mathcal{L}$. We now allow mutant types to arise stochastically and consider mean fixation probabilities for both types. For two distributions, $\mu_{A},\mu_{B}\in\Delta\left(\mathbb{B}_{\intercal}^{\mathcal{N}}\times\mathcal{L}\right)$, we let
\begin{subequations}
\begin{align}
\rho_{A}\left(\mu_{A}\right) &\coloneqq \mathbb{E}_{\left(\vz ,\lambda\right)\sim\mu_{A}}\left[\rho_{A}\left(\vz ,\lambda\right)\right] ; \\
\rho_{B}\left(\mu_{B}\right) &\coloneqq \mathbb{E}_{\left(\vz ,\lambda\right)\sim\mu_{B}}\left[\rho_{B}\left(\vz ,\lambda\right)\right] .
\end{align}
\end{subequations}
By the results of \S\ref{sec:fixed_configurations}, for any $\mu\in\Delta\left(\mathbb{B}_{\intercal}^{\mathcal{N}}\times\mathcal{L}\right)$, we have
\begin{align}
\frac{d}{d\delta}\Bigg\vert_{\delta =0}\rho_{A}\left(\mu\right) &= \sum_{i,j\in\mathcal{N}} \sum_{I\subseteq\mathcal{N}} \sum_{\beta\in\mathcal{L}} c_{I}^{ji}\left(\beta\right) \sum_{\gamma\in\mathcal{L}} q_{\beta\gamma} \pi_{i}^{\left[\gamma\right]}\left( \eta_{I\cup\left\{i\right\}}^{\left[\beta\right]}\left(\mu\right) - \eta_{I\cup\left\{j\right\}}^{\left[\beta\right]}\left(\mu\right)\right) , \label{eq:rhoA_probabilistic}
\end{align}
where
\begin{subequations}\label{eq:eta_mean}
\begin{align}
\eta_{I}^{\left[\beta\right]}\left(\mu\right) &= \mathbb{E}_{\left(\vz ,\lambda\right)\sim\mu}\left[\delta_{\lambda ,\beta}\left(\sum_{i\in\mathcal{N}}\pi_{i}^{\left[\beta\right]}z_{i}-\vz_{I}\right)\right] \nonumber \\
&\quad + \sum_{\gamma\in\mathcal{L}}\sum_{\left(R,\alpha\right)}p_{\left(R,\alpha\right)}^{\circ}\left(\gamma\right) q_{\gamma\beta}\eta_{\widetilde{\alpha}\left(I\right)}^{\left[\gamma\right]}\left(\mu\right) ; \\
\sum_{i\in\mathcal{N}} \pi_{i}^{\left[\beta\right]} \eta_{i}^{\left[\beta\right]}\left(\mu\right) &= 0 \textrm{ for some }\beta\in\mathcal{L} .
\end{align}
\end{subequations}
Letting $\mu =\mu_{A}$ gives the mean fixation probability for type $A$, while the mean fixation probability for type $B$ can be calculated analogously using the equation $\rho_{B}\left(\mu_{B}\right) =1-\rho_{A}\left(\mu_{B}\right)$.

Although the main focus of our study is on network-transition chains that are both aperiodic and irreducible, we do also consider periodic structures. Suppose that among the $L$ networks in $\mathcal{L}$, network $\beta$ transitions deterministically to network $\beta +1$ for $\beta\in\left\{1,\dots ,L-1\right\}$, and network $L$ transitions deterministically to network $1$. We can then write \eq{eta_mean} more explicitly as
\begin{subequations}\label{eq:periodic}
\begin{align}
\eta_{I}^{\left[1\right]}\left(\mu\right) &= \mathbb{E}_{\left(\vz ,\lambda\right)\sim\mu}\left[\delta_{\lambda ,1}\left(\sum_{i\in\mathcal{N}}\pi_{i}^{\left[1\right]}z_{i}-\vz_{I}\right)\right] + \sum_{\left(R,\alpha\right)}p_{\left(R,\alpha\right)}^{\circ}\left(L\right) \eta_{\widetilde{\alpha}\left(I\right)}^{\left[L\right]}\left(\mu\right) ; \\
\eta_{I}^{\left[\beta\right]}\left(\mu\right) &= \mathbb{E}_{\left(\vz ,\lambda\right)\sim\mu}\left[\delta_{\lambda ,\beta}\left(\sum_{i\in\mathcal{N}}\pi_{i}^{\left[\beta\right]}z_{i}-\vz_{I}\right)\right] \nonumber \\
&\quad + \sum_{\left(R,\alpha\right)}p_{\left(R,\alpha\right)}^{\circ}\left(\beta -1\right) \eta_{\widetilde{\alpha}\left(I\right)}^{\left[\beta -1\right]}\left(\mu\right) ; \quad \left( 1<\beta\leqslant L\right) \\
\sum_{i\in\mathcal{N}} \pi_{i}^{\left[\beta\right]} \eta_{i}^{\left[\beta\right]}\left(\mu\right) &= 0 \textrm{ for some }\beta\in\mathcal{L} .
\end{align}
\end{subequations}

\subsection{Changing population size}
Although our modeling framework is described for transitions between networks of a fixed size, it can also accommodate populations of changing size. Indeed, the size of the networks, $N$, may be thought of as the size of an ambient space in which not all nodes are active at any given time. It need not be true that all networks are connected for the Fixation Axiom to hold. In particular, network $\beta$ could have $N^{\left[\beta\right]}$ non-isolated nodes (representing a graph of size $N^{\left[\beta\right]}\leqslant N$) and $N-N^{\left[\beta\right]}$ isolated nodes. The isolated nodes represent a padding of the actual population so that the resulting model fits within the framework of structural transitions among networks of size $N$.

For example, under death-birth updating, a non-isolated node $i$ in network $\beta$ is chosen with probability $1/N^{\left[\beta\right]}$. All neighboring nodes of $i$ in $\beta$ then compete to reproduce, with probability proportional to edge-weight-weighted reproductive rate. Since $i$ is non-isolated, there is at least one such neighbor, and the offspring is sent to $i$. If $p_{ij}^{\left[\beta\right]}=w_{ji}^{\left[\beta\right]}/\sum_{\ell=1}^N w_{\ell i}^{\left[\beta\right]}$ when $i$ is a non-isolated node and $p_{ij}^{\left[\beta\right]}=0$ otherwise, then the probability that $j$ transmits its offspring to $i$ is
\begin{equation}
e_{ji}\left(\mathbf{x},\beta\right) = \frac{1}{N^{\left[\beta\right]}}\frac{p_{ij}^{\left[\beta\right]}F_{j}\left(\mathbf{x},\beta\right)}{\sum_{\ell=1}^{N} p_{i\ell}^{\left[\beta\right]}F_{\ell}\left(\mathbf{x},\beta\right)} .
\end{equation}
The results derived here for transitions among networks of size $N$ then apply to this model.


\begin{thebibliography}{60}
	\providecommand{\natexlab}[1]{#1}
	\providecommand{\url}[1]{\texttt{#1}}
	\expandafter\ifx\csname urlstyle\endcsname\relax
	\providecommand{\doi}[1]{doi: #1}\else
	\providecommand{\doi}{doi: \begingroup \urlstyle{rm}\Url}\fi
	
	\bibitem[Ohtsuki et~al.(2006)Ohtsuki, Hauert, Lieberman, and
	Nowak]{2006-Ohtsuki-p502-505}
	H.~Ohtsuki, C.~Hauert, E.~Lieberman, and M.~A. Nowak.
	\newblock {A simple rule for the evolution of cooperation on graphs and social
		networks}.
	\newblock \emph{Nature}, 441\penalty0 (7092):\penalty0 502--505, 2006.
	\newblock ISSN 14764687.
	\newblock \doi{10.1038/nature04605}.
	
	\bibitem[Holme and Saram{\"{a}}ki(2012)]{2012-Holme-p97-125}
	P.~Holme and J.~Saram{\"{a}}ki.
	\newblock {Temporal networks}.
	\newblock \emph{Physics Reports}, 519\penalty0 (3):\penalty0 97--125, 2012.
	\newblock ISSN 03701573.
	\newblock \doi{10.1016/j.physrep.2012.03.001}.
	
	\bibitem[Vazquez et~al.(2007)Vazquez, R{\'{a}}cz, Luk{\'{a}}cs, and
	Barab{\'{a}}si]{Vazquez-2007-prl}
	A.~Vazquez, B.~R{\'{a}}cz, A.~Luk{\'{a}}cs, and A.~L. Barab{\'{a}}si.
	\newblock {Impact of non-poissonian activity patterns on spreading processes}.
	\newblock \emph{Physical Review Letters}, 98:\penalty0 158702, 2007.
	\newblock ISSN 00319007.
	\newblock \doi{10.1103/PhysRevLett.98.158702}.
	
	\bibitem[Onnela et~al.(2007)Onnela, Saram{\"{a}}ki, Hyv{\"{o}}nen, Szab{\'{o}},
	Lazer, Kaski, Kert{\'{e}}sz, and Barab{\'{a}}si]{2007-Onnela-p7332-7336}
	J.~P. Onnela, J.~Saram{\"{a}}ki, J.~Hyv{\"{o}}nen, G.~Szab{\'{o}}, D.~Lazer,
	K.~Kaski, J.~Kert{\'{e}}sz, and A.~L. Barab{\'{a}}si.
	\newblock {Structure and tie strengths in mobile communication networks}.
	\newblock \emph{Proceedings of the National Academy of Sciences of the United
		States of America}, 104\penalty0 (18):\penalty0 7332--7336, 2007.
	\newblock ISSN 00278424.
	\newblock \doi{10.1073/pnas.0610245104}.
	
	\bibitem[Isella et~al.(2011)Isella, Stehl{\'{e}}, Barrat, Cattuto, Pinton, and
	{Van den Broeck}]{Isella-2011-jtb}
	L.~Isella, J.~Stehl{\'{e}}, A.~Barrat, C.~Cattuto, J.-F. Pinton, and W.~{Van
		den Broeck}.
	\newblock {What's in a crowd? Analysis of face-to-face behavioral networks}.
	\newblock \emph{Journal of Theoretical Biology}, 271\penalty0 (1):\penalty0
	166--180, 2011.
	\newblock ISSN 0022-5193.
	\newblock \doi{https://doi.org/10.1016/j.jtbi.2010.11.033}.
	
	\bibitem[Mastrandrea et~al.(2015)Mastrandrea, Fournet, and
	Barrat]{Mastrandrea2015}
	R.~Mastrandrea, J.~Fournet, and A.~Barrat.
	\newblock {Contact patterns in a high school: A comparison between data
		collected using wearable sensors, contact diaries and friendship surveys}.
	\newblock \emph{PLoS ONE}, 10\penalty0 (9):\penalty0 1--26, 2015.
	\newblock ISSN 19326203.
	\newblock \doi{10.1371/journal.pone.0136497}.
	
	\bibitem[Ulanowicz(2004)]{Ulanowicz-2004-cbc}
	R.~E. Ulanowicz.
	\newblock {Quantitative methods for ecological network analysis}.
	\newblock \emph{Computational Biology and Chemistry}, 28:\penalty0 321--339,
	2004.
	\newblock ISSN 14769271.
	\newblock \doi{10.1016/j.compbiolchem.2004.09.001}.
	
	\bibitem[Bascompte and Jordano(2007)]{Bascompte-2007-AREES}
	J.~Bascompte and P.~Jordano.
	\newblock {Plant-animal mutualistic networks: The architecture of
		biodiversity}.
	\newblock \emph{Annual Review of Ecology, Evolution, and Systematics},
	38:\penalty0 567--593, 2007.
	\newblock ISSN 1543592X.
	\newblock \doi{10.1146/annurev.ecolsys.38.091206.095818}.
	
	\bibitem[Miele and Matias(2017)]{Miele-2017-psos}
	V.~Miele and C.~Matias.
	\newblock {Revealing the hidden structure of dynamic ecological networks}.
	\newblock \emph{Royal Society Open Science}, 4\penalty0 (6):\penalty0 170251,
	2017.
	\newblock ISSN 20545703.
	\newblock \doi{10.1098/rsos.170251}.
	
	\bibitem[Akbarpour and Jackson(2018)]{Akbarpour-2018-pnas}
	M.~Akbarpour and M.~O. Jackson.
	\newblock {Diffusion in networks and the virtue of burstiness}.
	\newblock \emph{Proceedings of the National Academy of Sciences of the United
		States of America}, 115\penalty0 (30):\penalty0 E6996--E7004, 2018.
	\newblock ISSN 10916490.
	\newblock \doi{10.1073/pnas.1722089115}.
	
	\bibitem[Onaga et~al.(2017)Onaga, Gleeson, and Masuda]{Onaga-2017-prl}
	T.~Onaga, J.~P. Gleeson, and N.~Masuda.
	\newblock {Concurrency-induced transitions in epidemic dynamics on temporal
		networks}.
	\newblock \emph{Physical Review Letters}, 119\penalty0 (10):\penalty0 108301,
	2017.
	\newblock ISSN 10797114.
	\newblock \doi{10.1103/PhysRevLett.119.108301}.
	
	\bibitem[Kun and Scheuring(2009)]{2009-Kun-Biosystems}
	{\'{A}}.~Kun and I.~Scheuring.
	\newblock {Evolution of cooperation on dynamical graphs}.
	\newblock \emph{BioSystems}, 96\penalty0 (1):\penalty0 65--68, 2009.
	\newblock ISSN 03032647.
	\newblock \doi{10.1016/j.biosystems.2008.11.009}.
	
	\bibitem[Fulker et~al.(2021)Fulker, Forber, Smead, and Riedl]{Fulker2021}
	Z.~Fulker, P.~Forber, R.~Smead, and C.~Riedl.
	\newblock Spite is contagious in dynamic networks.
	\newblock \emph{Nature Communications}, 12\penalty0 (1), 2021.
	\newblock \doi{10.1038/s41467-020-20436-1}.
	
	\bibitem[Taylor et~al.(2007)Taylor, Day, and Wild]{taylor:Nature:2007}
	P.~D. Taylor, T.~Day, and G.~Wild.
	\newblock Evolution of cooperation in a finite homogeneous graph.
	\newblock \emph{Nature}, 447\penalty0 (7143):\penalty0 469--472, 2007.
	\newblock \doi{10.1038/nature05784}.
	
	\bibitem[Allen and McAvoy(2019)]{Allen2019}
	B.~Allen and A.~McAvoy.
	\newblock {A mathematical formalism for natural selection with arbitrary
		spatial and genetic structure}.
	\newblock \emph{Journal of Mathematical Biology}, 78\penalty0 (4):\penalty0
	1147--1210, 2019.
	\newblock ISSN 14321416.
	\newblock \doi{10.1007/s00285-018-1305-z}.
	
	\bibitem[Allen et~al.(2017)Allen, Lippner, Chen, Fotouhi, Momeni, Yau, and
	Nowak]{2017-Allen-p227-230}
	B.~Allen, G.~Lippner, Y.-T. Chen, B.~Fotouhi, N.~Momeni, S.-T. Yau, and M.~A.
	Nowak.
	\newblock {Evolutionary dynamics on any population structure}.
	\newblock \emph{Nature}, 544\penalty0 (7649):\penalty0 227--230, 2017.
	\newblock ISSN 0028-0836.
	
	\bibitem[Pacheco et~al.(2006)Pacheco, Traulsen, and
	Nowak]{2006-Pacheco-p258103}
	J.~M. Pacheco, A.~Traulsen, and M.~A. Nowak.
	\newblock {Coevolution of strategy and structure in complex networks with
		dynamical linking}.
	\newblock \emph{Physical Review Letters}, 97\penalty0 (25):\penalty0 258103,
	2006.
	\newblock ISSN 00319007.
	\newblock \doi{10.1103/PhysRevLett.97.258103}.
	
	\bibitem[Santos et~al.(2006)Santos, Pacheco, and
	Lenaerts]{2006-Santos-p1284-1291}
	F.~C. Santos, J.~M. Pacheco, and T.~Lenaerts.
	\newblock {Cooperation prevails when individuals adjust their social ties}.
	\newblock \emph{PLoS Computational Biology}, 2\penalty0 (10):\penalty0
	1284--1291, oct 2006.
	\newblock ISSN 1553734X.
	\newblock \doi{10.1371/journal.pcbi.0020140}.
	
	\bibitem[Pacheco et~al.(2008)Pacheco, Traulsen, Ohtsuki, and
	Nowak]{2008-pacheco-jtb}
	J.~M. Pacheco, A.~Traulsen, H.~Ohtsuki, and M.~A. Nowak.
	\newblock {Repeated games and direct reciprocity under active linking}.
	\newblock \emph{Journal of Theoretical Biology}, pages 723--731, 2008.
	\newblock ISSN 00225193.
	\newblock \doi{10.1016/j.jtbi.2007.10.040}.
	
	\bibitem[{Van Segbroeck} et~al.(2009){Van Segbroeck}, Santos, Lenaerts, and
	Pacheco]{2009-VanSegbroeck-p-}
	S.~{Van Segbroeck}, F.~C. Santos, T.~Lenaerts, and J.~M. Pacheco.
	\newblock {Reacting differently to adverse ties promotes cooperation in social
		networks}.
	\newblock \emph{Physical Review Letters}, 102\penalty0 (5):\penalty0 058105,
	feb 2009.
	\newblock ISSN 00319007.
	\newblock \doi{10.1103/PhysRevLett.102.058105}.
	
	\bibitem[Wu et~al.(2010)Wu, Zhou, Fu, Luo, Wang, and
	Traulsen]{2010-Wu-p11187-11187}
	B.~Wu, D.~Zhou, F.~Fu, Q.~Luo, L.~Wang, and A.~Traulsen.
	\newblock {Evolution of cooperation on stochastic dynamical networks}.
	\newblock \emph{PLoS ONE}, 5\penalty0 (6):\penalty0 e11187, 2010.
	\newblock \doi{10.1371/journal.pone.0011187}.
	
	\bibitem[Fehl et~al.(2011)Fehl, van~der Post, and Semmann]{2011-Fehl-p546-551}
	K.~Fehl, D.~J. van~der Post, and D.~Semmann.
	\newblock {Co-evolution of behaviour and social network structure promotes
		human cooperation}.
	\newblock \emph{Ecology Letters}, 14\penalty0 (6):\penalty0 546--551, 2011.
	\newblock ISSN 14610248.
	\newblock \doi{10.1111/j.1461-0248.2011.01615.x}.
	
	\bibitem[Rand et~al.(2011)Rand, Arbesman, and
	Christakis]{2011-Rand-p19193-19198}
	D.~G. Rand, S.~Arbesman, and N.~A. Christakis.
	\newblock {Dynamic social networks promote cooperation in experiments with
		humans}.
	\newblock \emph{Proceedings of the National academy of Sciences of the United
		States of America}, 108\penalty0 (48):\penalty0 19193--19198, 2011.
	\newblock \doi{10.1073/pnas.1108243108}.
	
	\bibitem[Bravo et~al.(2012)Bravo, Squazzoni, and Boero]{Bravo2012}
	G.~Bravo, F.~Squazzoni, and R.~Boero.
	\newblock {Trust and partner selection in social networks: An experimentally
		grounded model}.
	\newblock \emph{Social Networks}, 34:\penalty0 481--492, 2012.
	\newblock ISSN 03788733.
	\newblock \doi{10.1016/j.socnet.2012.03.001}.
	
	\bibitem[Wang et~al.(2012)Wang, Suri, and Watts]{2012-Wang-p14363-14368}
	J.~Wang, S.~Suri, and D.~J. Watts.
	\newblock {Cooperation and assortativity with dynamic partner updating}.
	\newblock \emph{Proceedings of the National Academy of Sciences of the United
		States of America}, 109\penalty0 (36):\penalty0 14363--14368, 2012.
	\newblock ISSN 00278424.
	\newblock \doi{10.1073/pnas.1120867109}.
	
	\bibitem[Bednarik et~al.(2014)Bednarik, Fehl, and Semmann]{Bednarik2014}
	P.~Bednarik, K.~Fehl, and D.~Semmann.
	\newblock {Costs for switching partners reduce network dynamics but not
		cooperative behaviour}.
	\newblock \emph{Proceedings of the Royal Society B: Biological Sciences},
	281:\penalty0 20141661, 2014.
	\newblock ISSN 14712954.
	\newblock \doi{10.1098/rspb.2014.1661}.
	
	\bibitem[Cardillo et~al.(2014)Cardillo, Petri, Nicosia, Sinatra,
	G{\'{o}}mez-Garde{\~{n}}es, and Latora]{2014-Cardillo-p52825-52825}
	A.~Cardillo, G.~Petri, V.~Nicosia, R.~Sinatra, J.~G{\'{o}}mez-Garde{\~{n}}es,
	and V.~Latora.
	\newblock {Evolutionary dynamics of time-resolved social interactions}.
	\newblock \emph{Physical Review E}, 90\penalty0 (5):\penalty0 52825, nov 2014.
	\newblock ISSN 15502376.
	\newblock \doi{10.1103/PhysRevE.90.052825}.
	
	\bibitem[Harrell et~al.(2018)Harrell, Melamed, and Simpson]{Harrell2018}
	A.~Harrell, D.~Melamed, and B.~Simpson.
	\newblock {The strength of dynamic ties: The ability to alter some ties
		promotes cooperation in those that cannot be altered}.
	\newblock \emph{Science Advances}, 4:\penalty0 eaau9109, 2018.
	\newblock ISSN 23752548.
	\newblock \doi{10.1126/sciadv.aau9109}.
	
	\bibitem[Ak{\c{c}}ay(2018)]{2018-Akcay-p2692-2692}
	E.~Ak{\c{c}}ay.
	\newblock {Collapse and rescue of cooperation in evolving dynamic networks}.
	\newblock \emph{Nature Communications}, 9\penalty0 (1):\penalty0 2692, jul
	2018.
	\newblock ISSN 20411723.
	\newblock \doi{10.1038/s41467-018-05130-7}.
	
	\bibitem[Wong and Candolin(2014)]{wong:BE:2014}
	B.~B.~M. Wong and U.~Candolin.
	\newblock Behavioral responses to changing environments.
	\newblock \emph{Behavioral Ecology}, 26\penalty0 (3):\penalty0 665--673, 2014.
	\newblock \doi{10.1093/beheco/aru183}.
	
	\bibitem[Tilman et~al.(2020)Tilman, Plotkin, and Ak{\c{c}}ay]{tilman:NC:2020}
	A.~R. Tilman, J.~B. Plotkin, and E.~Ak{\c{c}}ay.
	\newblock Evolutionary games with environmental feedbacks.
	\newblock \emph{Nature Communications}, 11\penalty0 (1):\penalty0 915, 2020.
	\newblock \doi{10.1038/s41467-020-14531-6}.
	
	\bibitem[Nowak et~al.(2004)Nowak, Sasaki, Taylor, and
	Fudenberg]{2004-Nowak-p646-650}
	M.~A. Nowak, A.~Sasaki, C.~Taylor, and D.~Fudenberg.
	\newblock {Emergence of cooperation and evolutionary stability in finite
		populations}.
	\newblock \emph{Nature}, 428\penalty0 (6983):\penalty0 646--650, 2004.
	\newblock ISSN 00280836.
	\newblock \doi{10.1038/nature02414}.
	
	\bibitem[McAvoy and Allen(2021)]{McAvoy2021}
	A.~McAvoy and B.~Allen.
	\newblock {Fixation probabilities in evolutionary dynamics under weak
		selection}.
	\newblock \emph{Journal of Mathematical Biology}, 82\penalty0 (3):\penalty0 14,
	2021.
	\newblock ISSN 1432-1416.
	\newblock \doi{10.1007/s00285-021-01568-4}.
	
	\bibitem[Fisher(1930)]{fisher:OUP:1930}
	R.~A. Fisher.
	\newblock \emph{{The Genetical Theory of Natural Selection}}.
	\newblock Clarendon Press, 1930.
	\newblock \doi{10.5962/bhl.title.27468}.
	
	\bibitem[Taylor(1990)]{taylor:AN:1990}
	P.~D. Taylor.
	\newblock {Allele-frequency change in a class-structured population}.
	\newblock \emph{The American Naturalist}, 135\penalty0 (1):\penalty0 95--106,
	1990.
	\newblock \doi{10.1086/285034}.
	
	\bibitem[McAvoy and Wakeley(2022)]{mcavoy:PNAS:2022}
	A.~McAvoy and J.~Wakeley.
	\newblock Evaluating the structure-coefficient theorem of evolutionary game
	theory.
	\newblock \emph{Proceedings of the National Academy of Sciences of the United
		States of America}, 119\penalty0 (28):\penalty0 e2119656119, 2022.
	\newblock \doi{10.1073/pnas.2119656119}.
	
	\bibitem[Erd{\"{o}}s and R{\'{e}}nyi(1960)]{1960-Erdoes-p17-61}
	P.~Erd{\"{o}}s and A~R{\'{e}}nyi.
	\newblock {On the evolution of random graphs}.
	\newblock \emph{Publication of the Mathematical Institute of the Hungarian
		Academy of Sciences}, 5:\penalty0 17--61, 1960.
	
	\bibitem[Goh et~al.(2001)Goh, Kahng, and Kim]{2001-Goh-p278701-278701}
	K.~I. Goh, B.~Kahng, and D.~Kim.
	\newblock {Universal behavior of load distribution in scale-free networks}.
	\newblock \emph{Physical Review Letters}, 87\penalty0 (27):\penalty0 278701,
	2001.
	\newblock ISSN 10797114.
	\newblock \doi{10.1103/PhysRevLett.87.278701}.
	
	\bibitem[Gernat et~al.(2018)Gernat, Rao, Middendorf, Dankowicz, Goldenfeld, and
	Robinson]{gernat:PNAS:2018}
	T.~Gernat, V.~D. Rao, M.~Middendorf, H.~Dankowicz, N.~Goldenfeld, and G.~E.
	Robinson.
	\newblock Automated monitoring of behavior reveals bursty interaction patterns
	and rapid spreading dynamics in honeybee social networks.
	\newblock \emph{Proceedings of the National Academy of Sciences of the United
		States of America}, 115\penalty0 (7):\penalty0 1433--1438, 2018.
	\newblock \doi{10.1073/pnas.1713568115}.
	
	\bibitem[Watts and Strogatz(1998)]{1998-Watts-p440-442}
	D.~J. Watts and S.~H. Strogatz.
	\newblock {Collective dynamics of `small-world' networks}.
	\newblock \emph{Nature}, 393\penalty0 (6684):\penalty0 440--442, 1998.
	
	\bibitem[Barab{\'{a}}si and Albert(1999)]{1999-Barabasi-p509-512}
	A.-L. Barab{\'{a}}si and R.~Albert.
	\newblock {Emergence of scaling in random networks}.
	\newblock \emph{Science}, 286\penalty0 (5439):\penalty0 509--512, oct 1999.
	
	\bibitem[Holme and Kim(2002)]{Holme2002}
	P.~Holme and B.~J. Kim.
	\newblock {Growing scale-free networks with tunable clustering}.
	\newblock \emph{Physical Review E}, 65\penalty0 (2):\penalty0 2--5, 2002.
	\newblock ISSN 1063651X.
	\newblock \doi{10.1103/PhysRevE.65.026107}.
	
	\bibitem[Lieberman et~al.(2005)Lieberman, Hauert, and
	Nowak]{2005-Lieberman-p312-316}
	E.~Lieberman, C.~Hauert, and M.~A. Nowak.
	\newblock {Evolutionary dynamics on graphs}.
	\newblock \emph{Nature}, 433\penalty0 (7023):\penalty0 312--316, jan 2005.
	\newblock ISSN 0028-0836.
	
	\bibitem[Pan and Saram{\"{a}}ki(2011)]{Pan-2011-pre}
	R.~K. Pan and J.~Saram{\"{a}}ki.
	\newblock {Path lengths, correlations, and centrality in temporal networks}.
	\newblock \emph{Physical Review E}, 84:\penalty0 016105, 2011.
	\newblock ISSN 15393755.
	\newblock \doi{10.1103/PhysRevE.84.016105}.
	
	\bibitem[Holme(2015)]{2015-Holme-p234-234}
	P.~Holme.
	\newblock {Modern temporal network theory: a colloquium}.
	\newblock \emph{European Physical Journal B}, 88\penalty0 (9):\penalty0 234,
	2015.
	\newblock ISSN 14346036.
	\newblock \doi{10.1140/epjb/e2015-60657-4}.
	
	\bibitem[Valdano et~al.(2015)Valdano, Ferreri, Poletto, and
	Colizza]{2015-Valdano-p21005-21005}
	E.~Valdano, L.~Ferreri, C.~Poletto, and V.~Colizza.
	\newblock {Analytical computation of the epidemic threshold on temporal
		networks}.
	\newblock \emph{Physical Review X}, 5\penalty0 (2):\penalty0 21005, apr 2015.
	\newblock \doi{10.1103/PhysRevX.5.021005}.
	
	\bibitem[Povinelli et~al.(1992)Povinelli, Nelson, and
	Boysen]{povinelli:AB:1992}
	D.~J. Povinelli, K.~E. Nelson, and S.~T. Boysen.
	\newblock Comprehension of role reversal in chimpanzees: evidence of empathy?
	\newblock \emph{Animal Behaviour}, 43\penalty0 (4):\penalty0 633--640, 1992.
	\newblock \doi{10.1016/s0003-3472(05)81022-x}.
	
	\bibitem[Su et~al.(2022)Su, Allen, and Plotkin]{su-2022-pnas}
	Q.~Su, B.~Allen, and J.~B. Plotkin.
	\newblock {Evolution of cooperation with asymmetric social interactions}.
	\newblock \emph{Proceedings of the National Academy of Sciences of the United
		States of America}, 119\penalty0 (1):\penalty0 e2113468118, jan 2022.
	\newblock \doi{10.1073/pnas.2113468118}.
	
	\bibitem[Peysakhovich et~al.(2014)Peysakhovich, Nowak, and
	Rand]{peysakhovich-nc-2014}
	A.~Peysakhovich, M.~A. Nowak, and D.~G. Rand.
	\newblock {Humans display a 'cooperative phenotype' that is domain general and
		temporally stable}.
	\newblock \emph{Nature Communications}, 5:\penalty0 4939, 2014.
	\newblock ISSN 20411723.
	\newblock \doi{10.1038/ncomms5939}.
	
	\bibitem[Harmer and Abbott(1999)]{harmer:Nature:1999}
	G.~Harmer and D.~Abbott.
	\newblock {Losing strategies can win by Parrondo's paradox}.
	\newblock \emph{Nature}, 402:\penalty0 864, 1999.
	\newblock \doi{10.1038/47220}.
	
	\bibitem[Girvan and Newman(2002)]{girvan-2002-pnas}
	M.~Girvan and M.~E.J. Newman.
	\newblock {Community structure in social and biological networks}.
	\newblock \emph{Proceedings of the National Academy of Sciences of the United
		States of America}, 99\penalty0 (12):\penalty0 7821--7826, 2002.
	\newblock ISSN 00278424.
	\newblock \doi{10.1073/pnas.122653799}.
	
	\bibitem[Newman(2006)]{newman-2006-pnas}
	M.~E.J. Newman.
	\newblock {Modularity and community structure in networks}.
	\newblock \emph{Proceedings of the National Academy of Sciences of the United
		States of America}, 103\penalty0 (23):\penalty0 8577--8582, 2006.
	\newblock ISSN 00278424.
	\newblock \doi{10.1073/pnas.0601602103}.
	
	\bibitem[Blonder et~al.(2012)Blonder, Wey, Dornhaus, James, and
	Sih]{Blonder-2012-mee}
	B.~Blonder, T.~W. Wey, A.~Dornhaus, R.~James, and A.~Sih.
	\newblock {Temporal dynamics and network analysis}.
	\newblock \emph{Methods in Ecology and Evolution}, 3:\penalty0 958--972, 2012.
	\newblock ISSN 2041210X.
	\newblock \doi{10.1111/j.2041-210X.2012.00236.x}.
	
	\bibitem[Fudenberg and Imhof(2006)]{fudenberg:JET:2006}
	D.~Fudenberg and L.~A. Imhof.
	\newblock Imitation processes with small mutations.
	\newblock \emph{Journal of Economic Theory}, 131\penalty0 (1):\penalty0
	251--262, 2006.
	\newblock \doi{10.1016/j.jet.2005.04.006}.
	
	\bibitem[Tarnita et~al.(2009{\natexlab{a}})Tarnita, Ohtsuki, Antal, Fu, and
	Nowak]{tarnita:JTB:2009}
	C.~E. Tarnita, H.~Ohtsuki, T.~Antal, F.~Fu, and M.~A. Nowak.
	\newblock Strategy selection in structured populations.
	\newblock \emph{Journal of Theoretical Biology}, 259\penalty0 (3):\penalty0
	570--581, 2009{\natexlab{a}}.
	\newblock \doi{10.1016/j.jtbi.2009.03.035}.
	
	\bibitem[Tarnita et~al.(2009{\natexlab{b}})Tarnita, Antal, Ohtsuki, and
	Nowak]{tarnita:PNAS:2009}
	C.~E. Tarnita, T.~Antal, H.~Ohtsuki, and M.~A. Nowak.
	\newblock Evolutionary dynamics in set structured populations.
	\newblock \emph{Proceedings of the National Academy of Sciences of the United
		States of America}, 106\penalty0 (21):\penalty0 8601--8604,
	2009{\natexlab{b}}.
	\newblock \doi{10.1073/pnas.0903019106}.
	
	\bibitem[Tarnita et~al.(2011)Tarnita, Wage, and Nowak]{tarnita:PNAS:2011}
	C.~E. Tarnita, N.~Wage, and M.~A. Nowak.
	\newblock Multiple strategies in structured populations.
	\newblock \emph{Proceedings of the National Academy of Sciences of the United
		States of America}, 108\penalty0 (6):\penalty0 2334--2337, 2011.
	\newblock \doi{10.1073/pnas.1016008108}.
	
	\bibitem[D{\'{e}}barre(2019)]{debarre:DGA:2019}
	F.~D{\'{e}}barre.
	\newblock {Imperfect Strategy Transmission Can Reverse the Role of Population
		Viscosity on the Evolution of Altruism}.
	\newblock \emph{Dynamic Games and Applications}, 10\penalty0 (3):\penalty0
	732--763, 2019.
	\newblock \doi{10.1007/s13235-019-00326-y}.
	
	\bibitem[McAvoy et~al.(2020)McAvoy, Allen, and Nowak]{mcavoy-2020-nhb}
	A.~McAvoy, B.~Allen, and M.~A. Nowak.
	\newblock {Social goods dilemmas in heterogeneous societies}.
	\newblock \emph{Nature Human Behaviour}, 4\penalty0 (8):\penalty0 819--831, aug
	2020.
	\newblock ISSN 23973374.
	\newblock \doi{10.1038/s41562-020-0881-2}.
	
	\bibitem[Allen and McAvoy(2022)]{allen:preprint:2022}
	B.~Allen and A.~McAvoy.
	\newblock The coalescent with arbitrary spatial and genetic structure.
	\newblock arXiv preprint arXiv:2207.02880, 2022.
	
\end{thebibliography}
\end{document}